\renewcommand\normalsize{%
  \fontsize{10.4pt}{12.8pt}\selectfont}
\title{\textbf{Optimal Execution under Liquidity Uncertainty}}
\author{
Etienne CHEVALIER\textsuperscript{*}
 $\,$
 Yadh HAFSI\textsuperscript{*}
 $\,$ 
 Vathana LY VATH\textsuperscript{\textdagger}
 $\,$ 
 Sergio PULIDO\textsuperscript{\textdagger\textrm{$\ddagger$}}
}
\newcommand{\midbar}[1]{\overline{\hspace{-0.12em}#1\hspace{-0.12em}}}
\newcommand{\nunder}[2][5]{\mathrlap{\mkern\the\numexpr#1/2mu\relax\underline{\phantom{\mathrm{#2}\mkern-#1mu}}}#2}
\newcommand{\cA}{\mathcal{A}}
\newcommand{\cC}{\mathcal{C}}
\newcommand{\cD}{\mathcal{D}}
\newcommand{\cE}{\mathcal{E}}
\newcommand{\cF}{\mathcal{F}}
\newcommand{\cI}{\mathcal{I}}
\newcommand{\cJ}{\mathcal{J}}
\newcommand{\cK}{\mathcal{K}}
\newcommand{\cL}{\mathcal{L}}
\newcommand{\cN}{\mathcal{N}}
\newcommand{\cO}{\mathcal{O}}
\newcommand{\cS}{\mathcal{S}}
\newcommand{\EE}{\mathbb{E}}
\newcommand{\NN}{\mathbb{N}}
\newcommand{\PP}{\mathbb{P}}
\newcommand{\RR}{\mathbb{R}}
\renewcommand{\epsilon}{\varepsilon}
\renewcommand{\P}{\PP}
\newcommand{\E}{\EE}
\def \os \hat{o}
\newcommand{\R}{\mathbb R}
\newcommand{\prob}{\mathbb{P}}
\newcommand{\expec}{\mathbb{E}}
\newcommand{\G}{\mathcal{G}}
\newcommand{\ud}{\mathrm d}
\def \I{\mathbb{I}}
\def \R{\mathbb{R}}
\def \E{\mathbb{E}}
\def \P{\mathbb{P}}
\def \Ac{{\cal A}}
\def \Lc{{\cal L}}
\def \eps{\varepsilon}
\def\ref#1{{\rm(\ref{#1})}}
\def\1{\mbox{1\hspace{-0.25em}l}}
\def \Ac{{\cal A}}
\def \Lc{{\cal L}}
\begin{document}
\maketitle
\pagenumbering{arabic}
\newcounter{axiom}
\newtheorem{axiome}[axiom]{Axiome}
\newtheorem{defi}{Definition}[section]
\newtheorem{theo}{Theorem}[section]
\newtheorem{prop}{Proposition}[section]
\newtheorem{corro}{Corollary}[section]
\newtheorem{rque}{Remark}[section]
\newtheorem{example}{Example}[section]
\newtheorem{nota}{Notation}[section]
\newtheorem{demo}{Demonstration}
\newtheorem{propt}{Propriété}
\newtheorem{lemma}{Lemma}
\newtheorem{assump}{Assumptions}[section]
\newcounter{cases}
\newcounter{subcases}[cases]
\newenvironment{mycase}
{
    \setcounter{cases}{0}
    \setcounter{subcases}{0}
    \newcommand{\case}
    {
        \par\indent\stepcounter{cases}\textbf{Case \thecases.}
    }
    \newcommand{\subcase}
    {
        \par\indent\stepcounter{subcases}\textit{Subcase (\thesubcases):}
    }
}
{
    \par
}
\renewcommand*\thecases{\arabic{cases}}
\renewcommand*\thesubcases{\roman{subcases}}

\begingroup
\renewcommand{\thefootnote}{\fnsymbol{footnote}}
\footnotetext[1]{Laboratoire de Math\'ematiques et Mod\'elisation d'Evry, Universit\'e Paris-Saclay, UEVE, UMR 8071 CNRS, France; email: etienne.chevalier@univ-evry.fr, yadh.hafsi@universite-paris-saclay.fr.}
\footnotetext[2]{Laboratoire de Math\'ematiques et Mod\'elisation d'Evry, Universit\'e Paris-Saclay, ENSIIE, UEVE, UMR 8071 CNRS, France; email: vathana.lyvath@ensiie.fr, sergio.pulidonino@ensiie.fr.}
\footnotetext[3]{The work of Sergio Pulido benefited from the financial support of the research program Chaire Deep learning in finance and Statistics (Fondation de l’École polytechnique, École
polytechnique and Qube R\&T).}
\endgroup

\begin{abstract}
We study an optimal execution strategy for purchasing a large block of shares over a fixed time horizon. The execution problem is subject to a general price impact that gradually dissipates due to market resilience. We allow for general limit order book shapes to characterize instantaneous market impact. To model the resilience dynamics, we introduce a stochastic process that governs the rate at which the deviation between the impacted and unaffected prices decays. This \textit{volume-effect} process reflects fluctuations in market activity that drive the pace of liquidity replenishment. Additionally, we incorporate stochastic liquidity variations through a regime-switching Markov chain to capture abrupt shifts in market conditions. We study this singular control problem, where the trader optimally determines the timing and rate of purchases to minimize execution costs. The associated value function to this optimization problem is shown to satisfy a system of variational Hamilton–Jacobi–Bellman inequalities. Moreover, we establish that it is the unique viscosity solution to this HJB system and study the analytical properties of the free boundary separating the execution and continuation regions. To illustrate our results, we present numerical examples under different limit-order book configurations, highlighting the interplay between price impact, resilience dynamics, and stochastic liquidity regimes in shaping the optimal execution strategy.
\end{abstract}\hspace{10pt}
\\
 \noindent\textbf{Keywords:} Optimal Execution, Singular Control, Viscosity Solutions, Free Boundary Problem, Regime Switching.\\\\
\noindent
\textbf{Mathematical subject classifications: }93E20, 49L25, 91B70.

\section{Introduction}
The problem of optimal execution in financial markets has been extensively studied over the past decades, with early research focusing on deterministic and linear price impact models. Classical approaches assumed that market participants could execute large trades with predictable and stable costs, often modeling price impact as a linear function of trading volume as in \textcite{BERTSIMAS19981} and \textcite{Almgren2012}. These models provided foundational insights into the trade-off between execution cost and market risk, establishing a framework in which traders optimize their execution trajectories over a fixed time horizon. However, empirical studies such as \textcite{BOUCHAUD200957} and \textcite{Zhou_impact}, have shown that the assumption of a simple linear price impact does not accurately reflect real-world market dynamics. Market impact is often nonlinear, transient, and influenced by complex interactions within the limit order book (LOB).

As a result, recent research has moved beyond simplistic assumptions, incorporating more advanced models that better capture this non-linearity. \textcite{OBIZHAEVA20131} introduced a framework where market impact depends explicitly on the LOB shape, showing that execution strategies should adapt to liquidity fluctuations.
\textcite{general_shape} extended this idea by considering a continuous LOB model where price impact depends on both the execution rate and liquidity replenishment. These advancements laid the foundation for further research on execution strategies, notably \textcite{pemy2007,alfonsi2016dynamic, moreau_JMK, ekren2019, muhle_karbe2020}. A significant contribution in this direction was made by \textcite{predoiu2011}, who analyzed optimal execution in a one-sided limit order book with nonlinear price impact and resilience. Their model, which accommodates an arbitrary order book shape, derives the optimal strategy as a sequence of lump-sum trades interspersed with continuous trading at a rate matching order book resiliency. These results formalized the interaction between execution strategies and market recovery, providing a framework for modeling price impact in order-driven markets. However, the assumption of deterministic resilience limits its applicability, as stochastic fluctuations in liquidity have been well documented; see for instance in \textcite{taranto_impact}.

Building on this, \textcite{Becherer2018} studied a two-dimensional singular control problem of finite fuel type over an infinite time horizon, motivated by the optimal liquidation of an asset position in a financial market with multiplicative and transient price impact. This framework incorporates stochastic liquidity, where the \textit{volume effect} process is driven by its own random noise, and they derive an explicit closed-form solution under the assumption of multiplicative price impact. \textcite{Schoneborn_fruth2017} introduced a stochastic order book depth, addressing the limitations of models that assume constant or deterministically varying liquidity. They establish the existence and uniqueness of optimal execution strategies when order book depth follows a general diffusion process. Unlike prior work assuming deterministic liquidity variations, this framework captures stochastic liquidity fluctuations. More recently, \textcite{Ackermann} explored optimal execution in a market where order book depth and resilience evolve stochastically. The authors formulated a discrete-time trading model and derive a recursive equation for the minimal expected execution cost. The authors analyzed conditions under which traders should execute immediately or gradually. \textcite{Horst2019} studied the optimal liquidation of multi-asset portfolios in markets where resilience is stochastic. The authors developed a framework that accounts for both immediate and persistent price impacts, formulating the problem using backward stochastic Riccati differential equations (BSRDEs) to characterize the value function. To handle the singular terminal conditions in these equations, they introduced an approximation method via penalization, solving a sequence of unconstrained problems that enforce full liquidation. Their results provide explicit optimal trading strategies. Finally, \textcite{fouque_jaimungal} examined optimal trading in markets where price impact is stochastic and often exhibits fast mean reversion. Using singular perturbation methods, the authors developed approximations to the resulting optimal control problem and established their accuracy by constructing sub- and super-solutions. The analysis highlights how stochastic trading frictions influence execution strategies. \textcite{muhlekarbe2022optimal} demonstrate that in concave price impact models a tractable linear approximation emerges in the high-frequency limit. The authors show that a stochastic liquidity parameter effectively captures the model’s nonlinearity, yielding a diffusion limit that preserves key features and performs well on limit order book data.

In this work, we introduce a stochastic volume effect governed by a jump diffusion process, allowing the model to capture both continuous liquidity variation and sudden market shocks. The price impact function is allowed to be discontinuous, accommodating abrupt changes in market conditions. Instead of specifying a particular model for the fundamental price, we require only a martingale property, making the framework applicable under general market dynamics. Liquidity regimes are incorporated through a finite-state Markov chain, reflecting structural shifts in market depth. We formulate the resulting optimization as a singular stochastic control problem. Unlike most singular stochastic control problems, which are typically posed in infinite horizon settings (see, e.g., \textcite{shreve1994}, \textcite{bank2001}), our problem is inherently finite-horizon due to the nature of execution tasks. In the execution literature, problems are often formulated either in discrete time or as impulse controls over a finite horizon, highlighting the distinctive structure of our continuous-time formulation. Moreover, our model avoids explicit cost functions often used in classical formulations, as our setting focuses on liquidity-induced frictions rather than inventory or running penalties. To the best of our knowledge, the closest related work is that of \textcite{predoiu2011}, which addresses a similar execution setting but in a static framework, whereas our approach is fully dynamic.

We incorporate different market regimes to model liquidity variations more realistically. Empirical evidence, such as \textcite{chevalier2023uncovering}, shows that market liquidity distribution varies significantly throughout the trading day. Studies like \textcite{pemy2008,BayraktarLudoExecution,COLANERI20201913,SIU201917,Dammann2023,Cartea2023,chevalier2024} have addressed this gap by modeling liquidity as a stochastic process. To capture this, we model liquidity regimes as a finite state Markov chain. In our framework, liquidity regime variations affect only the shape of the limit order book, while the underlying \textit{volume effect} dynamics remain unchanged. This ensures that price evolution is not exogenously driven by regime shifts but instead results from structural changes in supply and demand at different liquidity levels. By allowing the LOB’s form to vary across regimes, we account for regime-dependent price impact, where trades in low-liquidity states induce greater price responses than in high-liquidity conditions. This approach captures transient liquidity shocks, aligns with empirical evidence, and provides a more precise characterization of execution risk.

Finally, we address an optimal execution problem formulated as a singular control problem. Our approach is guided by techniques from \textcite{eliott}, \textcite{Liu18052016}, and \textcite{shiao_tian}, which connect regime-switching models with the first component of the Hamilton-Jacobi-Bellman Quasi-Variational Inequalities (HJBQVI). These works reformulate the execution problem as a stochastic control problem for a Markov jump linear system, deriving the value function and optimal feedback strategy through coupled differential Riccati equations. We also formulate our study as a free boundary problem in the spirit of \textcite{Caffarelli1977}, \textcite{pham1997optimal} and \textcite{FigalliSerra2019}, focusing on obstacle-type formulations and regularity properties. Due to the multi-dimensionality of the state space, classical verification techniques are not applicable, as the value function and optimal control cannot be characterized in closed form. To address this, we adopt a viscosity solution framework and characterize our value function as the unique solution to the HJB system. This allows us to describe the structure of the continuation and intervention regions, and to show the existence of a connected free boundary separating them. Due to the lack of sufficient regularity properties inherent in our setting, we rely on approximation methods to construct the solution. We illustrate our findings with numerical examples across various limit order book configurations. The viscosity solution is approximated using an implicit-explicit finite difference scheme for the PIDE, following \textcite{finite_diff}. The simulations show that higher market activity reduces market impact. In particular, increased volatility and resilience in the volume effect lead to greater liquidity and lower exercise boundaries, while the opposite holds in less active markets. We also examine how different limit order book shapes influence the free boundary.

The article is organized as follows. Section \ref{sec:control_problem} introduces the model describing the price and liquidity dynamics of a single asset, with a LOB featuring stochastic resilience and accounting for the stochastic market liquidity. This section also formulates the singular control problem associated with the execution strategy. Section \ref{Characterization of the Value Function} examines the analytical properties of the value function and establishes its continuity over the domain. In Section \ref{viscosity_section}, we demonstrate that the value function uniquely solves the system of variational inequalities as a viscosity solution and prove the connectedness of the free boundary. We present the numerical scheme to approximate the viscosity solution in Section \ref{numerical_scheme}. Finally, Section \ref{approximation_numerical} presents numerical illustrations, analyzing the influence of state variables and price impact on the value function and the free boundary.
\begin{nota}
    For $x \in \mathbb{R}^k$, where $k$ is determined by the context, $\|x\|$ denotes its Euclidean norm, and $B_r(x)$ represents the open ball centered at $x$ with radius $r > 0$. The scalar product is denoted by $\langle \cdot, \cdot \rangle$, and for a vector $x \in \mathbb{R}^k$, its transpose is denoted by $x^T$. For a set $A \subset \mathbb{R}^k$, $\partial A$ denotes its boundary, while $\mathrm{int}(A)$ or equivalently $\mathring{A}$ denote its interior. The symbol $\otimes$ represents the Kronecker product of two matrices. For a function $\varphi : \mathbb{R}_+ \times \mathbb{R}^d \times \mathbb{R}^k \to \mathbb{R}$, the gradient and Hessian matrix are denoted by $D\varphi$ and $D^2\varphi$, respectively, whenever they are well-defined.
\end{nota}
\section{Optimal Execution Problem} \label{sec:control_problem}
Consider a complete filtered probability space $(\Omega ,\mathbb{F}, \mathcal{F} = \{\mathcal{F}_t\}_{t \geq 0},\mathbb{P})$, where $\{\mathcal{F}_t\}_{t \geq 0}$ is a right-continuous filtration. We assume that the filtration $\{\mathcal{F}_t\}_{t \geq 0}$ supports a standard $\mathbb{P}$-Brownian motion $W$, a continuous Markov chain $I$, and a random Poisson measure $M$ on $\RR_+ \times \RR_+$ with compensator $\lambda_t \nu(\ud z)\mathrm{d}t$, where 
    $\lambda : [0, T] \to [0, \midbar{\lambda}]$, and $\nu$ is a measure on $\RR$. Throughout this paper, unless explicitly stated otherwise, any equality between random variables is understood to hold $\mathbb{P}$-almost surely when it is not explicitly stated.
\subsection{Model Setup}
\label{sec:model}
 We consider $\{I_t\}_{t \geq 0}$  to be a continuous-time Markov chain with values in a finite state space $\I_m := \{1, \ldots, m\} \subset \mathbb{N}$ and with càdlàg sample paths. We denote $\mathcal{F}^I = \{\mathcal{F}_t^I\}_{t \geq 0}$ the natural filtration generated by $I$, augmented by the $\mathbb{P}$-null sets. The process $I$ represents the liquidity state within the limit order book. Its infinitesimal generator $Q(t) = (Q_{ij}(t))_{1 \leq i,j \leq m}$ captures the intensities of transitioning between different liquidity states. The off-diagonal elements $Q_{ij}(t)$ for $i \neq j$ correspond to the continuous and bounded instantaneous rate of transition from state $i$ to state $j$. We assume that $I$ is stable and conservative, i.e., for all $(j,k)\in \I_m^2$ and $t\geq 0$,
\begin{equation}
\label{assump_mc}
Q_{jj}(t) = -\sum_{k \neq j} Q_{jk}(t),~~\text{and}~~ Q_{jj}(t) < +\infty.\end{equation}
\paragraph{Volume Effect.} Let $i\in \I_m$ be a liquidity regime. We define $A := (A_t)_{t\geq 0}$ as the reference price of the assets, which we
assume to be a continuous $(\prob,\mathcal{F})$-martingale. In our model,
we assume that, in the absence of trading, the number of available
shares at time $t$ in the price interval $[A_t, A_t+x[$ is $F_{I_t}(x)$. Here, $F_i$ is a
non-decreasing and left-continuous function associated to an infinite measure $\mu_i: \mathbb{R}_+\rightarrow \mathbb{R}_+$ in the following way:
\begin{equation*}
F_i(x):=\mu_i([0,x[),\quad\forall x\in\RR_+.
\end{equation*}
The shape of the \textit{shadow limit order book} (LOB) dynamically evolves as the liquidity regime changes. This allows the model to adapt to varying market conditions.
\begin{rque}
    This model is general and accommodates various limit order book structures and market impacts. Examples include block, modified block, and discrete order books (see \textcite{predoiu2011}). It also supports other types, such as power-law order books (see \textcite{power_impact} and \textcite{Avellaneda_stoikoiv}).
\end{rque}
 Let $Y$ be an $\mathcal{F}$-adapted nonnegative process representing the \textit{volume effect} process. Economically, $Y$ measures how much liquidity has been temporarily removed from the ask side of the LOB. This includes both the liquidity consumed by the trader and the liquidity taken by other market participants. A high value of $Y$ means  that the book is thinner and that buying becomes more expensive. The process $Y$ evolves according to
\begin{equation}\label{volume_effect_process}
\left\{\begin{array}{ll}
  \mathrm{d}Y_u=-h(Y_{u^-})\mathrm{d}u+\sigma(Y_{u^-})\mathrm{d}W_u + \int_{\R} 
  q(Y_{u^-}, z) M (\ud u, \ud z),\\Y_{t^-}=y,
\end{array} \right.
\end{equation}
with $t\geq 0$, $u\in[t,T]$, $y\geq 0$, $h:\RR_+\rightarrow\RR$, $\sigma:\RR_+\rightarrow\RR$ and $q:\RR_+\times\RR\rightarrow\RR$. We denote $$\check{Y}_{u^-} := Y_{u^-} + \Delta_{M} Y_u,$$
where $\Delta_{M} Y_u$ represents the jump contribution of the measure $M$ to $Y$ at time $u$. 

Each term in \eqref{volume_effect_process} has the following interpretation: The drift term $-h(Y_{u-})$ models the rate at which
the LOB refills after liquidity is consumed. The diffusion term $\sigma(Y_{u-})\,\ud W_u$ captures small fluctuations in the volume displaced from the shadow LOB $F_{I_u}$. These fluctuations are caused by the continuous flow of limit orders and cancellations. A larger value of
$\sigma$ corresponds to a higher level of market activity, in the sense that the
available volume fluctuates more rapidly during periods of intense trading. The jump term $\int_{\R} 
  q(Y_{u^-}, z) M (\ud u, \ud z)$ models abrupt liquidity shocks generated by block trades or large market orders submitted by other participants.
\begin{assump}
\label{no_mutual_cov}
 We assume that the processes $I$ and $M$ do not have mutual covariation. 
 \end{assump}
 \begin{rque}
     Assumption \ref{no_mutual_cov} simplifies the analysis without minimal loss of generality, as any dependence can typically be absorbed into the dynamics of $M$ or handled via a change of measure.
 \end{rque}
  The ask price is no longer the reference price but is
given by $P := A + D$. Here, the process $D$ defines the price deviation process, such that $$D_u :=\psi_{I_u}(Y_u), \quad \forall u\geq 0$$ In plain terms, $D$ measures how far the impacted ask price is from the reference/equilibrium price $A$. A larger value of $Y$ leads to a larger price deviation, since more liquidity has been taken out of the book. The function $\psi_i$ is left-continuous and verifies \begin{equation} 
\label{def_psi}
 \psi_i(y) := \sup \{ a \geq 0: F_i(a) < y \},
\quad\forall y\geq 0, \end{equation}with $\psi_i(0):=0$. It follows from the monotonicity and the
left-continuity of $F_i$ that $\psi_i$ is non-decreasing on $\RR_+$ and that $F_i(\psi_i(y))=y$, for all $y\in \RR_+$.
\begin{figure}[H]
    \centering
    \begin{subfigure}[b]{0.44\textwidth}
        \centering
        \includegraphics[width=\textwidth]{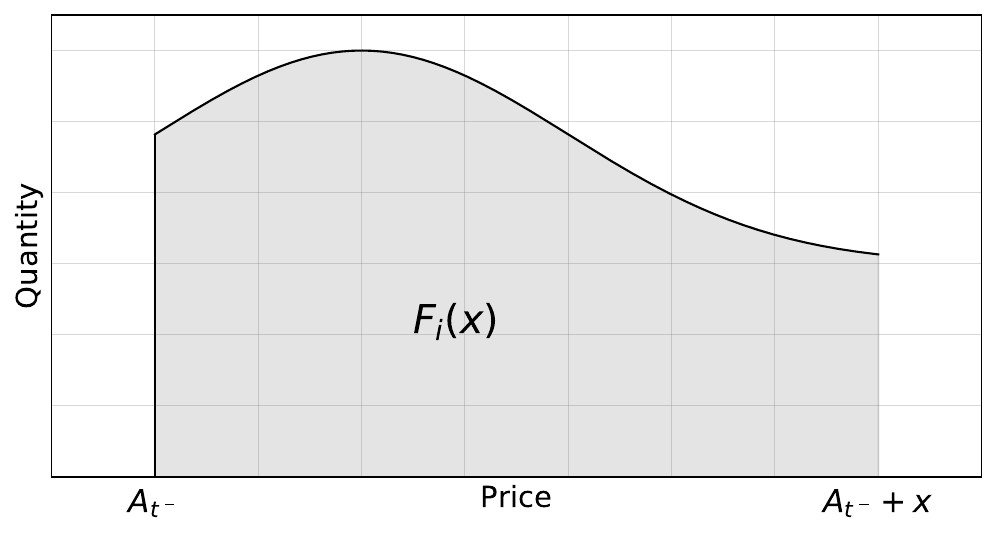}
    \end{subfigure}
    $\quad$
    \begin{subfigure}[b]{0.44\textwidth}
        \centering
        \includegraphics[width=\textwidth]{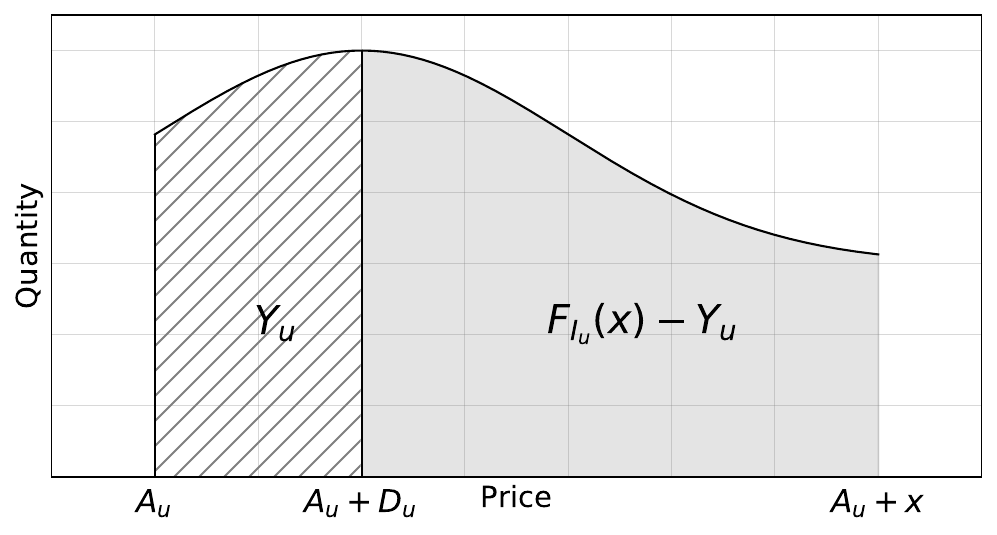}
    \end{subfigure}
       \caption{Representation of the limit order book at times $t^-$ and $u$. The left panel illustrates the initial state with $Y_{t^-} = y = 0$, while the right panel shows the LOB after transactions occurring between $t^-$ and $u$. The shaded area $Y_u$ indicates the volume of consumed shares.}
       \label{fig:lob}
\end{figure}
  The volume consumption affecting the LOB also accounts for the influence of other market participants, modeled through the Brownian motion $W$ and the jump term $M$, with compensator $\lambda_u \nu(\ud z)\mathrm{d}u$. The jumps here capture the impact of block trades and large orders placed by external market participants.
  \begin{rque}
The framework naturally extends to a multi-asset setting (see, e.g., \textcite{pemy2023}). 
For each asset $k \in \{1,\dots,d\}$, let $Y^k$ denote the corresponding volume effect. 
Cross-impact effects, as discussed in \textcite{hey2025}, might be incorporated by allowing 
the deviation process of asset $k$ to depend on the liquidity states of all assets, for instance through 
$D^k_u = \Psi^k_{I_u}(Y^1_u,\dots,Y^d_u)$,
where $\Psi^k$ captures both self- and cross-liquidity interactions. A formulation of the corresponding multidimensional model is provided in Appendix \ref{sec:multi_asset_extension}.
\end{rque}
We list the following assumptions to support our study of the control problem later.
\begin{assump} Assume that:
\begin{enumerate}[label=(A\arabic*), ref=(A\arabic*)]
\item\label{carnetinfini} There exists $b>0$, $\beta>0$ and $a> 0$ such that
\begin{equation*}
F_i(x)\geq b x^\beta,\quad\forall (x,i)\in [a,+\infty[\times\I_m.
\end{equation*}
\item\label{q_moments_monotonicity} The measure $\nu(\ud z)$ satisfies $$\int_{\mathbb{R}} (1 + z^2)\nu(\ud z) < +\infty.$$
\item \label{linear_growth}There exists a constant $C > 0$ such that, for all $y \geq 0$,
$$|h(y)| + |\sigma(y)| + \int_{\mathbb{R}} |q(y, z)| \nu(\ud z) \leq C(1 + |y|).$$
\item\label{lipschitz} There exists a constant $L > 0$ such that, for all $y, y' \geq 0$,
$$|h(y) - h(y')| + |\sigma(y) - \sigma(y')| + \int_{\mathbb{R}} |q(y, z) - q(y', z)| \nu(\ud z) \leq L |y - y'|.$$
\end{enumerate}
\end{assump} 
Assumption \ref{carnetinfini} prevents the placement of buy orders with finite size at infinitely high prices. This condition imposes a power–law lower bound on the cumulative LOB shape
$F_i$. This condition is well supported empirically: numerous studies report that
market impact grows as a concave power law of traded volume, see \textcite{BOUCHAUD200957} and \textcite{gabaix2016}. It provides an upper bound for the resilience function $\psi_i$, which is crucial for establishing the joint continuity of the value function introduced in the following sections. It also plays a key role in proving the Comparison Principle (see Theorem \ref{comparaison}) for the viscosity solutions of the value function. Assumptions \ref{q_moments_monotonicity}, \ref{linear_growth}, and \ref{lipschitz} guarantee the existence and uniqueness of a strong solution to the SDE \eqref{volume_effect_process} via Proposition \ref{existence_uniqueness_SDE}, as well as a comparison property for the \textit{volume effect} process through Lemma \ref{comparison_prop_Y}. These assumptions hold for the rest of this study.
\subsection{Problem Formulation}
\label{problem_formulation}
Our goal is to address an optimal purchase problem for a single asset over a finite time horizon $T$. As this study is conducted in a continuous trading context rather than high-frequency trading, we consider singular controls as admissible strategies, where jumps correspond to block trades.

A financial agent wants to buy $\midbar{X}$ shares of an illiquid asset over the time interval $[0,T]$. The agent's holdings $X$ form an $\mathcal{F}$-progressively measurable process.
\begin{defi}[Admissible strategies] \label{admissible_strategies}Define $\Ac_t(x)$ as the set of admissible controls for an agent
with a position $x\in \RR_+$ at time $t\in[0,T]$. An admissible purchase strategy consists of a non-decreasing $\cF$-adapted right-continuous process $X = (X_u)_{t\leq u\leq T}$ such that $X_{t^-}=x$ and $X_T=\midbar{X}$. We denote by $\Delta X_u=X_u-X_{u^-}$ the jump at time
$u$. We also represent $X^c$ as the continuous component of $X$.
\end{defi}
 We define the controlled dynamics of the \textit{volume effect} process $Y^{t,y,X}$ as
\begin{equation}\label{controlled_volume_effect_process}
\left\{\begin{array}{ll}
  \mathrm{d}Y^{t,y,X}_u=\mathrm{d}X_u-h(Y^{t,y,X}_{u^-})\mathrm{d}u+\sigma(Y^{t,y,X}_{u^-})\mathrm{d}W_u + \int_{\R} 
  q(Y^{t,y,X}_{u^{-}}, z) M (\mathrm{d}u, \mathrm{d}z),\\Y^{t,y,X}_{t^-}=y, 
\end{array} \right.
\end{equation}
with $u\in[t,T]$ and $y\geq 0$. Using classical theory (see \textcite{protter2005stochastic}), we can directly conclude that the given system of stochastic differential equations has a unique strong solution. 
\begin{prop}
\label{existence_uniqueness_SDE}
For any $\cF_t$-measurable random variable $\xi$ valued in $\RR_+$ such that $\EE(|\xi|^p)<+\infty$, for some $p>1$, the SDE \eqref{controlled_volume_effect_process} admits, for all $t\in [0,T]$, a unique strong solution $Y^{t,\xi,X}$, with $Y_{t}^{t,\xi,X}=\xi$. Moreover, 
\begin{equation*}
   \E\big[\underset{0\leq u\leq T}{\sup}|Y^{t,\xi,X}_u\mid^p\big]\leq C_T\big(1+\EE(|\xi|^p)\big),
\end{equation*}
where the constant $C_T\in \RR_+$ only depends on $T$, $\midbar X$ and the Lipschitz coefficients of $h$, $\sigma$ and $q$.
\end{prop}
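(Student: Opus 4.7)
The plan is to recast \eqref{controlled_volume_effect_process} as a Lipschitz SDE driven by the vector semimartingale $(s, W_s, \midbar M, X)$ and to invoke the classical strong existence and uniqueness theory for jump SDEs, before deriving the moment bound through an It\^o–Gronwall argument. First, I would rewrite the equation in integral form,
\begin{equation*}
Y_u = \xi + (X_u - X_{t^-}) - \int_t^u h(Y_{s^-})\,\mathrm{d}s + \int_t^u \sigma(Y_{s^-})\,\mathrm{d}W_s + \int_t^u\!\!\int_{\R} q(Y_{s^-},z)\,\midbar M(\mathrm{d}s,\mathrm{d}z),
\end{equation*}
and observe that since $X$ is $\cF$-adapted, non-decreasing, right-continuous, and dominated by $\midbar{X}$, the forcing term $X - X_{t^-}$ is a semimartingale of uniformly bounded total variation that enters the equation additively rather than through the coefficients. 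Thus the problem falls within the standard framework of SDEs with Lipschitz coefficients driven by an exogenous semimartingale.

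Second, existence and uniqueness would follow from a Picard iteration in the Banach space $\cS^p$ of $\cadlag$ $\cF$-adapted processes on $[t,T]$ endowed with $\|Z\|_{\cS^p} := \E\big[\sup_{t\leq u\leq T} |Z_u|^p\big]^{1/p}$; see, e.g., Protter (Theorem V.7) or Applebaum (Theorem 6.2.3). Assumption \ref{lipschitz} renders the Picard map a contraction once a suitable exponential time weight is inserted in the norm, while Assumption \ref{linear_growth} combined with $\E[|\xi|^p] < \infty$ and $X_T - X_{t^-} \leq \midbar{X}$ gives a uniform $\cS^p$ bound along the iterates. The common limit defines the unique strong solution $Y^{t,\xi,X}$.

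Third, for the stated moment estimate, I would apply It\^o's formula to $(1 + Y^{t,\xi,X}_u)^p$ on $[t,T]$, take the supremum in $u$, and then pass to expectation. The contribution coming from $\mathrm{d}X$ is controlled through $X_T - X_{t^-} \leq \midbar{X}$; the Brownian and compensated-jump martingale parts are handled via the Burkholder–Davis–Gundy inequality (and Kunita's inequality for the Poisson-type stochastic integral); and the drift, diffusion, and jump-compensator terms all reduce to expressions of the form $C(1 + |Y_{s^-}|^p)$ by virtue of \ref{linear_growth}. Gronwall's lemma then closes the estimate and yields the constant $C_T$ advertised in the statement, depending only on $T$, $\midbar{X}$, and the Lipschitz/growth constants of $h$, $\sigma$, $q$. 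The main technical subtlety lies in the jump integral: \ref{linear_growth} supplies only $L^1(\nu)$ control on $q(y,\cdot)$, whereas BDG for $p \geq 2$ naturally calls for an $L^2(\nu)$ bound. This gap is bridged by splitting $\midbar M$ into its small- and large-jump components, controlling the former through its predictable quadratic variation using \ref{q_moments_monotonicity}, and treating the latter as an $L^1$ sum of finitely many jumps.
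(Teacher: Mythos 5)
Your proposal follows essentially the same route as the paper, which disposes of this proposition in one line by appealing to classical Lipschitz theory for jump SDEs in \textcite{protter2005stochastic}: your integral-form rewriting with the bounded-variation forcing $X$, the Picard iteration in $\cS^p$, and the It\^o--BDG--Gronwall derivation of the moment bound are exactly the standard argument behind that citation. The only point requiring care is the one you already flag yourself: Assumptions \ref{q_moments_monotonicity} and \ref{linear_growth} give only $L^1(\nu)$ control of $q(y,\cdot)$, so the $p$-th moment estimate for the compensated jump integral indeed needs the small/large-jump splitting (or a mild strengthening of the hypotheses on $q$), a subtlety the paper itself leaves implicit.
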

We state a comparison result for the sample paths of $Y$, which will be used in proving the monotonicity of $v$ and later in the proof of continuity.
\begin{lemma}[Comparison property]
    \label{comparison_prop_Y}
     Let $t\in[0,T]$, $x\in[0,\midbar X]$ and $X\in\cA_t(x)$. For any two controlled solutions $Y^{t,Y^1_0,X}$ and $Y^{t,Y^2_0,X}$ of the SDE \eqref{controlled_volume_effect_process} satisfying $Y^1_0\leq Y^2_0$, it holds that $$\PP\big(Y^{t,Y^1_0,X}_u\leq Y^{t,Y^2_0,X}_u,~~\forall u\in[t,T]\big)=1.$$
\end{lemma}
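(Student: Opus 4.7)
The plan is to analyze the difference $Z_u := Y^{t,Y^2_0,X}_u - Y^{t,Y^1_0,X}_u$ and show that it stays non-negative. The key observation is that the singular control $X$ appears additively and identically in both controlled SDEs \eqref{controlled_volume_effect_process}, so it cancels in the subtraction. Hence $Z$ solves a jump-diffusion SDE driven only by $W$ and $\midbar{M}$, with coefficients given by the Lipschitz differences of $h$, $\sigma$ and $q$ (by Assumption \ref{lipschitz}), and initial value $Z_{t^-} = Y^2_0 - Y^1_0 \geq 0$.

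Next, I would apply the Meyer-Itô formula to a smooth convex $C^2$ approximation $\phi_\epsilon$ of $x \mapsto ((x)^-)^2$ evaluated along $Z$, and pass to the limit $\epsilon \downarrow 0$. Taking expectations annihilates the $W$-integral and the compensated $\midbar{M}$-integral, whose integrands are in $L^2$ thanks to Assumptions \ref{q_moments_monotonicity} and \ref{linear_growth} together with Proposition \ref{existence_uniqueness_SDE}. The drift contribution is controlled via the Lipschitz bound \ref{lipschitz}, and the quadratic variation coming from $\sigma$ and from the jump compensator is controlled in the same way. This produces an estimate of the form
$$\E\!\left[\big((Z_u)^-\big)^2\right] \leq K \int_t^u \E\!\left[\big((Z_s)^-\big)^2\right] \mathrm{d}s, \qquad u \in [t,T],$$
and Gronwall's inequality forces $\E\!\left[((Z_u)^-)^2\right] = 0$ for every $u \in [t,T]$. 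The pathwise statement then follows by right-continuity of $Z$ applied along a countable dense subset of $[t,T]$.

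The main obstacle is controlling the jump contribution in the Meyer-Itô expansion. Concretely, when $Z_{s^-} \geq 0$ one needs $Z_s \geq 0$ across the jump, which requires the structural monotonicity condition that $y \mapsto y + q(y,z)$ be non-decreasing in $y$ for $\nu$-a.e.~$z$; this is the classical hypothesis under which comparison for jump SDEs holds (see, e.g., \textcite{protter2005stochastic}), and it is natural in our execution framework since external orders shift the volume effect in a level-preserving manner. Under this hypothesis the discrete jump corrections carry the correct sign, making the Gronwall closure above go through and yielding the comparison.
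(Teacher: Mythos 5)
Your route is genuinely different from the paper's: the paper dispatches Lemma \ref{comparison_prop_Y} with a one-line citation to Theorem $2.3$ of Dawson and Li, while you attempt a self-contained Meyer--It\^o/Gronwall argument on the difference process $Z$. The observation that the singular control cancels is correct, and your ``main obstacle'' is indeed the crux: without the structural condition that $y \mapsto y + q(y,z)$ is non-decreasing for $\nu$-a.e.\ $z$, comparison for jump SDEs can fail, and this condition is in fact part of the hypotheses of the Dawson--Li theorem the paper invokes (so you have surfaced an assumption the paper's proof uses silently, which is valuable). But as a proof of the lemma \emph{as stated}, your argument is conditional on a hypothesis that does not appear among Assumptions \ref{q_moments_monotonicity}--\ref{lipschitz}; you should either state it as an additional standing assumption or argue why it holds in the intended model class, rather than treating it as folklore.

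There is also a quantitative gap in the Gronwall closure itself. Bounding the jump remainder $\phi_\epsilon(Z_{s^-}+\Delta q)-\phi_\epsilon(Z_{s^-})-\phi'_\epsilon(Z_{s^-})\Delta q$ with $\phi_\epsilon \approx ((\cdot)^-)^2$ requires a second-order estimate of the form $\int_{\RR}|q(y,z)-q(y',z)|^2\,\nu(\ud z)\leq C|y-y'|^2$, and the square-integrability you claim for the compensated jump integrand requires $\int_{\RR}|q(y,z)|^2\,\nu(\ud z)<\infty$; neither follows from the paper's assumptions, which only give $L^1(\nu)$-type Lipschitz and growth bounds (Assumptions \ref{linear_growth}, \ref{lipschitz}; Assumption \ref{q_moments_monotonicity} controls $\int(1+z^2)\nu(\ud z)$, not moments of $q$ in $z$). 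Note moreover that even with monotone jumps, on the event $\{Z_{s^-}<0\}$ the quadratic remainder is of size $(\Delta q)^2$, so the $((\cdot)^-)^2$ approach genuinely needs the missing $L^2(\nu)$-Lipschitz bound. The standard repair, and essentially what the cited Dawson--Li proof does, is to work with a first-order (Tanaka/Yamada--Watanabe-type) approximation of $x\mapsto x^-$: monotonicity of $y\mapsto y+q(y,z)$ makes the jump contribution vanish (order is preserved across jumps, and on $\{Z_{s^-}<0\}$ the first-order increments cancel exactly), the diffusion term is handled by the Yamada--Watanabe sequence, and the drift by the Lipschitz bound, closing Gronwall for $\E[Z_u^-]$ under exactly the paper's $L^1(\nu)$ hypotheses. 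As written, your sketch does not go through under the stated assumptions; with either strengthened moment hypotheses or the first-order approximation, it would.
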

\begin{proof}
    Refer to Theorem $2.3$ in \textcite{dawson_li}.
\end{proof}
Note that the \textit{volume effect} process $Y^{t,y,X}$ incorporates the impact of the investor's trading $X$ within its dynamics. Equation \eqref{controlled_volume_effect_process} specifies how the order book is affected by the transactions of the trader. If a large transaction happens at time $t$ with the investor buying $x$ shares, then $Y^{t,y,X}_t= \check{Y}^{t,y,X}_{t} + x
$. The ask price jumps from
$A_{t^-}+\psi_{I_{t^-}}(Y^{t,y,X}_{t^-})$ to $A_t+\psi_{I_{t}}(\check{Y}^{t,y,X}_{t} + x)$ after the transaction. Based on the limit order book structure described in Section \ref{sec:model}, the purchase cost for a large trade of size $y$, in excess of $A_t$, at time $t$ is expressed, for every regime $i\in \I_m$, as 
\begin{equation}
\label{definitionPhi}
\begin{split} \Phi_i(y)&:=\int_0^{\psi_i(y)}\xi
\mathrm{d}F_i(\xi)\\&=\int_0 ^y
\psi_i(\zeta)\, \mathrm{d}\zeta, \: \quad \forall y \geq 0.
    \end{split}
\end{equation}
In other words, the cost of placing a buy order of size $y$ at time $t$ at price $P_t$ is given by
\begin{equation*}
\pi^i_t(y)=\int_0^{y}\big[P_t+\psi_i(\zeta)\big] \ud \zeta=\underbrace{P_t y}_{\text {Cost at the current price }}+\underbrace{\Phi_i(y)}_{\text {Impact cost }} .
\end{equation*}
 We assume that the Markov chain $I$ starts in state $i$ at time $t$. Since $X$ is nonnegative and bounded by $\midbar{X}$, with $X_{t^-}=x$, we express the cost of the purchase strategy $X$ over $[t,T]$, under suitable conditions on the price $A$, as described in \textcite{predoiu2011}. In this case, the cost is equal to
\begin{align*}
& \int_t^T (A_u+ \check{D}_{u^-})\mathrm{d}X^c_u+\sum_{t\leq u\leq T}A_u\Delta
X_u+\Phi_{I_u}(Y^{t,y,X}_u)-\Phi_{I_{u}}(\check{Y}^{t,y,X}_{u^-})\\
&= \int_t^T\psi_{I_{u}}(\check{Y}^{t,y,X}_{u^-})\mathrm{d}X^c_u+A_T \bar X - A_t x - \int_t^T X_u \mathrm{d}A_u+\sum_{t\leq u\leq
T}\Phi_{I_u}(Y^{t,y,X}_u)-\Phi_{I_{u}}(\check{Y}^{t,y,X}_{u^-}),
\end{align*}
where $\check{D}_u
:= \psi_{I_u}(\check{Y}^{t,y,X}_u)$ and $u\in[t,T]$. Consequently, assuming the common condition that the ask price $A$ is a martingale, minimizing slippage is equivalent to minimizing the total cost of purchases. This leads to the following optimization problem, where we aim to minimize the expected cost in excess of $A_t(\midbar X - x)$ over the set of admissible strategies, defined by the value function $v := (v_i)_{i\in\I_m}$, such that
\begin{equation}\label{value_function}
v_i(t,x,y):=\inf_{X\in\mathcal{A}_t(x)}\EE\Big[\int_t^T\psi_{I_u}(\check{Y}^{t,y,X}_{u^-})\mathrm{d}X^c_u+\sum_{t\leq
u\leq T} \Phi_{I_u}(Y^{t,y,X}_u)-\Phi_{I_{u}}(\check{Y}^{t,y,X}_{u^-})\Big],
\end{equation}
with $(t,x,y)\in\cS:=[0,T[\times[0,\midbar{X}[\times
\RR^*_+$ and $i\in \I_m$. We denote $\midbar \cS$ as the closure of $\cS$.\\The boundary conditions are defined by the immediate purchase of the remaining shares as
\begin{equation}\label{boundarycondition}
\begin{split}
  v_i(T,x,y)&=\Phi_i(y+\midbar{X}-x)-\Phi_i(y),\\
  v_i(t,\midbar{X},y)&=0.
  \end{split}
\end{equation}
Note that $v_i$ satisfies the following growth condition (see Proposition \ref{finite_val_fun}) 
\begin{equation} 
\label{growthcondition}
 0 \leq v_i(t,x,y)\leq
\Phi_i(y+\midbar{X}-x)-\Phi_i(y).\end{equation}

\subsection{Examples}
\label{examples}
We begin by presenting examples that inspired our study. For clarity, these illustrations focus on the single-regime case ($m = 1$), although we consider the general setting throughout the paper.

Let $(t,x,y)\in \midbar \cS$, and $X\in\mathcal{A}_t(x)$. Assume that the \textit{volume effect} is deterministic, meaning $\sigma\equiv q\equiv 0$. Further, suppose that process $Y^{t,y,X}$ has finite variation and satisfies $Y^{t,y,X}_{t^-}=y$. 
Under these conditions, the setup corresponds to Model $1$ in \textcite{general_shape} within a discrete-time framework and, more specifically, to the one detailed in \textcite{predoiu2011}. Applying Itô's formula, we obtain
\begin{equation*}
  \Phi(Y^{t,y,X}_T)-\Phi(y)=\int_t^T\psi(Y^{t,y,X}_u)\big(\ud X^c_u-h(Y^{t,y,X}_u)\ud u\big)+\sum_{t\leq
  u\leq T}\Phi(Y^{t,y,X}_u)-\Phi(Y^{t,y,X}_{u^-}).
\end{equation*}
Hence,
\begin{equation*}
v(t,x,y)=\inf_{X\in\mathcal{A}_t(x)}\expec\Big[\Phi(Y^{t,y,X}_T)+\int_t^Tg(h(Y^{t,y,X}_u))\ud u\Big]-\Phi(y),
\end{equation*}
where $$g(y)=y\psi(h^{-1}(y)).$$ If one further assumes that $g$ is a convex function, Jensen's inequality implies that
\begin{equation*}
\int_t^Tg(h(Y^{t,y,X}_u))\ud u\geq(T-t)g\Big(\frac{1}{T-t}\int_t^T h(Y_u^{t,y,X})\,\ud u\Big).
\end{equation*}
Knowing that $Y_T^{t,y,X}=y+\midbar{X}-x-\int_t^T h(Y_u^{t,y,X})\,\ud u$,
we get 
\begin{equation*}
v(t,x,y)\geq
\inf_{X\in\mathcal{A}_t(x)}\expec\Big[G_{t,x,y}(Y_T^{t,y,X})\Big]-\Phi(y),
\end{equation*}
where $G_{t,x,y}(e)=\Phi(e)+(T-t)g\Big(\frac{\midbar{X}-x+y-e}{T-t}\Big)$. We define a type $A$ strategy as a strategy that has an initial jump at time $t$, satisfying $\ud X_u=\ud X^c_u=h(Y_u)\ud u$ on $]t,T[$, with the agent unwinding the remaining position at time $T$. Assume that $e^*_{t,x,y}$ minimizes $G_{t,x,y}$, and
\begin{equation*}
  e^*_{t,x,y}\in\big\{Y_T^{t,y,X}:X\in\mathcal{A}_t(x)\text{ is of type $A$}\big\}.
\end{equation*}
 Then, 
\begin{equation}
\label{linear_vf_exercise}
  v(t,x,y)=G(e^*_{t,x,y})-\Phi(y).
\end{equation}
Theorem $4.2$ of \textcite{predoiu2011} establishes that the type $A$ strategy associated to $e^*_{t,x,y}$ is the unique solution to the optimization purchase problem when $g$ is strictly convex. Furthermore, the authors demonstrate that a solution to the optimal execution problem exists even if the convexity of $g$ no longer holds (see Theorem $4.5$). Under this alternative situation, the trader initially purchases a lump sum of shares, $X_0 = y$, at time $0$. Subsequently, shares are bought at a constant rate, $\ud X_t = h(y) \, \ud t$, over the interval $t \in [0, t_0[$, maintaining $Y_t = y$. At time $t_0$, another lump sum purchase is made, after which the trader continues to buy shares at a constant rate, $\ud X_t = h(Y_{t_0}) \, \ud t$, during $t \in [t_0, T[$, keeping $Y_t = Y_{t_0}$. Finally, the remaining shares are purchased at time $T$. For example, consider a block-shaped order book as described in \textcite{OBIZHAEVA20131},
with a density $q_0=1$ and a resilience rate $\rho = 1$, resulting in $F(x)=\psi(x)=h(x)=x$, for all $x\geq 0$. In this case, $\Phi(x)=\frac{x^2}2$ and
$g(x)=x^2$, for all $x\geq 0$. Additionally, if $X\in\mathcal{A}_t(x)$ is of type
$A$, the following holds 
\begin{equation*}
e^*_{t,x,y} = \frac{2(\midbar X-x+y)}{T-t+2}.
\end{equation*}
In the framework proposed by \textcite{predoiu2011}, the volume effect is assumed to be non-dynamic, effectively treating it as deterministic throughout the execution horizon. As a result, the authors implicitly assume that the initial state lies within the execution region when starting from $x = 0$ and $y=0$. However, this assumption does not necessarily hold when the volume effect is dynamic. In general, $e^*_{t,x,y}$ is attainable by a type $A$ strategy if and only if
\begin{equation*}
  \overline{X}-x-(1+(T-t))y\geq 0.
\end{equation*}
Furthermore, the value function in this region is equal to 
\begin{equation*}
v(t, x, y) = \frac{(\bar X - x + y)^2}{2+T-t}-\frac{y^2}{2}.\end{equation*}
The representation of the value function in the next section enables its characterization beyond this domain. Observe that this case falls within the framework of \textcite{alfonsi2016dynamic} if we assume that the strategic trader's proportion of permanent price impact is identical to that of other traders and equal to zero. Suppose further that the measure $m$ satisfies $m\equiv 0$, the jump size of $Y$ satisfies $q\equiv 1$, and that the process
$$\int_0^t \sigma^2(Y_u)\,\ud W_u$$
is $\cF$-martingale, where $Y$ is the solution of the uncontrolled SDE \eqref{volume_effect_process}. The argument in the proof of Theorem $2.1$ of \textcite{alfonsi2016dynamic} shows that there are no Price Manipulation Strategies (PMS) in this setting and that the control problem is equivalent to the execution problem of \textcite{OBIZHAEVA20131}. 
This example motivates our study of the general case, in which the volume effect evolves dynamically as a jump diffusion process. In this setting, we aim to characterize the properties of the value function within each region and along the free boundary.
 \begin{rque}
    The framework introduced by \textcite{alfonsi2016dynamic} was extended by \textcite{chevalier2024} to incorporate impulse controls. Although this extension enhances the modeling of price impact, the limit order book remains static, meaning that its shape does not respond to incoming orders, and exhibits a deterministic volume effect with fixed resilience. Additionally, the dynamics are solely driven by point processes representing order flow. As a result, while these models are well suited to high-frequency trading environments, they are less appropriate for longer time scales. This motivates the use of Brownian noise in \eqref{volume_effect_process} to capture stochastic resilience effects.
\end{rque}
\section{Analytical Properties of the Value Function}
\label{Characterization of the Value Function}
In this section, we will denote, for any $(i,t,x,y)\in \I_m\times\midbar \cS$, and any admissible strategy $X\in\mathcal{A}_t(x)$,
    \begin{equation*}
C_i(t,x,y,X):=\int_t^T \psi_{I_{u}}(\check{Y}_{u^-}^{t,y,X})\ud X^c_u+\sum_{t\leq u\leq
T;\Delta X_u>0}\Phi_{I_{u}}(Y_u^{t,y,X})-\Phi_{I_{u}}(\check{Y}_{u^-}^{t,y,X}).
    \end{equation*}
We start by stating standard results on the finiteness and monotonicity of the value function.
 \begin{prop}
 \label{finite_val_fun}
     The value function $v$ described in \eqref{value_function} is finite.
 \end{prop}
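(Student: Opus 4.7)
The plan is to establish the two-sided growth bound \eqref{growthcondition}, namely $0\leq v_i(t,x,y)\leq \Phi_i(y+\midbar{X}-x)-\Phi_i(y)$, and then to verify that the right-hand side is finite under Assumption \ref{carnetinfini}.

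For the lower bound, I would argue pathwise that $C_i(t,x,y,X)\geq 0$ for every $X\in\cA_t(x)$. The Stieltjes integral $\int_t^T \psi_{I_u}(\check{Y}^{t,y,X}_{u^-})\,\ud X^c_u$ is non-negative since $X^c$ is non-decreasing and $\psi_i\geq 0$ by \eqref{def_psi}. The jump sum is restricted to times $u$ with $\Delta X_u>0$; at such a time $Y^{t,y,X}_u=\check{Y}^{t,y,X}_{u^-}+\Delta X_u\geq \check{Y}^{t,y,X}_{u^-}$, and almost surely the jumps of the singular control $X$ do not coincide with a jump of the Markov chain $I$ (the jump times of $I$ being totally inaccessible and countable), so $I_u=I_{u^-}$ on this event and $\Phi_{I_u}(Y^{t,y,X}_u)-\Phi_{I_u}(\check{Y}^{t,y,X}_{u^-})\geq 0$ by monotonicity of $\Phi_i$. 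Taking expectations then gives $v_i(t,x,y)\geq 0$.

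For the upper bound, I would exhibit the immediate-execution strategy $X^*_u:=x\mathbf{1}_{\{u<t\}}+\midbar{X}\mathbf{1}_{\{u\geq t\}}$, which clearly belongs to $\cA_t(x)$. Its continuous component vanishes and it carries a single jump at $u=t$ of size $\midbar{X}-x$, producing $Y^{t,y,X^*}_t=y+\midbar{X}-x$, so the cost reduces to the deterministic value $\Phi_i(y+\midbar{X}-x)-\Phi_i(y)$; passing to the infimum yields $v_i(t,x,y)\leq \Phi_i(y+\midbar{X}-x)-\Phi_i(y)$.

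Finally, I would verify that $\Phi_i(y+\midbar{X}-x)<+\infty$, which is where Assumption \ref{carnetinfini} is truly used. From $F_i(z)\geq bz^\beta$ for $z\geq a$, the definition \eqref{def_psi} of $\psi_i$ gives $\psi_i(\zeta)\leq \max\!\bigl(a,(\zeta/b)^{1/\beta}\bigr)$ for every $\zeta\geq 0$. Integrating over $[0,y+\midbar{X}-x]$ then produces the explicit bound
\begin{equation*}
\Phi_i(y+\midbar{X}-x)\leq a(y+\midbar{X}-x)+\frac{\beta}{(\beta+1)b^{1/\beta}}(y+\midbar{X}-x)^{(\beta+1)/\beta}<+\infty,
\end{equation*}
which closes the argument. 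The only step requiring genuine model input is this last finiteness of $\Phi_i$; the two bounds on $v_i$ themselves are immediate from the admissibility of $X^*$ and the monotonicity of the cost functionals.
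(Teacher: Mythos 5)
Your proof takes the same route as the paper's own (two-line) argument: the immediate block purchase $\midbar{X}-x$ at time $t$ gives the upper bound $\Phi_i(y+\midbar{X}-x)-\Phi_i(y)$, nonnegativity of the cost gives the lower bound, and Assumption \ref{carnetinfini} makes the upper bound finite; your explicit bound $\psi_i(\zeta)\le\max\bigl(a,(\zeta/b)^{1/\beta}\bigr)$ and the resulting polynomial estimate for $\Phi_i$ is a detail the paper leaves implicit, and it is correct.

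The one step that does not hold up is your justification of the sign of the jump terms. Admissible strategies are only required to be $\cF$-adapted, right-continuous and non-decreasing (Definition \ref{admissible_strategies}), not predictable, so total inaccessibility of the switching times of $I$ does not prevent $X$ from jumping exactly at a regime switch: for instance the strategy that buys the whole remaining block at $\tau^I_1:=\inf\{u\ge t: I_u\neq i\}$ is admissible, and on $\{\tau^I_1\le T\}$ the coincidence occurs with probability one, not zero. At such a time the cost term in \eqref{value_function} reads $\Phi_{I_u}(Y^{t,y,X}_u)-\Phi_{I_{u^-}}(\check{Y}^{t,y,X}_{u^-})$ with $I_u\neq I_{u^-}$, and monotonicity of each $\Phi_j$ alone does not settle its sign, so the claim "a.s.\ the jumps of $X$ avoid the jumps of $I$" cannot be used to dispose of this case. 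To be fair, the paper itself asserts $0\le v_i$ without addressing this configuration, so your argument matches (and elsewhere exceeds) the paper's level of detail; but the total-inaccessibility reasoning as stated is incorrect and, if you want a complete lower bound for every admissible $X$, the simultaneous-jump terms need a separate treatment (e.g.\ interpreting block trades at a switch as priced in the post-switch book, i.e.\ with $\Phi_{I_u}$ on both sides, under which nonnegativity is immediate).
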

\begin{proof}
\label{proof_finitness}
Let
$(t,x,y)\in\midbar \cS$
be the state variable at time $t$ and $i\in \I_m$. If the investor buys $\midbar{X}-x$ immediately, the associated cost would be equal to
$\Phi_i(y+\midbar{X}-x)-\Phi_i(y)$. Therefore, 
$$0\leq v_i(t,x,y)\leq
\Phi_i(y+\midbar{X}-x)-\Phi_i(y)$$on $\midbar \cS$. This concludes the proof.
\end{proof}

\begin{prop}[Monotonicity]
\label{lemma_decreasing}
For any $(i,t,x,y)\in\I_m\times\midbar\cS$, the following results hold:
\begin{enumerate}
    \item $t_0\mapsto v_i(t_0,x,y)$ is non-decreasing on $[0,T]$.
    \item $x_0\mapsto v_i(t,x_0,y)$ is non-increasing on $[0,\midbar{X}]$.
    \item $y_0\mapsto v_i(t,x,y_0)$ is non-decreasing on $\RR_+$.
\end{enumerate}
\end{prop}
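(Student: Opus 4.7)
The plan is to establish the three monotonicities in the order $(3)$, $(2)$, $(1)$, in increasing order of subtlety.

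Monotonicity in $y$ ($(3)$) is the most direct. Fix $X \in \mathcal{A}_t(x)$ and $0 \leq y_1 \leq y_2$. The comparison Lemma \ref{comparison_prop_Y} gives $Y^{t,y_1,X}_u \leq Y^{t,y_2,X}_u$ almost surely for every $u \in [t,T]$. Since $\psi_i$ is non-decreasing (as the generalised right-inverse of the non-decreasing $F_i$ via \eqref{def_psi}) and $\Phi_i(y)=\int_0^y \psi_i(\zeta)\,\ud\zeta$ inherits this property, both the integrand $\psi_{I_u}(\check{Y}^{t,y,X}_{u^-})$ against $\ud X^c_u$ and each jump term $\Phi_{I_u}(Y^{t,y,X}_u) - \Phi_{I_{u^-}}(\check{Y}^{t,y,X}_{u^-})$ are pathwise non-decreasing in $y$. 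Passing to expectation and then to the infimum over admissible strategies yields $v_i(t,x,y_1) \leq v_i(t,x,y_2)$.

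For monotonicity in $x$ ($(2)$), given $0 \leq x_1 \leq x_2 \leq \midbar X$ and $X_1 \in \mathcal{A}_t(x_1)$, I define $X_2 := X_1 \vee x_2 \in \mathcal{A}_t(x_2)$. With $\tau := \inf\{u \geq t : X_1(u) \geq x_2\} \leq T$, the strategy $X_2$ stays at level $x_2$ on $[t,\tau)$ and coincides with $X_1$ on $[\tau,T]$; in particular the trading increments of $X_2$ are pathwise dominated by those of $X_1$ everywhere. An extension of Lemma \ref{comparison_prop_Y} to two ordered controls — standard under the Lipschitz and linear-growth assumptions \ref{linear_growth}--\ref{lipschitz} via a Gronwall-type argument — yields $Y^{t,y,X_2}_u \leq Y^{t,y,X_1}_u$ almost surely for all $u$. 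Combining this with the monotonicity of $\psi_i$ and $\Phi_i$, a term-by-term comparison (on the integrals against $\ud X^c$ and on the jump contributions) establishes the pathwise inequality $C_i(t,x_2,y,X_2) \leq C_i(t,x_1,y,X_1)$. Taking expectation and then infimum over $X_1$ produces $v_i(t,x_2,y) \leq v_i(t,x_1,y)$.

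Monotonicity in $t$ ($(1)$) is the most delicate, and it is the main obstacle. I use a wait-then-optimise strategy at $t_1 \leq t_2$: from $(t_1, x, y, i)$, do nothing on $[t_1,t_2)$ and, at $t_2$, deploy an $\epsilon$-optimal strategy for the resulting random state $\bigl(x, Y^{t_1,y,0}_{t_2}, I_{t_2}\bigr)$, where $Y^{t_1,y,0}$ denotes the uncontrolled volume-effect process. A measurable-selection argument provides such a strategy, and the strong Markov property of $(Y,I)$ yields
\begin{equation*}
v_i(t_1,x,y) \;\leq\; \EE\!\left[v_{I_{t_2}}\bigl(t_2, x, Y^{t_1,y,0}_{t_2}\bigr)\right] + \epsilon.
\end{equation*}
The hard step is bounding the right-hand side by $v_i(t_2,x,y)$. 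My plan is to couple $Y^{t_1,y,0}$ with its initial value $y$: interpreting $-h$ as a non-negative resilience rate and exploiting the compensated-martingale structure of the diffusion and jump noise, $Y^{t_1,y,0}$ is a supermartingale, so $\EE\bigl[Y^{t_1,y,0}_{t_2} \bigm| \mathcal{F}^{I}_{t_2}\bigr] \leq y$ along trajectories of $I$. Combining this with the $y$-monotonicity from $(3)$ and a conditioning argument on the regime trajectory — using the strong Markov property of $I$ together with Assumption \ref{no_mutual_cov} — collapses the bound into $v_i(t_2,x,y)$; letting $\epsilon \to 0$ then concludes.
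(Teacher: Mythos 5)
Your parts (2) and (3) are fine and essentially match the paper's own argument: part (3) is exactly the comparison Lemma \ref{comparison_prop_Y} plus monotonicity of $\psi_i,\Phi_i$, and part (2) is the same ``modify the given strategy into a dominated one and compare costs pathwise'' device the paper uses (the paper shifts the strategy up by $x'-x$ and stops it once it hits $\midbar X$, you defer purchases until $X_1$ reaches $x_2$; both work, and both need the same mild extension of Lemma \ref{comparison_prop_Y} to ordered controls that you acknowledge).

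The genuine gap is in part (1). Your wait-then-optimise plan requires the inequality $\EE\big[v_{I_{t_2}}(t_2,x,Y^{t_1,y,0}_{t_2})\big]\le v_i(t_2,x,y)$, and nothing in your sketch delivers it. First, the supermartingale property of $Y^{t_1,y,0}$ only yields $\EE[Y^{t_1,y,0}_{t_2}]\le y$ (and even that presupposes $h\ge 0$, a sign condition the paper never imposes on $h:\RR_+\to\RR$); it does not give the pathwise bound $Y^{t_1,y,0}_{t_2}\le y$, since the Brownian and jump noise can push $Y$ above $y$. Monotonicity of $v_i$ in $y$ combined with an expectation bound would require Jensen in the form $\EE[v_i(t_2,x,Y_{t_2})]\le v_i(t_2,x,\EE[Y_{t_2}])$, i.e.\ concavity of $v_i$ in $y$, which is not established and is implausible given that $\Phi_i$ is convex. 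Second, the regime variable: after waiting, the chain may sit in a state $j\ne i$ with a less favourable book shape, and there is no ordering between $v_j$ and $v_i$; ``conditioning on the regime trajectory'' cannot collapse $\EE[v_{I_{t_2}}(\cdot)]$ into $v_i(\cdot)$. The paper avoids this state-comparison problem altogether: given an arbitrary $X'\in\cA_{t'}(x)$, it builds a strategy in $\cA_t(x)$ by shifting $X'$ backwards in time to the window $[t,T-t'+t]$ and remaining inactive afterwards, so that (via the Markov property) the expected cost is unchanged and $v_i(t,x,y)\le v_i(t',x,y)$ follows directly, with no need to compare $v$ at an evolved random state against $v$ at the initial state. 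You would need either to switch to such an emulation/time-shift argument, or to add substantive hypotheses (pathwise decrease of the uncontrolled $Y$ and an ordering of the $v_j$ across regimes) that the model does not provide.
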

\begin{proof}
\textbf{1.} Let $i\in \I_m$, $x\in[0,\midbar{X}]$ and $y\geq 0$. Suppose that $0\leq t\leq t'\leq T$ and consider $X' = (X'_u)_{t'\leq u\leq T}\in\mathcal{A}_{t'}(x)$ such that $X'_{{t'}^-}=x$ and $X'_T=\midbar{X}$. Using the strong Markov property, we get that
\begin{equation*}
\resizebox{\textwidth}{!}{$
\begin{aligned}\EE \big[C_i(t',x,y,X')|\cF_{t'}\big] &= \EE \Big[\int_{t}^{T-t'+t} \psi_{I_{u}}(\check{Y}_{u^-}^{t,y,{\widetilde X'}})\ud {\widetilde X}^c_{u}+\sum_{\substack{t\leq u\leq
T-t'+t\\\Delta {\widetilde X}_{u}>0}}\Phi_{I_{u}}(Y_{u}^{t,y,{\widetilde X'}})-\Phi_{I_{u}}(\check{Y}_{u^-}^{t,y,{\widetilde X'}})|\cF_{t}\Big],
\end{aligned}
$}
\end{equation*}
with $\widetilde X' = (X'_{u-t+t'})_{t\leq u\leq T-t'+t}$. Define the strategy $X\in\mathcal{A}_t(x)$ as 
\begin{equation*}
\begin{split}
        X_u = \left\{\begin{array}{ll}
\widetilde X_u, &\quad \text { if }u\in[t,T-t'+t], \\ 
0,&\quad\text { else}. 
\end{array}\right.
\end{split}
    \end{equation*}
Note that $\EE \big[C_i(t',x,y,X')|\cF_{t'}\big] = \EE \big[C_i(t,x,y,X)|\cF_{t}\big]$. Since $X\in\cA_t(x)$, then $$v_i(t,x,y)\leq \EE \big[C_i(t',x,y,X')|\cF_{t'}\big].$$   
As $X'$ is arbitrary, it follows that $v_i(t,x,y)\leq v_i(t',x,y)$, completing the proof.\\
\textbf{2.} Let $i\in \I_m$, $t\in[0,T]$ and $y\geq 0$. Suppose that $0\leq x\leq x'\leq \midbar{X}$. Let
$\epsilon>0$ be arbitrary and assume that $X\in\mathcal{A}_t(x)$ is an $\varepsilon$-optimal strategy that satisfies
    \begin{equation*}
             v_i(t,x,y)+\epsilon\geq\mathbb{E}[C_i(t,x,y,X)].
        \end{equation*}
Let $\tau$ be an $\cF$-stopping time, such that
    \begin{equation*}
        \tau:=\inf\big\{u\geq t: X_u-x\geq \midbar{X}-x'\big\}.
    \end{equation*}
We define the strategy $X'\in\mathcal{A}_t(x')$ as 
\begin{equation*}
\begin{split}
        X'_u = \left\{\begin{array}{ll}
X_u+x'-x, &\quad \text { if }u<\tau, \\ 
\midbar{X},&\quad\text { else}. 
\end{array}\right.
\end{split}
    \end{equation*}
     Using the comparison property established in Lemma \ref{comparison_prop_Y}, we have that
    \begin{equation*}
        C_i(t,x',y,X')\leq C_i(t,x,y,X).
    \end{equation*}
Taking expectations on both sides, we deduce that
    \begin{equation*}
v_i(t,x',y)\leq \mathbb{E}[C_i(t,x',y,X')]\leq \mathbb{E}[C_i(t,x,y,X)]\leq v_i(t,x,y)+\epsilon.  \end{equation*}
Since $\epsilon$ is arbitrary, the conclusion follows.\\
\textbf{3.} This result follows directly from the comparison property established for the process $Y$ and the fact that $\Phi_i$ and $\psi_i$ are non-decreasing on $\RR_+$ for all $i\in\I_m$. 
\end{proof}
To prove the continuity of the value function, we need the following lemmata. The first result concerns the continuity of the flow of $Y$. The second is related to a continuity modulus. 
\begin{lemma}
\label{cont-flow}
For any non-decreasing, right-continuous, adapted process $X$ and
any initial time $t\in[0,T]$, the mapping $y \mapsto Y_{.}^{t,y,X}$, representing the flow of the solution to SDE \eqref{controlled_volume_effect_process}, is continuous on $\RR_+$.
\end{lemma}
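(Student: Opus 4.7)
The plan is to derive an $L^2$-stability estimate for the flow in the initial condition, from which continuity of $y \mapsto Y^{t,y,X}_{\cdot}$ on $\mathbb{R}_+$ will follow (in $L^2(\Omega)$ under the sup-norm over $[t,T]$, and hence uniformly in probability on $[t,T]$). Fix $y_1,y_2 \in \mathbb{R}_+$ and define $Z_u := Y^{t,y_1,X}_u - Y^{t,y_2,X}_u$ for $u \in [t,T]$. The key observation is that since both solutions are driven by the same control $X$, the same Brownian motion $W$, and the same Poisson measure $M$, the term $\mathrm{d}X_u$ in \eqref{controlled_volume_effect_process} cancels in the difference. Thus $Z$ satisfies an uncontrolled SDE with coefficients $h(Y^{t,y_1,X}_{u^-}) - h(Y^{t,y_2,X}_{u^-})$, $\sigma(Y^{t,y_1,X}_{u^-}) - \sigma(Y^{t,y_2,X}_{u^-})$, and $q(Y^{t,y_1,X}_{u^-},z) - q(Y^{t,y_2,X}_{u^-},z)$, starting from $Z_{t^-} = y_1 - y_2$.

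I would then apply Itô's formula to $|Z_u|^2$, localize by a suitable sequence of stopping times for integrability, and take expectation of the supremum over $s\in[t,u]$. Combining the Burkholder--Davis--Gundy inequality (applied to both the Brownian and the compensated Poisson martingale parts), the $L^2$-isometry for Poisson integrals, Cauchy--Schwarz, and Assumption (A4) on the Lipschitz continuity of $h$, $\sigma$, and $\int_{\mathbb{R}} q(\cdot,z)\nu(\mathrm{d}z)$, I expect to arrive at
\[
\mathbb{E}\Big[\sup_{s\in[t,u]}|Z_s|^2\Big] \;\leq\; |y_1-y_2|^2 + C\int_t^u \mathbb{E}\Big[\sup_{r\in[t,s]}|Z_r|^2\Big]\,\mathrm{d}s,
\]
for some constant $C$ depending only on $T$, the Lipschitz constant $L$, and the bound $\midbar{\lambda}$. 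Gronwall's lemma then yields
\[
\mathbb{E}\Big[\sup_{u\in[t,T]}\big|Y^{t,y_1,X}_u - Y^{t,y_2,X}_u\big|^2\Big] \;\leq\; e^{CT}|y_1-y_2|^2,
\]
from which the stated continuity follows by letting $y_1 \to y_2$.

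The main obstacle is the careful treatment of the jump martingale. Specifically, one must bound
\[
\mathbb{E}\int_t^u \int_{\mathbb{R}}\big|q(Y^{t,y_1,X}_{s^-},z) - q(Y^{t,y_2,X}_{s^-},z)\big|^2\nu(\mathrm{d}z)\,\lambda_s\,\mathrm{d}s
\]
by a multiple of $\int_t^u \mathbb{E}\big[\sup_{r\in[t,s]}|Z_r|^2\big]\,\mathrm{d}s$, which requires the Lipschitz estimate of (A4) integrated against $\nu$ (via Jensen or Cauchy--Schwarz on the $\nu$-integral, exploiting (A2)) together with the uniform bound $\lambda \leq \midbar{\lambda}$. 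The fact that $X$ is merely non-decreasing and right-continuous causes no difficulty, since it has disappeared from the equation for $Z$; in particular no moments of $X$ beyond those guaranteed by $X_T = \midbar{X}$ are needed. Once this $L^2$-estimate is in hand, the continuity of the flow is immediate and can be invoked in subsequent continuity arguments for the value function.
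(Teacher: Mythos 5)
Your overall strategy (the cancellation of $\ud X$ in the difference of the two solutions, followed by a Gronwall argument) is sensible and is a genuinely different route from the paper, which disposes of the lemma by citing the flow-continuity theorems V.37--V.38 of \textcite{protter2005stochastic}. However, as written your $L^2$ estimate has a genuine gap in the jump term. To bound $\mathbb{E}\int_t^u\int_{\mathbb{R}}|q(Y^{t,y_1,X}_{s^-},z)-q(Y^{t,y_2,X}_{s^-},z)|^2\,\nu(\ud z)\,\lambda_s\,\ud s$ (and likewise the quadratic variation appearing in BDG for the compensated Poisson martingale) you need a \emph{quadratic} Lipschitz bound $\int_{\mathbb{R}}|q(y,z)-q(y',z)|^2\,\nu(\ud z)\leq C|y-y'|^2$. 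Assumption \ref{lipschitz} only provides the $L^1(\nu)$ bound $\int_{\mathbb{R}}|q(y,z)-q(y',z)|\,\nu(\ud z)\leq L|y-y'|$, and the tools you invoke go the wrong way: since $\nu$ is finite by Assumption \ref{q_moments_monotonicity}, Jensen and Cauchy--Schwarz bound the $L^1(\nu)$ norm by the $L^2(\nu)$ norm, not conversely. So the key inequality of your ``main obstacle'' paragraph cannot be extracted from \ref{q_moments_monotonicity} and \ref{lipschitz} alone, and the $L^2$ Gronwall loop does not close under the paper's hypotheses (it would under the stronger pointwise assumption $|q(y,z)-q(y',z)|\leq \rho(z)|y-y'|$ with $\rho\in L^2(\nu)$, which the paper does not make).

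The repair is to run the stability estimate at first order instead. Setting $Z_u:=Y^{t,y_1,X}_u-Y^{t,y_2,X}_u$, bound $\mathbb{E}\big[\sup_{s\in[t,u]}|Z_s|\big]$: the drift term uses \ref{lipschitz} directly; for the compensated jump integral avoid BDG and simply split it into the integral against $M$ and the compensator, each of which has expectation at most $\midbar{\lambda}\,L\int_t^u\mathbb{E}|Z_{s^-}|\,\ud s$; the Brownian part is handled by BDG with exponent one together with the Young-inequality trick $\big(\int_t^u|Z_s|^2\ud s\big)^{1/2}\leq \tfrac{\epsilon}{2}\sup_{s\leq u}|Z_s|+\tfrac{1}{2\epsilon}\int_t^u|Z_s|\,\ud s$, absorbing the supremum on the left. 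Gronwall then gives $\mathbb{E}\big[\sup_{u\in[t,T]}|Z_u|\big]\leq C_T|y_1-y_2|$, i.e.\ Lipschitz continuity of the flow in $L^1$ uniformly in time, hence continuity in probability of $y\mapsto Y^{t,y,X}_{\cdot}$, which is what the paper's subsequent continuity arguments actually use. Note this is slightly weaker than the almost-sure continuity of the flow furnished by Protter's theorems, but it suffices here; if you want the $L^2$ (or a.s.) version you must strengthen the Lipschitz hypothesis on $q$ as indicated above.
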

\begin{proof}
 See Theorems $V.37$ and $V.38$ in \textcite{protter2005stochastic}.
\end{proof}
\begin{lemma}
\label{unifcont_phi}
Let $0\leq t\leq s\leq T$, $x\in[0,\midbar X] $, $y\geq 0$ and
$X\in\mathcal{A}_t(x)$. For any nonnegative random variable $\xi$ with finite moments, it holds that
\begin{equation*}
\E\Big[\Phi_{I_s}\big(Y^{t,y,X}_s+\xi\big)-\Phi_{I_s}\big(Y^{t,y,X}_s\big)\Big]\leq \rho_y(\xi),
\end{equation*}
where the mapping $(y,\zeta)\mapsto\rho_y(\zeta)$ is continuous in $y$ and non-decreasing in $\zeta$ on $\RR_+$, satisfying $\lim_{\zeta \to 0} \rho_y(\zeta) = 0$. Additionally, $\rho_y(\xi)$ converges in probability to $0$ as $\xi$ tends to $0$ in probability, if its moments are uniformly bounded. 
\end{lemma}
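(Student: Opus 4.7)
The plan is to dominate the increment pointwise using the monotonicity of $\psi_i$, convert it into an expectation estimate via the power-type growth bound derived from Assumption \ref{carnetinfini}, and then apply H\"older's inequality together with the moment estimate of Proposition \ref{existence_uniqueness_SDE}.

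First, since $\psi_{I_s}$ is non-decreasing, one has the elementary pointwise bound
$$\Phi_{I_s}\big(Y^{t,y,X}_s+\xi\big)-\Phi_{I_s}\big(Y^{t,y,X}_s\big)=\int_{Y^{t,y,X}_s}^{Y^{t,y,X}_s+\xi}\psi_{I_s}(\zeta)\,\ud\zeta\leq \xi\,\psi_{I_s}\big(Y^{t,y,X}_s+\xi\big).$$
A short computation based on the definition \eqref{def_psi} together with Assumption \ref{carnetinfini} yields a uniform power-type bound $\psi_i(z)\leq C_0\,(1+z^{1/\beta})$ for all $z\geq 0$ and $i\in\I_m$, where $C_0$ depends only on the constants $a,b,\beta$ appearing in Assumption \ref{carnetinfini}. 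Combining these two inequalities reduces the problem to controlling the expectation $\E[\xi\,(1+(Y^{t,y,X}_s+\xi)^{1/\beta})]$.

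Next, I would apply H\"older's inequality with conjugate exponents $p,q>1$ chosen so that $q/\beta>1$ and $\xi$ admits a finite moment of order $r:=\max(p,q/\beta)$:
$$\E\big[\xi\,(Y^{t,y,X}_s+\xi)^{1/\beta}\big]\leq \|\xi\|_{L^p}\,\Big(\E\big[(Y^{t,y,X}_s+\xi)^{q/\beta}\big]\Big)^{1/q}.$$
Using the elementary inequality $(a+b)^{q/\beta}\leq c_\beta\,(a^{q/\beta}+b^{q/\beta})$ together with Proposition \ref{existence_uniqueness_SDE} applied to the deterministic initial condition $y$, the second factor is majorized by a constant multiple of $1+y^{1/\beta}+\|\xi\|_{L^{q/\beta}}^{1/\beta}$, uniformly in $s\in[t,T]$ and $X\in\cA_t(x)$. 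Defining, for every $\zeta\geq 0$,
$$\rho_y(\zeta):=C_0\,\zeta+C\,\zeta\,\big(1+y^{1/\beta}+\zeta^{1/\beta}\big),$$
and setting $\rho_y(\xi):=\rho_y(\|\xi\|_{L^r})$, one obtains the announced majorant. By construction the mapping $(y,\zeta)\mapsto\rho_y(\zeta)$ is continuous in $y$, non-decreasing in $\zeta$, and vanishes as $\zeta\downarrow 0$.

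For the final assertion, if $\xi_n\to 0$ in probability with moments uniformly bounded beyond order $r$, then $\{|\xi_n|^r\}$ is uniformly integrable by the de la Vall\'ee Poussin criterion, so $\|\xi_n\|_{L^r}\to 0$ and hence $\rho_y(\xi_n)\to 0$. The only delicate point is the joint choice of the exponents $p,q$: it must be compatible both with H\"older ($1/p+1/q=1$) and with the condition $q/\beta>1$, while keeping $r$ below the moment order available on $\xi$. Since $\xi$ is assumed to possess finite moments of arbitrary order, this is always achievable by taking $q$ slightly larger than $\max(\beta,1)$, which makes the construction of $\rho_y$ routine. The main technical point is therefore the derivation of the uniform power bound on $\psi_i$ from Assumption \ref{carnetinfini}; everything else follows from standard moment estimates.
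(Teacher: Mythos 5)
Your proof is correct and rests on the same three ingredients as the paper's --- a power-law bound on $\psi_i$ derived from Assumption (A1), a H\"older-type inequality to decouple $\xi$ from $Y^{t,y,X}_s$, and the moment estimate of Proposition \ref{existence_uniqueness_SDE} --- but the implementation differs. The paper keeps the integral $\int_0^\xi \psi_{I_s}(Y_s+\zeta)\,\ud\zeta$ intact, splits the expectation on $\{Y^{t,y,X}_s\leq F_{I_s}(a)\}$ versus its complement before applying (A1), integrates the resulting power, and then applies Cauchy--Schwarz; the modulus that comes out is $\rho_y(\xi)=\sum_i C_i^y\,\E\big[\big((1+\xi/F_i(a))^{(\beta+1)/\beta}-1\big)^2\big]^{1/2}$. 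You instead use the coarser pointwise bound $\int_{Y}^{Y+\xi}\psi_{I_s}\leq \xi\,\psi_{I_s}(Y+\xi)$ together with the uniform estimate $\psi_i(z)\leq C_0(1+z^{1/\beta})$ (correctly deduced from $F_i(\psi_i(z))=z$ and (A1)), then a general H\"older inequality with free exponents. This avoids the case split and produces a simpler polynomial modulus $\rho_y(\zeta)=C_0\zeta+C\zeta\big(1+y^{1/\beta}+\zeta^{1/\beta}\big)$ evaluated at $\|\xi\|_{L^r}$; moreover, because you may tune $p$ and $q$, the order $r=\max(p,q/\beta)$ of moments required on $\xi$ can be taken smaller than the $2(\beta+1)/\beta$ that the paper's Cauchy--Schwarz step eventually requires. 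Both moduli satisfy the stated properties (continuity in $y$, monotonicity in $\zeta$, vanishing as $\zeta\to 0$, and stability under convergence in probability with uniformly bounded moments), and since the lemma only asserts existence of such a modulus, either construction is acceptable.
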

\begin{proof}
    Refer to Appendix \ref{appendix_caracterization}.
\end{proof}
We conclude this section by presenting a key result on the continuity of the value function.
\begin{theo}
\label{continuous_v}
The value function $v$ defined in \eqref{value_function}
is a continuous function on $\midbar{\cS}$. \end{theo}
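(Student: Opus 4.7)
The plan is to prove continuity of $v_i$ at every fixed reference point $(i,t_0,x_0,y_0)\in\I_m\times\midbar\cS$ by establishing upper and lower semicontinuity separately, and to reduce to monotone sequences $(t_n,x_n,y_n)\to(t_0,x_0,y_0)$ using the monotonicity properties from Proposition \ref{lemma_decreasing}.

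For the upper semicontinuity bound, I fix $\epsilon>0$, choose an $\epsilon$-optimal strategy $X^\epsilon\in\cA_{t_0}(x_0)$ at the reference point, and, for each $n$, construct an admissible strategy $X^n\in\cA_{t_n}(x_n)$ by combining the three explicit building blocks already used in the proof of Proposition \ref{lemma_decreasing}: a time-shift to start at $t_n$, an initial instantaneous adjustment to match the new holdings $x_n$ (adding a jump of size $x_n-x_0$ if positive, or truncating via the stopping time $\tau$ as in part (2) if $x_n<x_0$), and a final block purchase of the outstanding quantity at time $T$. By flow continuity (Lemma \ref{cont-flow}) and the comparison property (Lemma \ref{comparison_prop_Y}), the controlled state processes satisfy $Y^{t_n,y_n,X^n}\to Y^{t_0,y_0,X^\epsilon}$ uniformly in probability on $[t_0,T]$. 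The lump-sum $\Phi_i$-terms converge by continuity of $\Phi_i$ (which follows from the representation $\Phi_i(y)=\int_0^y\psi_i(\zeta)\,\ud\zeta$) together with the modulus supplied by Lemma \ref{unifcont_phi}. Passing to the $\limsup$ and then sending $\epsilon\to 0$ yields $\limsup_n v_i(t_n,x_n,y_n)\le v_i(t_0,x_0,y_0)$.

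For lower semicontinuity, I reverse the roles: pick $\epsilon$-optimal strategies $X^{n,\epsilon}\in\cA_{t_n}(x_n)$ at the perturbed points and use the same three constructions to produce admissible strategies in $\cA_{t_0}(x_0)$ whose expected cost is close to $\E[C_i(t_n,x_n,y_n,X^{n,\epsilon})]$, giving $v_i(t_0,x_0,y_0)\le \liminf_n v_i(t_n,x_n,y_n)+\epsilon$, and letting $\epsilon\to 0$ closes the sandwich.

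The main obstacle is controlling the $\psi$-integral term $\int_{t_n}^T\psi_{I_u}(\check Y^{t_n,y_n,X^n}_{u^-})\,\ud (X^n)^c_u$, since $\psi_i$ is only left-continuous and can have jumps, so pointwise continuity estimates on $\psi_i$ are unavailable. I would circumvent this by relying exclusively on $\Phi_i$-based estimates: using $\Phi_i'=\psi_i$ almost everywhere together with the monotonicity of $\psi_i$, one has the sandwich
\begin{equation*}
\psi_i(a)(b-a)\le\Phi_i(b)-\Phi_i(a)\le\psi_i(b)(b-a),\qquad 0\le a\le b,
\end{equation*}
which allows trading the $\psi$-integral against differences of the continuous function $\Phi_i$ evaluated along the comparable state processes provided by Lemma \ref{comparison_prop_Y}, and then applying Lemma \ref{unifcont_phi}. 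A secondary technicality is the \cadlag Markov chain $I$: on the small time interval $[t_0,t_n]$, the probability of a regime transition is $O(|t_n-t_0|)$ by the bounded intensities in \eqref{assump_mc}, so conditioning on the complementary event of no jump reduces the analysis to a fixed-regime setting and contributes only a vanishing residual, uniformly in the strategies constructed above.
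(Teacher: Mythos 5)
Your overall architecture (transplant $\epsilon$-optimal strategies between nearby starting points, estimate the cost difference through $\Phi_i$ and Lemma \ref{unifcont_phi}, treat the regime chain by conditioning on no switch) is the same spirit as the paper, which proves uniform continuity separately in $x$, $y$ and $t$. However, there is a genuine gap at exactly the point you flag as ``the main obstacle'': the term $\int \psi_{I_u}(\check Y_{u^-})\,\ud X^c_u$. Your proposed fix --- the sandwich $\psi_i(a)(b-a)\le \Phi_i(b)-\Phi_i(a)\le \psi_i(b)(b-a)$ --- only converts values of $\psi_i$ into difference quotients of $\Phi_i$; it does not make $\int_t^T\big[\psi_i(\check Y^{1}_{u^-})-\psi_i(\check Y^{2}_{u^-})\big]\,\ud X^c_u$ small when $Y^1$ and $Y^2$ are close. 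Since $\psi_i$ is only left-continuous and non-decreasing (block or discrete order books give genuine jumps), this integrand can be of order one on a set of times charged by $\ud X^c$ even if $\sup_u|Y^1_u-Y^2_u|$ is arbitrarily small; and no averaging/density argument rescues you, because the assumptions allow $\sigma\equiv 0$ and $X^c$ is correlated with $Y$. The paper's proof resolves precisely this by a coupling construction: the transplanted strategy $\widetilde X$ does \emph{no continuous trading} before the stopping time $\tau$ at which the two volume paths meet, jumps there so as to realign (and never overshoot) the reference path, and only then copies $X$; combined with the comparison property this forces the $\psi$-integral difference ($R_1$) to be nonpositive, and the remaining lump-sum terms are handled by $\Phi_i$-estimates. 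Some such ordering/realignment device (or an It\^o-type conversion of the whole cost into $\Phi_i$-terms, which is unavailable here since $\psi_i'$ need not exist) is indispensable; without it your upper bound does not close.

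A second, related problem is uniformity in the control. For your lower-semicontinuity step the $\epsilon$-optimal strategies $X^{n,\epsilon}$ change with $n$, so you need a modulus of continuity for the cost that is uniform over \emph{all} admissible strategies. Lemma \ref{cont-flow} gives continuity of the flow only for a \emph{fixed} control and fixed initial time, so the claimed convergence ``$Y^{t_n,y_n,X^n}\to Y^{t_0,y_0,X^\epsilon}$ uniformly in probability'' is not justified as stated (note also that a time-shifted control is driven by differently aligned noise, which the paper circumvents via the strong Markov property and a wait-then-recouple construction for the $t$-direction). The paper obtains the required uniformity from the Lipschitz assumption \ref{lipschitz} and a Gr\"onwall estimate ($\E[|\gamma_u|]\le |y'-y|e^{LT}$, independent of the strategy), feeding into the explicit modulus $\rho'_y$. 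Your sketch would need to replace the appeal to Lemma \ref{cont-flow} by such a strategy-uniform estimate.
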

\begin{proof} We will study separately the continuity of the value function in $t$, $x$ and $y$, for each $i\in \I_m$.

\paragraph{1.} We first prove the continuity of $v_i$ in $x$, uniformly with respect to $t$. We fix $t\in[0,T] $, $y\geq 0$, $0\leq x'<x\leq \midbar{X}$ and $\varepsilon
>0$. There exists $X\in\mathcal{A}_t(x)$ which satisfies
        \begin{equation*}
             v_i(t,x,y)+\varepsilon\geq\mathbb{E}\big[C_i(t,x,y,X)\big].
        \end{equation*}
By adding a jump of size $x-x'$ at time $T$ to $X$ we obtain $\widehat X\in\mathcal{A}(t,x')$. We have
that
        \begin{equation*}
            C_i(t,x',y,\widehat X)=C_i(t,x,y,X)+\Phi_{I_T}(Y_T^{t,y,X}+x-x')-\Phi_{I_T}(Y_T^{t,y,X}).
        \end{equation*}
    Taking expectations on both sides of this equation we deduce that
        \begin{equation*}
\mathbb{E}\big[C_i(t,x',y,\widehat X)\big]\leq
v_i(t,x,y)+\varepsilon+\mathbb{E}\Big[\Phi_{I_T}(Y_T^{t,y,X}+x-x')-\Phi_{I_T}(Y_T^{t,y,X})\Big].
        \end{equation*}
On the other hand since, by Proposition \ref{lemma_decreasing}, the value function is decreasing in $x$, we conclude that
        \begin{equation*}
v_i(t,x,y)\leq v_i(t,x',y)\leq
v_i(t,x,y)+\varepsilon+\mathbb{E}\Big[\Phi_{I_T}(Y_T^{t,y,X}+x-x')-\Phi_{I_T}(Y_T^{t,y,X})\Big].
        \end{equation*}
 We deduce from Lemma \ref{unifcont_phi} that
 \begin{equation*}
 0\leq   v_i(t,x',y)-  v_i(t,x,y)\leq   \varepsilon+\rho_{y}(x-x^\prime).
 \end{equation*}
Therefore, we have obtained the
continuity of the value function $v$ in $x$ uniformly in $t$.

\paragraph{2.} Next, we prove the continuity of $v_i$ in $y$, uniformly with respect to $t$ and $x$.
Fix $y\geq 0$, $y^\prime\geq0$, $\epsilon>0$, $t\in[0,T]$ and $x\in[0,\midbar{X}]$. There exists $X\in\mathcal{A}_t(x)$
which satisfies
            \begin{equation*}
                v_i(t,x,y)+\epsilon\geq\mathbb{E}\big[C_i(t,x,y,X)\big].
            \end{equation*}
Our goal here will be to construct an admissible strategy $\widetilde X$ that bounds \begin{equation*}
\begin{split}
    v_i(t,x,y^{\prime}) - v_i(t,x,y) &\leq  \eps + \E \big[C_i(t,x,y^{\prime},\widetilde X) -C_i(t,x,y,X)\big].
\end{split}\end{equation*} 
To do so, we compare the sample paths of $Y^{t,y,X}$ and $Y^{t,y^\prime,\widetilde X}$, and aim to bound their difference using terms that depend only on the initial parameters $y$, $y^{\prime}$ and their difference $|y^{\prime}-y|$, as illustrated in Figure \ref{fig:controlled_volume_effect}. This, combined with the continuity modulus presented in Lemma \ref{unifcont_phi}, should allow us to achieve the uniform continuity in $y$. Consider $\widehat X\in\mathcal{A}_t(x)$ such that  
\begin{equation*}
\ud \widehat X_u^c=0,~~\textrm{and}~~ \Delta\widehat X_u=\Delta X_u,\quad
\forall u\in[0, T[ .
\end{equation*} 
Let $\tau$ be the following $\cF$-stopping time,
\begin{equation*}
\tau:=\inf\big\{u\geq t:\ Y^{t,y^\prime,\widehat X}_u\leq Y^{t,y,X}_u\big\}\wedge T,
\end{equation*}
$\xi$ define the random variable
\begin{equation*}
\xi:= \int_t^\tau \ud  X^c_u - \big(Y^{t,y,X}_{\tau}-Y^{t,y^\prime,\widehat X}_\tau\big),
\end{equation*}
and $\theta$ be an $\cF$-stopping time, such that
$$\theta:=\inf\big\{u\geq \tau:\ \widehat X_\tau+X_u-X_\tau\geq \midbar X\big\}.$$
Observe that when $y^\prime\leq y$, $\tau=t$ and $\xi = y^\prime - y\leq 0$. We construct a strategy $\widetilde X$ belonging to $\mathcal{A}_t(x)$ satisfying
\begin{equation*}
\begin{split}
\ud \widetilde X_u:=\left\lbrace\begin{array}{ll}
\ud \widehat X_u, & \quad\textrm{if }t\leq u<\tau,\\
\big(\Delta X_u+Y^{t,y,X}_u-Y^{t,y^\prime,\widehat X}_u\big)\wedge\big(\midbar X-\widehat X_{\tau^-}\big), & \quad\textrm{if } u=\tau,\\
\ud Z_u, & \quad\textrm{if }\tau< u\leq T.
\end{array}\right.
\end{split}
\end{equation*}
where $Z$ is a non-decreasing $\cF$-adapted process with $Z_\tau=0$. If $\xi\geq 0$,
$$
\ud Z_u:=\left\lbrace\begin{array}{ll}
\ud X_u, & \quad\textrm{if }\tau< u< T,\\
\Delta X_u+\xi, & \quad\textrm{if }u=T.
\end{array}\right.$$
If $\xi< 0$,
$$
\ud Z_u=\left\lbrace\begin{array}{ll}
\ud X_u, & \quad\textrm{if }\tau< u< \theta,\\
\midbar X-\big(\widehat X_{\tau}+X_{\theta^-}-X_{\tau}\big),\, & \quad\textrm{if }u=\theta\textrm{ and }\tau<\theta\leq T,\\
0, & \quad\textrm{else }.
\end{array}\right.$$ 
The goal of this construction is to align the paths of $u\mapsto Y^{t,y,X}_u$ and $u\mapsto Y^{t,y^\prime,\widehat X}_u$ at a chosen time $\tau$, allowing for their comparison and the derivation of bounds for inequality \eqref{ineq_cont_y}.

\begin{figure}[H]
        \centering
        \includegraphics[width=0.6\textwidth]{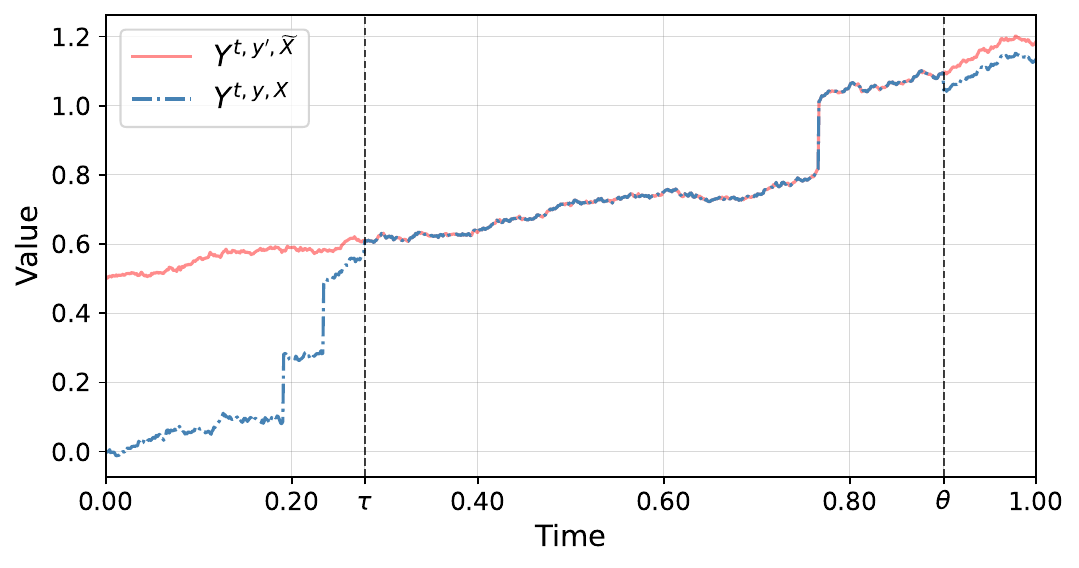}
       \caption{Illustration of $Y^{t,y,X}$ and $Y^{t,y^\prime,\widetilde X}$ sample paths in the scenario where $y<y^{\prime}$, $\tau<\theta<T$ and $\Delta X_u+Y^{t,y,X}_u-Y^{t,y^\prime,\widehat X}_u<\midbar X-\widehat X_{\tau^-}$.}
       \label{fig:controlled_volume_effect}
\end{figure}

Therefore, we obtain the following inequality
\begin{equation}
\label{ineq_cont_y}
\begin{split}
    v_i(t,x,y^{\prime}) - v_i(t,x,y) \leq \eps+R_1+R_2+R_3+R_4+R_5+R_6,
\end{split}\end{equation} 
 where 
\begin{equation*}
\resizebox{\textwidth}{!}{$
\begin{aligned}R_1 &= \E \Big[\int_t^T \psi_{I_{u}}(\check Y_{u^-}^{t,y^{\prime},\widetilde X}) \ud \widetilde X_u^c -\int_t^T
\psi_{I_{u}}( \check
Y_{u^-}^{t,y,X}) \ud X_u^c\Big] \leq 0,\\
R_2 &= \E \Big[\sum_{t\leq u<
\tau;\Delta X_u>0}\Phi_{I_{u}}(Y_u^{t,y^{\prime},\widetilde X})-\Phi_{I_{u}}(\check{Y}_{u^-}^{t,y^{\prime},\widetilde X})\Big]-\E \Big[\sum_{t\leq u<
\tau;\Delta X_u>0}\Phi_{I_{u}}(Y_u^{t,y,X})-\Phi_{I_{u}}(\check{Y}_{u^-}^{t,y,X})\Big],\\
R_3 & = \E \Big[\Phi_{I_{\tau}}(Y_\tau^{t,y^{\prime},\widetilde X})-\Phi_{I_{\tau}}(\check{Y}_{\tau^-}^{t,y^{\prime},\widetilde X})\Big]-\E\Big[\Phi_{I_{\tau}}(Y_\tau^{t,y,X})-\Phi_{I_{\tau}}(\check{Y}_{\tau^-}^{t,y,X})\Big],\\
 R_4 & = \E \Big[\1_{\{\xi\geq 0\}} (\Phi_{I_{T}}(Y_{T^-}^{t,y^{\prime},\widetilde X}+\xi)-\Phi_{I_{T}}(Y_{T^-}^{t,y,X}))\Big],\\
 R_5 & \leq \E \Big[\1_{\{\xi< 0\}} (\Phi_{I_{\theta}}(\check{Y}_{\theta^-}^{t,y^{\prime},\widetilde X}+\Delta \widetilde X_\theta)-\Phi_{I_{\theta}}(\check{Y}_{\theta^-}^{t,y,X}+\Delta X_\theta))\Big]\leq 0.
\end{aligned}
$}
\end{equation*}

\vspace{0.2cm}

We begin with finding an upper bound for $R_2$. Denote by $u_k\in[t,\tau[$ the jump times of $X$. Note that $y\mapsto \Phi_{I_{u}}(y)$ is non-decreasing on $\RR$. An application of the comparison property of $Y$ (see Lemma \ref{comparison_prop_Y}) yields $\check{Y}_{u^-}^{t,y,X}\leq \check{Y}_{u^-}^{t,y^{\prime},\widetilde X}\leq \check{Y}_{u^-}^{t,y^{\prime}, X}$, for all $u<\tau$. Therefore,

\vspace{0.1cm}

\begin{equation}
\label{ineq_R2}
\begin{split}
R_2 &\leq \E \Big[\sum_{t\leq u_k<
\tau}\Phi_{I_{u_k}}(Y_{u_k}^{t,y^{\prime}, X})-\Phi_{I_{u_k}}(Y_{u_k}^{t,y, X})\Big]\\&\leq\E \Big[\sum_{t\leq u_k<
\tau}\Phi_{I_{u_k}}(Y_{u_k}^{t,y, X}+\gamma_{u_k})-\Phi_{I_{u_k}}(Y_{u_k}^{t,y, X})\Big],
\end{split}
\end{equation}
with $\gamma_u:=\check{Y}_{u}^{t,y^{\prime}, X}-\check{Y}_{u}^{t,y, X}\geq 0$, for all $u\in[ t, \tau[$. It follows from the dynamics of $Y$ that
\begin{equation*}
    \resizebox{\textwidth}{!}{$
\begin{aligned}
        \mathrm{d}(\check{Y}_u^{t,y', X} &- \check{Y}_u^{t,y, X})^c 
        \\= c &\big(\check{Y}_{u^-}^{t,y', X} - \check{Y}_{u^-}^{t,y, X}\big)^{c - 1} 
        \left[ 
            -\big(h(\check{Y}_{u^-}^{t,y', X}) - h(\check{Y}_{u^-}^{t,y, X})\big) \, \mathrm{d}u 
            + \big(\sigma(\check{Y}_{u^-}^{t,y', X}) - \sigma(\check{Y}_{u^-}^{t,y, X})\big) \, \mathrm{d}W_u 
        \right] \\
        &\quad+ \frac{c(c - 1)}{2} (\check{Y}_{u^-}^{t,y', X} - \check{Y}_{u^-}^{t,y, X})^{c - 2} 
        \left| \sigma(\check{Y}_{u^-}^{t,y', X}) - \sigma(\check{Y}_{u^-}^{t,y, X}) \right|^2 \, \mathrm{d}u \\
        &\quad + \int_{\mathbb{R}} 
        \left[
            \big(\check{Y}_{u^-}^{t,y', X} - \check{Y}_{u^-}^{t,y, X} 
            + q(\check{Y}_{u^-}^{t,y', X}, z) - q(\check{Y}_{u^-}^{t,y, X}, z)\big)^c 
            - \big(\check{Y}_{u^-}^{t,y', X} - \check{Y}_{u^-}^{t,y, X}\big)^c
        \right] 
        M(\mathrm{d}u, \mathrm{d}z) \\
        &\quad + \int_{\mathbb{R}} 
        c \big(\check{Y}_{u^-}^{t,y', X} - \check{Y}_{u^-}^{t,y, X}\big)^{c - 1} 
        \big(q(\check{Y}_{u^-}^{t,y', X}, z) - q(\check{Y}_{u^-}^{t,y, X}, z)\big) 
        \lambda_u \nu(\mathrm{d}z) \, \mathrm{d}u,
    \end{aligned}
$}
\end{equation*}
for all $u\in[t, T]$ and $c>0$. Using Assumption \ref{lipschitz} that for all jump times $u_k\in[t,\tau[$, 
\begin{equation*}
\begin{split}
\E[|\gamma_{u_k}|^c] \leq |y^{\prime}-y|+L'\E\Big[\int_0^{u_k} |{Y}_{u^-}^{t,y, X}-{Y}_{u^-}^{t,y^{\prime}, X}|^c\, \ud u\Big].
\end{split}
\end{equation*}
with $L'>0$. Thanks to the Grönwall inequality, we can then assert that
\begin{equation*}
    \E[|\gamma_{u_k}|^c] \leq |y^{\prime}-y| e^{L T}.
\end{equation*}
Applying the inequality $(1 + x)^\gamma - 1 \leq \gamma x + C_\gamma x^\gamma$ with $\gamma = \frac{\beta+1}{\beta} > 1$ and $C_\gamma := \sup_{x > 0} \frac{(1 + x)^\gamma - 1 - \gamma x}{x^\gamma}\geq 0$ for $x > -1$ (see \textcite[Chapter 3]{HardyInequalities}), we obtain
\begin{equation}
\label{eq:CBI-bound}
\rho_y(\xi)=\sum_{i=1}^m C^y_i \, \mathbb{E}\Big[\Big(\big(1 + \frac{\xi}{F_i(a)}\big)^{\gamma} - 1\Big)^2\Big]^{\frac{1}{2}}
\leq \sum_{i=1}^m C^y_i \Big( \frac{2\gamma^2 \, \mathbb{E}[\xi^2]}{F_i(a)^2}
+ \frac{2C_\gamma^2 \, \mathbb{E}[|\xi|^{2\gamma}]}{F_i(a)^{2\gamma}} \Big)^{\frac{1}{2}}.
\end{equation}
Lemma \ref{unifcont_phi} combined with \eqref{ineq_R2} and iterated expectations yields
\begin{equation*}
\begin{split}
R_2&\leq \E \Big[\sum_{t\leq u_k<
\tau}\rho_y(\gamma_{u_k})\Big]\leq T \rho'_y(|y^{\prime}-y|),
\end{split}
\end{equation*}
where $\rho'_y:\xi\mapsto  \sum_{i=1}^m C^y_i \Big( \frac{2\gamma^2 \, \mathbb{E}[\xi^2]}{F_i(a)^2}
+ \frac{2C_\gamma^2 \, \mathbb{E}[|\xi|^{2\gamma}]}{F_i(a)^{2\gamma}} \Big)^{\frac{1}{2}}$. Similarly, we have that $$R_3\leq \rho'_y(|y^{\prime}-y|)$$ since $y\mapsto \Phi_{I_{u}}(y)$ is non-decreasing on $\RR$, and $\check{Y}_{\tau^-}^{t,y,X}\leq \check{Y}_{\tau^-}^{t,y^{\prime},\widetilde X}\leq \check{Y}_{\tau^-}^{t,y^{\prime}, X}$, for $u<\tau$. The second inequality follows directly from Lemma \ref{comparison_prop_Y}. We turn to finding an upper bound for $R_4$. Notice that we have $R_5=0$ when $y^\prime\leq y$. If $y^\prime> y$, then $\E[\xi]\leq y^\prime-y$. In this case, 
\begin{equation*}
R_4 \leq\E \Big[\Phi_{I_{T}}(Y_{T}^{t,y, X}+\gamma_{T}+\xi)-\Phi_{I_{T}}(Y_{T}^{t,y, X})\Big].
 \end{equation*}
 Observe that $\xi$ satisfies the conditions of Lemma \ref{unifcont_phi} due to the strong solution result in Proposition \ref{existence_uniqueness_SDE}, and that the admissible controls have bounded variations. Therefore,
 \begin{equation*}\begin{split}
R_4 & \leq\E \Big[\rho_{y}(\gamma_{\tau}+\gamma_{T})\Big]\leq \rho'_{y}(2|y^{\prime}-y|).\end{split}
 \end{equation*}
We conclude this proof by recalling that $\zeta\to\rho_y(\zeta)$ is uniformly continuous on $\RR_+$, for all $y\geq 0$.
\paragraph{3.} Finally, we demonstrate the continuity of $v$ in $t$. Let $x\in [0,\midbar{X}]$,
$y\geq 0$ and $0\leq u<t$. Now, consider a strategy $X\in\mathcal{A}_u(x)$ such that $$v_i(u,x,y)+\eps \geq C_i(u,x,y,X).$$
In the following, we will construct a strategy $\widehat X\in\mathcal{A}_s(x)$ that consists of not doing anything on $[s,t[$ and then following the strategy $X$ similarly to the previous proof. We introduce the $\cF$-stopping time $\tau$, defined by
$$\tau:=\inf\big\{u\geq t:\ \check{Y}^{u,y,\widehat X}_{u^-}\leq y\big\}\wedge T.$$
Let $\widehat X$ define an admissible strategy in $\mathcal{A}_t(x)$, such that
$$
\ud \widehat X_u =\left\lbrace\begin{array}{lll}
\big(\Delta X_\tau+y-\check{Y}^{u,y,\widehat X}_{\tau^-}\big)\wedge \big(\midbar X-x\big), & \textrm{ if } u=\tau,\\
\ud  X_u, & \textrm{ if } \tau< u<\theta,\\
\midbar X-\widehat X_{u^-}, & \textrm{ if } u=\theta,\\
0, & \textrm{ if } \theta< u\leq T,
\end{array}\right.
$$
where 
$$\theta:=\inf\big\{u\geq \tau:\ \widehat X_{u^-}+\Delta X_u\geq \midbar X\big\}\wedge T.$$
Considering an $\eps$-optimal strategy $X\in\mathcal{A}_t(x)$, we can show that
\begin{equation*}
\begin{split}
v_i(u,x,y)-v_i(t,x,y) & \leq C_i(u,x,y,\widehat X)-C_i(t,x,y,X)+\eps\\
 & \leq \E\Big[C_i(t,x,{Y}^{u,y,\widehat X}_{t^-},\widehat X)-C_i(t,x,y,X)\Big]+\eps\\
 & \leq f(\eps),
 \end{split}
 \end{equation*}
 with $\lim_{\varepsilon\rightarrow 0} f(\varepsilon)=0$.
 
 We can combine the uniform continuity of $v_i$ in $x$ with respect to $t$, in $y$ uniformly with respect to $t$ and $x$, and in $t$ to get joint continuity on $\midbar \cS$.
\end{proof}
\begin{prop}
\label{differentiable_vf}
    The value function $v$ defined in \eqref{value_function}
is differentiable almost everywhere on $\midbar{\cS}$ and has finite one-derivatives everywhere.
\end{prop}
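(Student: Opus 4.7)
My plan combines the coordinate-wise monotonicity of Proposition \ref{lemma_decreasing} with sharper Lipschitz-type refinements of the continuity estimates proved in Theorem \ref{continuous_v}, closing with an application of Rademacher's theorem. The first step is pointwise and uses only one-variable reasoning: for each fixed $i\in\I_m$, Proposition \ref{lemma_decreasing} shows that $v_i$ is monotone in each of $t$, $x$, $y$ separately, so the classical one-dimensional theorem of Lebesgue on monotone functions, applied slice-by-slice, immediately gives that the right- and left-sided partial derivatives $\partial_t^\pm v_i$, $\partial_x^\pm v_i$, $\partial_y^\pm v_i$ exist at every point of $\midbar{\cS}$ as elements of $[-\infty,+\infty]$, and that the ordinary partial derivatives exist almost everywhere along each axis.

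To upgrade the one-sided derivatives in the spatial variables to finite values everywhere, I would revisit the moduli of continuity used in the proof of Theorem \ref{continuous_v} and observe that they are in fact linear near the origin. For the $x$-direction, the bound $|v_i(t,x',y)-v_i(t,x,y)|\leq \rho_y(|x'-x|)$ together with $\rho_y(\xi)=\E[\Phi_{I_T}(Y_T^{t,y,X}+\xi)-\Phi_{I_T}(Y_T^{t,y,X})]\leq \xi\,\E[\psi_{I_T}(Y_T^{t,y,X}+\xi)]$ and the local boundedness of $\psi_i$ gives local Lipschitzness, hence finite $\partial_x^\pm v_i$. For the $y$-direction, the modulus $\rho'_y(\xi)$ from the second paragraph of the proof of Theorem \ref{continuous_v} satisfies $\rho'_y(\xi)=O(\xi+\xi^\gamma)$ with $\gamma=(\beta+1)/\beta>1$, so $\rho'_y(\xi)/\xi$ remains bounded as $\xi\to 0^+$, which again yields local Lipschitzness and finite $\partial_y^\pm v_i$.

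For the $t$-direction, I would apply a dynamic-programming/Markov extension argument: given $0\leq t<t+h\leq T$ and an $\varepsilon$-optimal $X\in\cA_{t+h}(x)$, extend $X$ to $\cA_t(x)$ by imposing no trading on $[t,t+h)$; conditioning on $(I_{t+h},Y^{t,y,\emptyset}_{(t+h)^-})$ via the strong Markov property gives
\begin{equation*}
v_i(t,x,y)\leq \sum_{j\in\I_m} P_{ij}(t,t+h)\,\E\!\left[v_j\!\left(t+h,x,Y^{t,y,\emptyset}_{(t+h)^-}\right)\right]+\varepsilon.
\end{equation*}
Combining $P_{ij}(t,t+h)=\delta_{ij}+Q_{ij}(t)h+o(h)$, the drift bound $\E[Y^{t,y,\emptyset}_{t+h}-y]=O(h)$, and the local Lipschitzness of $v_j$ in $y$ established above, together with a careful absorption of the $O(\sqrt h)$ martingale fluctuation via the Lipschitz-in-$y$ constant, yields $|v_i(t+h,x,y)-v_i(t,x,y)|\leq Ch$, and hence finite $\partial_t^\pm v_i$. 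Once local Lipschitzness holds in all three variables, Rademacher's theorem applied on $\mathrm{int}(\midbar{\cS})$ gives differentiability almost everywhere, and continuity extends the conclusion to the closure.

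The main obstacle is the Lipschitz-in-$t$ estimate: the naive bound $\E|Y^{t,y,\emptyset}_{t+h}-y|=O(\sqrt h)$ coming from the Brownian and Poisson components gives only Hölder-$\tfrac12$ regularity when combined with the Lipschitz constant in $y$, so linear decay in $h$ has to be recovered by a finer argument---either isolating the mean-zero contribution of the martingale part and bounding it via a test-function/generator computation, or exploiting that $y\mapsto v_i$ is monotone so that a Jensen-type inequality can be used against the drift term to neutralize the $\sqrt h$ order.
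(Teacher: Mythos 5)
Your handling of the $x$- and $y$-directions is essentially the paper's own argument: the paper also deduces finiteness of $\partial_x^{\pm}v_i$ and $\partial_y^{\pm}v_i$ from the fact that the moduli constructed in the proof of Theorem \ref{continuous_v} are linear near the origin (it computes $\lim_{h\to 0^+}\rho'_y(h)/h=\sum_{i=1}^m C^y_i\sqrt{2}\gamma/F_i(a)<+\infty$), and your variant bound $\rho_y(\xi)\leq \xi\,\E\big[\psi_{I_T}(Y^{t,y,X}_T+\xi)\big]$, made effective through Assumption \ref{carnetinfini} and the moment estimate of Proposition \ref{existence_uniqueness_SDE}, is a legitimate alternative. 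The genuine gap is the $t$-direction, which you flag but do not close, and the step you propose would in fact fail as written: the comparison ``do nothing on $[t,t+h)$ and then act optimally'' gives, via Lemma \ref{prog_dyn}, an \emph{upper} bound on $v_i(t,x,y)$ in terms of $\E\big[v(t+h,\cdot)\big]$, hence it only controls $v_i(t,x,y)-v_i(t+h,x,y)$ --- a quantity that is already nonpositive by Proposition \ref{lemma_decreasing}. What the statement requires is the opposite estimate $v_i(t+h,x,y)-v_i(t,x,y)\leq Ch$, i.e.\ the construction, from an $\varepsilon$-optimal strategy in $\cA_t(x)$, of a strategy in $\cA_{t+h}(x)$ whose expected cost exceeds it by at most $O(h)$ (a time-shift/truncation surgery near the horizon), and neither your sketch nor the suggested ``mean-zero martingale'' or ``Jensen against the drift'' fixes carries this out; the $O(\sqrt h)$ issue you mention is secondary to this direction problem. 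Because your route to almost-everywhere differentiability is Rademacher's theorem, which needs joint local Lipschitz continuity in all three variables including $t$, the missing $t$-estimate also undercuts that half of your argument as written.

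The a.e.-differentiability half can be rescued exactly as the paper does it, with no Lipschitz input: continuity together with the coordinatewise monotonicity of Proposition \ref{lemma_decreasing} already yields differentiability almost everywhere (the paper invokes Lebesgue's theorem for monotone functions). So the irreducible missing piece in your proposal is the finiteness of the one-sided $t$-derivatives everywhere; the paper disposes of it by asserting that the same modulus argument used for $x$ applies to $t$, whereas your proposal replaces that assertion by an argument that, in its current form, bounds the wrong difference and is acknowledged to be incomplete.
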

\begin{proof}
    The value function $v_i$ is continuous and monotonic on $\midbar\cS$, and thus, by Lebesgue’s Theorem, it is differentiable almost everywhere for all $i\in\I_m$. Therefore, there exists a set of Lebesgue measure zero where $v_i$ may not be differentiable. We denote the right and left partial derivatives of $v_i$ with respect to $x$ and $y$ as $\partial^{\pm}_x v_i$ and $\partial^{\pm}_y v_i$. Based on the proof of Theorem \ref{continuous_v}, we have that
 \begin{equation*}
 0\leq   \frac{v_i(t,x+h,y)-  v_i(t,x,y)}{h}\leq  \frac{\rho'_{y}(h)}{h}, \quad \forall (t,x,y,h) \in \midbar\cS\times \RR_+.
 \end{equation*}
 Moreover, it follows that $$\lim_{h\to 0^+}\frac{\rho'_{y}(h)}{h} = \sum_{i=1}^m C^y_i \frac{\sqrt{2}\gamma }{F_i(a)}<+\infty,$$
which implies that $\partial_x^+ v_i$ is finite on $\midbar S$. Similarly, $\partial_x^- v_i$ is finite. Hence, $v_i$ is differentiable in $x$ on $\midbar\cS$. The same argument applies to $t$ and $y$, establishing differentiability in all variables on 
$\midbar\cS$.
\end{proof}
\section{Viscosity Characterization of the Value Function}
\label{viscosity_section}
This section aims to provide a PDE characterization of the value functions $v$. 
\begin{lemma}[Dynamic programming principle]
\label{prog_dyn}
    For any stopping time $\tau$ in $[t,T]$, we have
\begin{equation*}\begin{split}
v_i(t,x,y)=&\\\inf_{X\in\mathcal{A}_t(x)}\expec\Big[&\int_t^\tau\psi_{I_{u}}(\check{Y}^{t,y,X}_{u^-})\mathrm{d}X^c_u+\sum_{t\leq
u\leq\tau} \big(\Phi_{I_{u}}(Y^{t,y,X}_u)-\Phi_{I_{u}}(\check{Y}^{t,y,X}_{u^-})\big)+v(\tau,I_{\tau},X_{\tau}, {Y}_{\tau}^{t,y,X})\Big],
\end{split}
\end{equation*}
where $(t,x,y)\in\cS$ and $i\in \I_m$.
\end{lemma}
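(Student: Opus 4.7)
The plan is to establish the two inequalities separately, as is standard for dynamic programming principles, with the main subtlety being the concatenation of controls at the stopping time $\tau$. Throughout, I will write $J_i(t,x,y,X)$ for the expression
\begin{equation*}
\EE\Big[\int_t^\tau\psi_{I_{u^-}}(\check{Y}^{t,y,X}_{u^-})\,\mathrm{d}X^c_u+\sum_{t\leq u\leq\tau}\bigl(\Phi_{I_{u}}(Y^{t,y,X}_u)-\Phi_{I_{u^-}}(\check{Y}^{t,y,X}_{u^-})\bigr)+v_{I_{\tau}}(\tau,X_{\tau},Y_{\tau}^{t,y,X})\Big],
\end{equation*}
so the lemma claims $v_i(t,x,y)=\inf_{X\in\mathcal{A}_t(x)} J_i(t,x,y,X)$.

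For the inequality $v_i(t,x,y)\geq \inf_X J_i(t,x,y,X)$, I would fix any $X\in\mathcal{A}_t(x)$ and decompose the cost functional $C_i(t,x,y,X)$ as the sum of its contribution on $[t,\tau]$ and on $(\tau,T]$. Conditioning on $\mathcal{F}_\tau$, the strong Markov property of the joint process $(Y,I)$ (which is justified by the well-posedness of \eqref{controlled_volume_effect_process} from Proposition \ref{existence_uniqueness_SDE} and the Markov structure of $I$) gives that the conditional expectation of the tail cost, minimized over the restriction $X|_{[\tau,T]}\in \mathcal{A}_\tau(X_\tau)$, is by definition $v_{I_\tau}(\tau,X_\tau,Y^{t,y,X}_\tau)$. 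Since the given $X$ need not be optimal on the tail, this yields the pointwise bound $\EE[C_i(t,x,y,X)\mid\mathcal{F}_\tau]\geq \bigl(\textrm{integrals over }[t,\tau]\bigr)+v_{I_\tau}(\tau,X_\tau,Y^{t,y,X}_\tau)$. Taking expectations and the infimum over $X$ produces the desired inequality.

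For the reverse inequality, the strategy is an $\varepsilon$-optimality/pasting argument. Fix $\varepsilon>0$ and any $X\in\mathcal{A}_t(x)$. I would like to concatenate $X$ on $[t,\tau]$ with an $\varepsilon$-optimal continuation started from the random data $(\tau,I_\tau,X_\tau,Y^{t,y,X}_\tau)$. The classical way is to use continuity of the value function established in Theorem \ref{continuous_v}, together with a measurable selection theorem (e.g., Ch.~7 of Bertsekas--Shreve): one partitions $\overline{\mathcal S}\times\I_m$ into small Borel cells, picks one $\varepsilon$-optimal strategy per cell using separability of the path space and a shift by translating times, and patches them into an $\mathcal{F}$-adapted control $\widetilde X$ agreeing with $X$ on $[t,\tau]$ and admissible on $[\tau,T]$ (with the constraint $\widetilde X_T=\overline X$ preserved because the cell-wise controls all lift the remaining inventory). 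By construction, $\EE[C_i(t,x,y,\widetilde X)]\leq J_i(t,x,y,X)+\varepsilon$, and by admissibility $v_i(t,x,y)\leq \EE[C_i(t,x,y,\widetilde X)]$. Letting $\varepsilon\downarrow 0$ and then taking the infimum over $X$ gives the second inequality.

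The main obstacle is the measurable selection of the $\varepsilon$-optimal continuation, compounded by the singular/jump nature of admissible controls and the need to guarantee that the concatenated process $\widetilde X$ remains right-continuous, non-decreasing, $\mathcal{F}$-adapted, and satisfies $\widetilde X_T=\overline X$ almost surely; the jump of $X$ possibly occurring at $\tau$ itself requires a careful bookkeeping (distinguishing $\check Y^{t,y,X}_{\tau^-}$ from $Y^{t,y,X}_\tau$) so that no cost term is double-counted between the two regimes of the decomposition. Continuity of $v$ and the growth estimate \eqref{growthcondition} ensure that the pasted cost depends continuously enough on the initial condition to make the $\varepsilon$-approximation uniform on compact subsets, which is what allows the measurable selection to go through.
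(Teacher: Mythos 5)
Your proposal is correct and takes essentially the same route as the paper: the paper's proof is a one-line citation to Bouchard and Touzi together with the continuity of $v$ from Theorem \ref{continuous_v}, and your two-sided argument (conditioning on $\mathcal{F}_\tau$ and the flow/Markov property of $(Y,I)$ for the inequality ``$\geq$'', pasting of $\varepsilon$-optimal continuations at $\tau$ for ``$\leq$'') is precisely the standard argument that citation encapsulates. The only remark is that, once continuity of $v$ is available, the cell-partition construction you describe already produces an adapted pasted control, so the appeal to a Bertsekas--Shreve measurable selection theorem is redundant; avoiding that selection step is exactly what the Bouchard--Touzi formulation invoked by the paper, combined with Theorem \ref{continuous_v}, buys.
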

\begin{proof}
    This is an adaptation of the results from \textcite{bouchard_touzi}, leveraging the continuity of the value functions $v_i$ established in Theorem \ref{continuous_v}.
\end{proof}
Building on the problem formulation \ref{problem_formulation}, we consider the following Hamilton-Jacobi-Bellman Quasi-Variational Inequalities for $v$, for all $i\in\I_m$,
\begin{equation}
\label{hjb}
\max \big(-\frac{\partial v_i}{\partial t}-\mathcal{L} v_i-\sum_{j\neq i}(v_j-v_i)Q_{ij},-\frac{\partial v_i}{\partial x}-\frac{\partial v_i}{\partial y}-\psi_i\big)=0,~\text{on}~\cS.
\end{equation}
The partial integro-differential operator $\mathcal{L}$ is given by 
\begin{equation*}
    \begin{split}
    \mathcal{L}\varphi&:= \frac{1} {2}\sigma^2(y)\frac{\partial^2 \varphi}{\partial y^2}-h(y)\frac{\partial \varphi}{\partial y} +
\lambda_t \int_{\R}
\left( \varphi(t,x, y+ q(y,z)) - \varphi \right) \nu(\ud z).\end{split}
\end{equation*}
The boundary/terminal conditions here are specified in \eqref{boundarycondition}. 
\begin{rque}
Vectors are assumed to be $m$-dimensional unless stated otherwise, and all operations are performed component-wise. Deviations from these notations must be explicitly specified.
\end{rque}
\subsection{Viscosity Solution Property}
 As the value function $v$ may lack smoothness, it is formulated as a discontinuous viscosity solution of a quasi-variational inequalities \eqref{hjb} with appropriate boundary conditions, following \textcite{users_guide}.

\begin{defi}[Viscosity solution]\label{viscosity_def}We define a viscosity solution of \eqref{hjb} as follows~:
   \begin{enumerate}
       \item $v$ is a continuous viscosity supersolution (resp. subsolution) of \eqref{hjb} on $\I_m\times \cS$ if it satisfies the growth conditions \eqref{growthcondition}, and if
\begin{equation}  \label{hjbthm}
  \max\Big(-\big(\frac{\partial \varphi}{\partial t}+\mathcal{L} \varphi+\sum_{j\neq i}(v_j-v_i)Q_{ij}\big)(t, x, y),\ -\  \big(\frac{\partial \varphi}{\partial x}+\frac{\partial \varphi}{\partial y}+\psi_i\big)(t, x, y)\Big)\geq 0\,(\text{resp.} \leq),
\end{equation}
for any $(i, t, x, y) \in \I_m\times\cS$ and any smooth test function $\varphi \in C^{1,2}(\cS)$ such that $(v_i - \varphi)$ attains a local minimum (resp. maximum) at $(t, x, y)$ over the set $[t, t+\delta[ \times [x, x+\delta[ \times B_\delta(y) \subset \mathcal{S}$ for some $\delta > 0$, with $(v_i - \varphi)(t, x, y) = 0$.
\item $v$ is a continuous viscosity solution on $\I_m\times \cS$ if it is both a viscosity supersolution and subsolution of \eqref{hjb}.
\end{enumerate}
\end{defi}
Next, we prove that the value function is a viscosity solution as defined in Definition \ref{viscosity_def}.
\begin{theo}\label{viscosity_sub}
The value function $v$ defined in \eqref{value_function} is a viscosity subsolution of \eqref{hjb}.
\end{theo}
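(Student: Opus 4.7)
The plan is to verify the two inequalities implicit in the maximum in \eqref{hjbthm} separately, by means of the dynamic programming principle (Lemma \ref{prog_dyn}) applied to two complementary families of test strategies. Fix $(i,t,x,y)\in\I_m\times\cS$ and $\varphi\in C^{1,2}(\cS)$ such that $v_i-\varphi$ attains a local maximum equal to zero at $(t,x,y)$ on some neighborhood $[t,t+\delta[\times[x,x+\delta[\times B_\delta(y)$. I would establish separately that $-(\partial_x\varphi+\partial_y\varphi+\psi_i)(t,x,y)\leq 0$ (intervention part) and $-(\partial_t\varphi+\mathcal{L}\varphi+\sum_{j\neq i}(v_j-v_i)Q_{ij})(t,x,y)\leq 0$ (generator part).

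For the intervention part, I would feed into the DPP the suboptimal strategy consisting of an immediate block purchase of size $\eta\in(0,\delta)$ at time $t$, so that the controlled state jumps from $(x,y)$ to $(x+\eta,y+\eta)$ at time $t$. Taking $\tau=t$ in Lemma \ref{prog_dyn} then yields
\begin{equation*}
v_i(t,x,y)\leq \big[\Phi_i(y+\eta)-\Phi_i(y)\big]+v_i(t,x+\eta,y+\eta).
\end{equation*}
Using $v_i(t,x+\eta,y+\eta)\leq\varphi(t,x+\eta,y+\eta)$ (from the local maximum), the equality $v_i(t,x,y)=\varphi(t,x,y)$, and the representation $\Phi_i(y+\eta)-\Phi_i(y)=\int_y^{y+\eta}\psi_i(\zeta)\,d\zeta$ from \eqref{definitionPhi}, I would divide by $\eta$ and let $\eta\to 0^+$ to conclude.

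For the generator part, I would apply the DPP with the \emph{do-nothing} strategy $X_u\equiv x$ on the stochastic interval $[t,\tau_h]$, where $\tau_h:=\tau^I\wedge(t+h)\wedge\tau^Y$, with $\tau^I$ the first jump time of $I$ after $t$, $\tau^Y:=\inf\{u>t:|Y^{t,y,0}_u-y|\geq\delta/2\}$, and $h\in(0,\delta)$ small. Since $X$ is constant, the running cost and jump sums vanish, giving $v_i(t,x,y)\leq\EE[v_{I_{\tau_h}}(\tau_h,x,Y^{t,y,0}_{\tau_h})]$. Partitioning according to $\{\tau^I>\tau_h\}$ and $\{\tau^I\leq\tau_h\}$, bounding $v_i\leq\varphi$ on the former event, and rearranging produces
\begin{equation*}
0\leq\EE\big[\varphi(\tau_h,x,Y^{t,y,0}_{\tau_h})-\varphi(t,x,y)\big]+\EE\big[(v_{I_{\tau^I}}-v_i)(\tau^I,x,Y^{t,y,0}_{\tau^I})\1_{\{\tau^I\leq\tau_h\}}\big]+o(h),
\end{equation*}
where the $o(h)$ term absorbs the difference between $\varphi$ and $v_i$ at the regime jump. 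Itô's formula for jump-diffusions applied to $\varphi(u,x,Y^{t,y,0}_u)$ from $t$ to $\tau_h$ rewrites the first expectation as $\EE\big[\int_t^{\tau_h}(\partial_t\varphi+\mathcal{L}\varphi)(u,x,Y^{t,y,0}_{u^-})\,du\big]$ (the stochastic integrals vanish in expectation). Dividing by $h$ and letting $h\to 0^+$, the first term converges to $(\partial_t+\mathcal{L})\varphi(t,x,y)$ by dominated convergence, while the second converges to $\sum_{j\neq i}(v_j-v_i)(t,x,y)Q_{ij}(t)$ via the intensity representation of $I$ combined with the continuity of $v$ (Theorem \ref{continuous_v}).

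The main technical obstacle is the clean separation, in the limit as $h\to 0^+$, of the contributions of the Markov chain transitions and of the jump-diffusion dynamics of $Y$. Since $\varphi$ only compares with the $i$-th component $v_i$, the transitions of $I$ to other regimes must be carried through the estimates via the full vector $v$ and the generator $Q$, and their limiting contribution must be controlled using the continuity of $v$. Assumption \ref{no_mutual_cov} is essential here, as it guarantees that the jump sources driving $I$ and $Y$ do not interfere, so that the Markov chain and the compensated Poisson integrals contribute independently to the limiting generator.
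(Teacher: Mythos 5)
Your overall strategy is the same as the paper's: test the two inequalities in the max separately via the dynamic programming principle, using an immediate block purchase of size $\eta$ for the gradient constraint and a do-nothing strategy over a short horizon for the generator inequality, then apply It\^o's formula and pass to the limit. The intervention part of your argument is essentially identical to the paper's treatment (the paper buys $\eta$ and waits $\varepsilon$, then sends $\varepsilon\to 0$ first, which reduces to your one-line inequality), and your extraction of the regime-switching term by conditioning on the first jump of $I$ is an acceptable variant of the paper's Dynkin-formula argument for the auxiliary function $\Psi(\cdot,j)$.

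There is, however, a genuine gap in your generator part, at the step ``bounding $v_i\leq\varphi$ on the former event.'' You stop at $\tau_h=\tau^I\wedge(t+h)\wedge\tau^Y$ with $\tau^Y$ the exit time of $Y^{t,y,0}$ from $B_{\delta/2}(y)$. Because $Y$ has jumps driven by $M$, on the event $\{\tau_h=\tau^Y\}$ the terminal state $Y_{\tau_h}$ can overshoot and land outside $B_\delta(y)$, where the local maximum property gives you no comparison between $v_i$ and $\varphi$. This event is not negligible at the relevant order: since $\nu$ is finite by Assumption \ref{q_moments_monotonicity}, the probability of an $M$-jump in $[t,t+h]$ is of order $h$, so after dividing by $h$ and letting $h\to 0^+$ your inequality picks up an uncontrolled extra term of the form $\lambda_t\int_{\{|q(y,z)|>\delta\}}(v_i-\varphi)(t,x,y+q(y,z))\,\nu(\ud z)$, which has no sign (the definition in the paper evaluates the nonlocal operator on $\varphi$, not on $v$, so you cannot absorb it there either). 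The paper avoids this entirely by evaluating the DPP at a stopping time $\tau_0$ chosen \emph{strictly} before both the exit time of the controlled state from the ball and the first regime switch, so that the terminal comparison $v_i\leq\varphi$ (and the identification $I=i$) always takes place inside the neighborhood; the regime-switching and jump contributions then arise from the It\^o/Dynkin formula rather than from evaluating $v$ at post-jump states. Your proof would be repaired by adopting the same device (or, equivalently, by stopping strictly before $\tau^Y$ and $\tau^I$ and controlling the residual events separately); as written, the passage to the limit in the generator inequality does not go through.
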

\begin{proof}
   Refer to Appendix \ref{proof_subsolution}.
\end{proof}
\begin{theo}\label{viscosity_super}
The value function $v$ defined in \eqref{value_function} is a viscosity supersolution of \eqref{hjb}.
\end{theo}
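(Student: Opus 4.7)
The plan is to argue by contradiction. Fix $(i,t_0,x_0,y_0)\in\I_m\times\cS$ and a test function $\varphi\in C^{1,2}(\cS)$ such that $v_i-\varphi$ attains a local minimum at $(t_0,x_0,y_0)$ over $[t_0,t_0+\delta)\times[x_0,x_0+\delta)\times B_\delta(y_0)$ with zero value, and assume that at this point both
\[
A[\varphi] := -\Big(\partial_t\varphi+\mathcal{L}\varphi+\sum_{j\neq i}(v_j-v_i)Q_{ij}\Big) \quad \textrm{and}\quad B[\varphi] := -\big(\partial_x\varphi+\partial_y\varphi+\psi_i\big)
\]
are strictly negative. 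By the $C^{1,2}$-regularity of $\varphi$, the continuity of $v$ from Theorem~\ref{continuous_v}, and the left-continuity of $\psi_i$, I then pick $\eta>0$ and $\delta'\in(0,\delta]$ such that $A[\varphi]<-\eta$ and $B[\varphi]<-\eta$ hold on the one-sided neighbourhood $N:=[t_0,t_0+\delta')\times[x_0,x_0+\delta')\times B_{\delta'}(y_0)$ while the chain is in state $i$.

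For $\eps>0$, the DPP (Lemma~\ref{prog_dyn}) supplies an $\eps$-optimal $X^\eps\in\mathcal{A}_{t_0}(x_0)$, and I set $\tau:=\tau_I^1\wedge(t_0+h)\wedge\tau_N$ with $\tau_I^1$ the first jump of $I$ after $t_0$, $h\in(0,\delta')$ to be chosen, $\tau_N$ the exit time of $(u,X^\eps_u,Y^\eps_u)$ from $N$, and $Y^\eps:=Y^{t_0,y_0,X^\eps}$. On $\{\tau<\tau_I^1\}$ the chain stays at $i$ and the local-minimum property gives $v_i(\tau,\cdot)\ge\varphi(\tau,\cdot)$; on $\{\tau=\tau_I^1\}$ the dual predictable projection of the Markov-chain jump martingale yields
\[
\EE\big[\mathbf{1}_{\tau=\tau_I^1}(v_{I_\tau}-v_i)(\tau,X^\eps_\tau,Y^\eps_\tau)\big]=\EE\Big[\int_{t_0}^\tau\sum_{j\neq i}(v_j-v_i)(u,X^\eps_u,Y^\eps_u)\,Q_{ij}(u)\,du\Big].
\]
Applying Itô's formula to $\varphi(\cdot,X^\eps,Y^\eps)$ on $[t_0,\tau)$, using the identity $\varphi(u,x+\Delta,y+\Delta)-\varphi(u,x,y)=\int_0^\Delta(\partial_x+\partial_y)\varphi\,ds$ for $X^\eps$-jumps (which also induce $Y^\eps$-jumps of the same size) and absorbing the $M$-jumps into $\mathcal{L}$ through its compensator, combining with the cost decomposition of $C_i$, and cancelling $\varphi(t_0,x_0,y_0)=v_i(t_0,x_0,y_0)$, I arrive at
\[
\eps\ge\EE\Big[\int_{t_0}^\tau(-A[\varphi])\,du+\int_{t_0}^\tau(-B[\varphi])(\check Y^\eps_{u^-})\,dX^{\eps,c}_u+\sum_{\Delta X^\eps_u>0}\int_0^{\Delta X^\eps_u}(-B[\varphi])(Y^\eps_{u^-}+s)\,ds\Big].
\]

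Inserting the bounds $-A[\varphi]\ge\eta$ and $-B[\varphi]\ge\eta$ on $N$ reduces this to $\eps\ge\eta\,\EE[(\tau-t_0)+(X^\eps_\tau-x_0)]$, and the contradiction will follow from a uniform-in-$\eps$ positive lower bound on the latter expectation. I distinguish two cases: if $\Delta X^\eps_{t_0}\ge\delta'$ then $\tau=t_0$ but $X^\eps_\tau-x_0\ge\delta'$; otherwise the process starts inside $N$, and the event $\{\tau_I^1>t_0+h\}\cap\{\sup_{u\le t_0+h}|Y^\eps_u-y_0|<\delta'\}\cap\{X^\eps_\tau-x_0<\delta'\}$ carries probability bounded below for $h$ small enough, thanks to the boundedness of $Q$, the $L^p$-estimates of Proposition~\ref{existence_uniqueness_SDE}, and assumption (A3) for the Poisson-driven fluctuations, yielding $\tau-t_0\ge h$ on this event. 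Letting $\eps\downarrow 0$ contradicts $\eps\ge\eta\,c(h,\delta')>0$ and proves $\max(A[\varphi],B[\varphi])\ge 0$. The main obstacle I foresee is precisely this last bookkeeping: for singular controls without a priori regularity, one has to carefully balance the parameters $(h,\delta',\eta)$ and control the probability that $Y^\eps$ exits $B_{\delta'}(y_0)$ through an $M$-driven jump before either time elapses or $X^\eps$ is forced to act; the left-continuity of $\psi_i$ at the $X^\eps$-jump increments also requires care when turning the integral inequality into a pointwise one.
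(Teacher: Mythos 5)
Your skeleton is the same as the paper's: argue by contradiction, propagate the strict inequalities $A[\varphi]<-\eta$ and $B[\varphi]<-\eta$ to a neighbourhood, combine the DPP for ($\eps$-)optimal controls with It\^o's formula and the regime-switching compensator, and extract a contradiction from a positive lower bound on the expected elapsed time plus purchased quantity. The genuine gap lies in the passage to your master inequality $\eps\ge\eta\,\EE[(\tau-t_0)+(X^\eps_\tau-x_0)]$. Two ingredients used there are only valid while the state remains in $N$: (i) the bound $-B[\varphi]\ge\eta$ along the jump segment $\{(X^\eps_{u^-}+s,\check Y^\eps_{u^-}+s):0\le s\le\Delta X^\eps_u\}$, and (ii) the comparison $v_i(\tau,X^\eps_\tau,Y^\eps_\tau)\ge\varphi(\tau,X^\eps_\tau,Y^\eps_\tau)$ on $\{\tau<\tau^I_1\}$. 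For the jump of $X^\eps$ that actually triggers the exit from $N$ — in particular in your ``Case 1'', where $\Delta X^\eps_{t_0}\ge\delta'$ — the post-jump state and most of the jump segment lie outside $N$, so neither (i) nor (ii) is available, and the inequality you then feed into the case analysis has not been established precisely for the controls for which your lower bound on $\EE[(\tau-t_0)+(X^\eps_\tau-x_0)]$ is easiest. This is not the bookkeeping issue you flag at the end (exit through $M$-jumps, left-continuity of $\psi_i$); it is a missing device, and without it the contradiction does not close.

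The paper's proof supplies exactly this device: it only exploits the strict inequalities along the portion of the terminal jump that stays inside the ball, by introducing an $\cF_{\tau_1}$-measurable $\gamma\in[0,1]$ such that $\check Y_{\tau_1^-}+\gamma\Delta X_{\tau_1}$ lands on the boundary of the band $[y_0-\eps_0,y_0+\eps_0]$, and then re-applies the DPP at this intermediate point to bound $v_{i_0}$ there by $v_{i_0}$ at the post-jump state plus the $\Phi$-cost of the remaining portion of the jump. The resulting inequality carries the truncated term $\gamma\Delta X_{\tau_1}\1_{\{\cdot\}}$ in place of your $X^\eps_\tau-x_0$, and the contradiction is then obtained along a minimizing sequence: $\tau_1^n\to t_0$, $X^n_{\tau_1^{n-}}\to x_0$ and $\check Y^n_{\tau_1^{n-}}\to y_0$ in probability force the exit to occur through an $X$-jump crossing the band (the chain does not switch and the $W$- and $M$-driven fluctuations are negligible over a vanishing time window), so the truncated jump is of order $\eps_0$ with probability tending to one, contradicting that its expectation vanishes. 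To repair your argument you would need to incorporate this truncation-plus-DPP step (or an equivalent treatment of the overshoot), after which your two-case analysis essentially collapses into the paper's concluding probabilistic estimate; note also that in your Case 2 the event you invoke involves the controlled process $Y^\eps$, whose fluctuations depend on $X^\eps$, so the $L^p$-estimate of Proposition \ref{existence_uniqueness_SDE} alone does not give a control-uniform lower bound on its probability without the same localization argument.
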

\begin{proof}
    Refer to Appendix \ref{proof_supersolution}.
\end{proof}
\subsection{Uniqueness Result}
The continuity of $v$ established in Theorem \ref{continuous_v}, together with the results of Theorems \ref{viscosity_sub} and \ref{viscosity_super}, and the at most polynomial growth of the power function, show that $v$ is a viscosity solution of \eqref{hjb}. Although the techniques used to prove the comparison theorem are standard, the regime switching and jump terms render it nontrivial, motivating us to present the full argument here.
\begin{theo}
[Strong comparison principle]
\label{comparaison} If $v_i$ is a continuous viscosity subsolution of \eqref{hjb} and
$w_i$
is a continuous viscosity supersolution of \eqref{hjb}, such that 
$$v_i(t,\midbar{X},y)\leq w_i(t,\midbar{X},y),~~\text{and}~~ v_i(T, x,y)\leq
w_i(T, x,y),$$
for all $(i,t,x,y)\in \I_m\times\midbar \cS$, then $v_i \leq w_i$ on $\cS$.
\end{theo}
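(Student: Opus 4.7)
My plan is to argue by contradiction using the classical doubling-of-variables technique, adapted to our setting with regime switching, jumps, and an obstacle constraint. Suppose $\theta := \sup_{i\in\I_m}\sup_{(t,x,y)\in\cS}(v_i-w_i)(t,x,y) > 0$; by the boundary and terminal data hypotheses, the supremum must be attained only in the interior in both $t$ and $x$. Since Assumption \ref{carnetinfini} yields $\psi_i(y)\leq C(1+y^{1/\beta})$, both $v_i$ and $w_i$ have at most polynomial growth in $y$. I would first regularize the subsolution by setting $\tilde v_i^\eta(t,x,y) := v_i(t,x,y) - \eta\, e^{\kappa(T-t)}(1 + y^{2\gamma})$ for some $\gamma > 1/\beta$ and $\kappa>0$ sufficiently large. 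Using Assumptions \ref{q_moments_monotonicity}--\ref{lipschitz}, a direct computation shows that $\tilde v_i^\eta$ is a \emph{strict} viscosity subsolution, with both branches of the $\max$ in \eqref{hjbthm} satisfying an inequality whose right-hand side is bounded above by $-c\eta$ for some $c>0$.

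Next, I would double variables through
\[
\Psi_\epsilon(i,t,x,y,s,a,b) := \tilde v_i^\eta(t,x,y) - w_i(s,a,b) - \tfrac{1}{2\epsilon}\bigl(|t-s|^2 + |x-a|^2 + |y-b|^2\bigr).
\]
The polynomial-growth penalty ensures $\Psi_\epsilon$ attains its supremum at some $(i_\epsilon,t_\epsilon,x_\epsilon,y_\epsilon,s_\epsilon,a_\epsilon,b_\epsilon)$ in a compact set; standard arguments give vanishing penalty differences as $\epsilon\to 0$ and, along a subsequence, convergence of all points to a common interior limit $(i^*,t^*,x^*,y^*)$ with $(\tilde v_{i^*}^\eta - w_{i^*})(t^*,x^*,y^*) = \theta_\eta > 0$. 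By maximality one has $v_j - w_j \leq v_{i^*} - w_{i^*}$ at this limit for every $j$, so using $Q_{i^*j}\geq 0$ for $j\neq i^*$,
\[
\sum_{j\neq i^*} Q_{i^*j}\bigl[(v_j-v_{i^*}) - (w_j-w_{i^*})\bigr](t^*,x^*,y^*) \leq 0,
\]
which absorbs the regime-switching coupling and reduces the analysis to a single-regime comparison along the sequence.

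I would then invoke the non-local version of Ishii's lemma (in the spirit of \textcite{users_guide} together with the integro-differential adaptation for PIDEs) at the maximizer, obtaining first- and second-order jets for $\tilde v_{i_\epsilon}^\eta$ and $w_{i_\epsilon}$ whose difference is controlled by $1/\epsilon$ times the penalty, and a usable comparison of the non-local terms after splitting $\nu$ into near- and far-jump parts. Writing the subsolution inequality for $\tilde v_{i_\epsilon}^\eta$ and the supersolution inequality for $w_{i_\epsilon}$, I split into two cases according to which branch of the supersolution's $\max$ is non-negative. In the PIDE branch, the Lipschitz continuity of $h,\sigma,q$ bounds $\mathcal{L}\tilde v_{i_\epsilon}^\eta - \mathcal{L}w_{i_\epsilon}$ by $\omega(\epsilon) + L\,\Psi_\epsilon$; combined with the regime-switching bound and the strict $-c\eta$ margin this yields $c\eta \leq \omega(\epsilon) + o(1)$, contradicting $\eta>0$ as $\epsilon\to 0$. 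In the obstacle branch, the penalty gradients $(x_\epsilon-a_\epsilon)/\epsilon$ and $(y_\epsilon-b_\epsilon)/\epsilon$ appear identically on both sides, so subtracting gives $0 \leq -c\eta$, again a contradiction. Letting $\eta\to 0$ yields $\theta \leq 0$.

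The main obstacle is the careful treatment of the non-local jump integral within Ishii's lemma: one must split the measure $\nu$ at a scale shrinking with $\epsilon$ so that the near-jump contribution is dominated by the Hessian bound provided by Ishii's lemma and the second-moment condition in Assumption \ref{q_moments_monotonicity}, while the far-jump contribution is controlled by the global maximality of $\Psi_\epsilon$ evaluated along jumps of size $q(y,z)$. One must further choose the exponent $2\gamma$ of the penalty large enough relative to $1/\beta$ that the non-local operator applied to $y\mapsto (1+y^{2\gamma})$ retains the strict subsolution margin under Assumption \ref{linear_growth}. A secondary subtlety lies in the obstacle branch, where $\psi_i$ is only left-continuous: one must exploit that the penalty forces $y_\epsilon, b_\epsilon \to y^*$ with controllable relative position, so that the gap $\psi_{i_\epsilon}(b_\epsilon) - \psi_{i_\epsilon}(y_\epsilon)$ vanishes in the limit, or equivalently work with the upper-semicontinuous envelope of $\psi_i$.
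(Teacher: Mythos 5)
Your overall architecture — perturb to a strict subsolution, double variables, invoke Ishii's lemma, split according to which branch of the supersolution's max is non-negative, and absorb the regime-switching coupling via the global maximality over $i$ — is the same as the paper's. There are, however, two concrete problems, one of which is a genuine gap.

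\textbf{The perturbation does not give a strict subsolution in the obstacle branch.} You subtract $\eta\, e^{\kappa(T-t)}(1+y^{2\gamma})$, which is independent of $x$ and \emph{increasing} in $y$. Write $\phi(t,y) := e^{\kappa(T-t)}(1+y^{2\gamma})$ and $\tilde v^\eta_i = v_i - \eta\phi$. Then
\[
-\partial_x\tilde v^\eta_i - \partial_y\tilde v^\eta_i - \psi_i = \bigl(-\partial_x v_i - \partial_y v_i - \psi_i\bigr) + \eta\,\partial_y\phi ,
\]
since $\partial_x\phi = 0$. The first bracket is only $\leq 0$, and $\partial_y\phi = 2\gamma e^{\kappa(T-t)}y^{2\gamma-1} > 0$, so you \emph{add} a positive quantity to the obstacle term. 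You therefore cannot conclude that the obstacle branch is $\leq -c\eta$; at points where the original obstacle constraint is tight you even lose the subsolution property altogether. The exponential-in-time factor only helps the PDE branch (through $-\partial_t\tilde v^\eta = -\partial_t v + \eta\kappa\phi$); it does nothing for $-\partial_x-\partial_y$. The paper avoids this by choosing a perturbation with explicit $x$-dependence,
$\varphi_i(t,x,y) = -e^{-ct}\bigl((-a_1x+a_2)y^\beta - b_1x + b_2\bigr)$,
so that $\partial_x\varphi_i + \partial_y\varphi_i = e^{-ct}(a_1 y^\beta + b_1 - \beta(-a_1x+a_2)y^{\beta-1})$ is bounded away from zero uniformly in $y$ for $b_1$ large. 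The linear-in-$x$ term supplies the constant offset, and the $a_1 x y^\beta$ term supplies the growth that cancels $\partial_y\varphi_i$ at large $y$; your perturbation has neither. Without this, the case where the supersolution's obstacle branch is active cannot be closed, so the proof as written does not go through and requires replacing the penalty by one decreasing fast enough in the direction $\partial_x+\partial_y$.

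\textbf{Minor differences, not gaps.} Your near/far decomposition of $\nu$ in the non-local part is unnecessary here: Assumption \ref{q_moments_monotonicity} gives $\nu(\mathbb{R})<\infty$, i.e. finite jump activity, so one can (as the paper does) control $\cI[v_{i,m}] - \cI[w_i]$ directly by the global maximality of the doubled-variable functional, yielding $\cI[H_0]\leq 0$ without any truncation. Also, you double all of $(t,x,y)$, whereas the paper doubles only $y$; both are valid, but since $t$ and $x$ enter only through first-order terms and those penalty gradients cancel between sub- and supersolution, doubling in $y$ alone is cleaner and avoids the extra limits you would need to control. Finally, your observation about the left-continuity-only of $\psi_i$ in the obstacle branch is a real subtlety; the paper simply asserts continuity of $\psi_i$ at the limit point, and a careful writeup would need the control of the relative positions of $\hat y_k$ and $\hat y_k'$ that you mention.
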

\begin{proof}
Let $\beta>0$ such that
\begin{equation}
    \label{alpha_comparaison}
\lim_{y\to+\infty} \max_{i\in I_m}\frac{\Phi_i(y+\midbar{X}-x)-\Phi_i(y)}{y^\beta}=0,\quad \forall (x, y) \in [0,\midbar{X}]\times
\RR_+.\end{equation}
The existence of such a constant $\beta$ follows from Assumption \ref{carnetinfini}. Let $i\in\I_m$, and define $\varphi_i:\midbar\cS\rightarrow\RR$ such that
$$\varphi_i(t,x,y):=-e^{-c t}\big((-a_1x+a_2)y^\beta-b_1x+b_2\big),\quad \forall (t, x, y) \in\midbar\cS,$$
where $a_1$, $a_2$, $b_1$, $b_2$ and $c$ are positive constants satisfying
$a_2>a_1\midbar{X}$ and $b_2>b_1\midbar{X}$. Additionally, we set $a(x):=-a_1x+a_2>0$, for all $x\in[0,\midbar{X}] $. Fix an arbitrary point $z := (t,x,y)\in\midbar \cS$. Hence, we obtain the following inequality
\begin{equation*}
\begin{split}
\big(\frac{\partial \varphi_i}{\partial
t}+\mathcal{L}\varphi_i\big)(z) & = e^{-c
t}\big(c(a(x)y^\beta-b_1x+b_2)-\frac{\beta(\beta-1)}{2}a(x)\sigma(y)^2y^{\beta-2}\big)\\
&\quad +e^{-c t}\big(-\beta h(y)a(x)y^{\beta-1}+\lambda_t
a(x)\int_{\mathbb{R}_+}((y+q(y,z))^\beta-y^\beta)\, \nu(\ud z)\big)\\
& \geq a(x)e^{-c
t}y^\beta\big(c-\beta\frac{h(y)}{y}-\frac{\beta(\beta-1)}{2}\frac{\sigma(y)^2}{y^{2}}-\bar\lambda\int_{\RR_+}\nu(\ud z)\big)+c
e^{-cT} \big(-b_1\midbar{X}+b_2\big)\\
    & \geq c e^{-cT} \big(-b_1\midbar{X}+b_2\big).
\end{split}
\end{equation*} 
The final passage follows from the linear growth assumption \ref{linear_growth} on $h$ and $\sigma$ and is true for $c$ big enough. Moreover, knowing that $\psi_i\geq 0$, we have
$$\big(\frac{\partial \varphi_i}{\partial x}+\frac{\partial
\varphi_i}{\partial y}+\psi_i\big)(z)=e^{-ct}\big(a_1 y^\beta+b_1-\beta a(x)y^{\beta-1}\big)+\psi_i(y)\geq
\varepsilon,$$
for $\eps>0$ and $b_1$ big enough compared to $a_1$ and $a_2$. Define $v_{i,m}:\midbar\cS\rightarrow\RR$ such that
$$v_{i,m}:= v_i + \frac1m \varphi_i,$$
with $m\in\NN^*$. Combining \eqref{alpha_comparaison} with the continuity of $\varphi_i$ on
$\midbar \cS$, we get that
$$\lim_{y\to +\infty}\frac{1}{m}\varphi_i(t,x,y)+\Phi_i(y+\midbar{X}-x)-\Phi_i(y) =
-\infty,\quad\forall (t,x)\in[ 0,T] \times[ 0,\midbar{X}].$$
Note that $\sum_{j\neq i}(\varphi_j-\varphi_i)Q_{ij} = 0$. Therefore, $v_{i,m}$ is a strict subsolution of equation \eqref{hjbthm} in the sense that it satisfies
\begin{equation}
    \label{ineq_strict_sub}
  \max\Big(-\frac{\partial v_{i,m}}{\partial
  t}-\mathcal{L}v_{i,m}-\sum_{j\neq i}(\varphi_j-\varphi_i)Q_{ij},\  -\frac{\partial v_{i,m}}{\partial x}-\frac{\partial
  v_{i,m}}{\partial y}-\psi_i\Big)\leq-\frac\varepsilon m<0,\text{ on }\cS.
\end{equation}
\paragraph{Toward Ishii's lemma.}
Our goal is to show that $\varrho := \sup_{z\in S} v_{i,m}(z)-w_i(z) \leq 0$. Suppose on the contrary that
$\varrho > 0$. Based on the growth condition \eqref{growthcondition}, we know that $$v_i(z)\leq \Phi_i(y+\midbar{X}-x)-\Phi_i(y),\quad \forall z\in\cS.$$ Therefore, 
$$\lim_{x
\to \midbar{X}} v_{i,m} - w_i\leq 0.$$
Additionally, using the previous results, we get, 
\begin{equation*} \lim_{y
\to +\infty} v_{i,m} - w_i = -\infty \mbox{, }\mbox{ and } \lim_{ t \to T} v_{i,m} -
w_i \leq 0.\end{equation*} This means that the supremum is attained an interior point $z_0:=(t_0,x_0,y_0) \in \midbar\cO\subset\cS$, with $\cO:=]0,T[ \times
]0,\midbar{X}[\times ]0,\midbar Y[$. In other words, we have $\varrho
= v_{i,m}(z_0)-w_i(z_0)$. Let $k\geq 1$ and define, for all $(t,x,y,y')\in\cS\times\RR_+$, $$H_k(t,x,y,y') := v_{i,m}(t,x,y) - w_i(t,x,y')
-d_k(y,y'),$$ where $d_k(y,y'):= \frac{k}{2}|y-y'|^2$. Define $\varrho_k :=
\underset{\midbar\cO\times [0,\midbar Y]}{\sup} H_k(t,x,y,y')$. Since $H_k$ is continuous and coercive, its supremum is
attained at some point $(\hat{t}_k,\hat{x}_k,\hat{y}_k, \hat{y}'_k)
\in \midbar\cO\times [0,\midbar Y]$. By means of the Bolzano–Weierstrass theorem, there exists a subsequence $(\hat{t}_{n_k},\hat{x}_{n_k},\hat{y}_{n_k}, \hat{y}'_{n_k})$ that converges to a point
$(\hat{t}_0,\hat{x}_0,\hat{y}_0, \hat{y}'_0)\in\midbar\cO\times [0,\midbar Y]$ as $k \to +\infty$. In the following, we will continue using $k$ as an index instead of $n_k$ to avoid the proliferation of indices. For $k$ large
enough, we can then assume that $\hat{t}_k < T,$ and $\hat{x}_k > 0$. Consider the following inequality
$$ H_k(\hat{t}_0,\hat{x}_0,\hat{y}_0,\hat{y}_0) \leq
H_k(\hat{t}_k,\hat{x}_k,\hat{y}_k, \hat{y}'_k).$$
In particular, we have
\begin{equation*}
\frac{k}2 |\hat{y}_k - \hat{y}'_k|^2 \leq - v_{i,m}(\hat{t}_0,\hat{x}_0,\hat{y}_0)
+ w_i(\hat{t}_0,\hat{x}_0,\hat{y}_0) + v_{i,m}( \hat{t}_k,\hat{x}_k,\hat{y}_k) -
w_i(\hat{t}_k,\hat{x}_k,\hat{y}'_k).
\end{equation*} As $v_{i,m}$ and $w_i$ are continuous on the compact set $\midbar \cO$, there exists $C>0$ such that
\begin{equation}
\label{comparaison_diff_limit}
|\hat{y}_k - \hat{y}'_k|^2  \leq  \frac{C}k. 
\end{equation}
Letting $k$ go to $+\infty$, we find $\hat{y}_0 = \hat{y}'_0$. Finally, we show that $\varrho_k$ tends to $\varrho$ when $k$ goes to $+\infty$. Note that $$\varrho=v_{i,m}(z_0) - w_i(z_0)= 
H_k(t_0, x_0, y_0, y_0)\leq H_k(\hat{t}_k,\hat{x}_k,\hat{y}_k,\hat{y}'_k).$$ Therefore, $\varrho
\leq \varrho_k$. Moreover, we
have
\begin{equation*}
\varrho_k = v_{i,m}(\hat{t}_k,\hat{x}_k,\hat{y}_k) - w_i(\hat{t}_k,\hat{x}_k,\hat{y}'_k) -
\frac{k}{2} |\hat{y}_k - \hat{y}'_k|^2 \leq v_{i,m}(\hat{t}_k,\hat{x}_k,\hat{y}_k) -
w_i(\hat{t}_k,\hat{x}_k,\hat{y}'_k).
\end{equation*} Since $v_{i,m}$ and $w_i$ are continuous on $\cS$, we get that $$\lim_{k\rightarrow+\infty}v_{i,m}(\hat{t}_k,\hat{x}_k,\hat{y}_k) -
w_i(\hat{t}_k,\hat{x}_k,\hat{y}'_k) = v_{i,m}(\hat{t}_0,\hat{x}_0,\hat{y}_0) -
w_i(\hat{t}_0,\hat{x}_0,\hat{y}_0) \leq \varrho.$$ We conclude that $\lim_{k\rightarrow+\infty}\varrho_k = \varrho$ and
$\lim_{k\rightarrow+\infty} |\hat{y}_k - \hat{y}'_k|^2 = 0$. Moreover, we
have $$v_{i,m}(\hat{t}_0,\hat{x}_0,\hat{y}_0) - w_i(\hat{t}_0,\hat{x}_0,\hat{y}_0) = \varrho.$$
\paragraph{Ishii's lemma.} We now apply Theorem $3.2$ from \textcite{users_guide} at the point 
$(\hat{t}_k, \hat{x}_k, \hat{y}_k, \hat{y}'_k)$, ensuring the existence of 
$M, M' \in \mathbb{R}$ such that
 \begin{equation*} \begin{split} 
 \left(
   \begin{array}{cc}
      - {k} - \|A\| & 0 \\
     0 &  - {k} - \|A\| \\
   \end{array}
 \right)
 \leq \left(\begin{array}{cc}
        M & 0 \\
        0 & -M' \\
\end{array}
\right) \leq A + \frac{1}{k} A^2, \end{split} \end{equation*} with $A  = D^2 d_k (\hat{y}_k,\hat{y}'_k) = \left(
\begin{array}{cc}                              k & -k \\
    -k & k \\
\end{array}
\right)$. Let $\cK$ and $\cI$ be the operators defined by
\begin{equation}
\label{operators}
    \cK[p, M ] (y) := \frac{\sigma^2(y)}{2} M -h(y)p,~~\text{and}~~\cI[g](t,x,y) := \int_{\mathbb{R}}(g(t,x,y+ q(y,\zeta))-g(t,x,y))\nu(\ud\zeta).\end{equation}
Using the relationship between superjets and Definition \ref{viscosity_def} of viscosity supersolutions, along with inequality \eqref{ineq_strict_sub}, we deduce from Ishii's Lemma that
\begin{equation}
\label{inegalitesishii1}
\varepsilon/m \leq  \min\Big( \mathcal{K}[  \frac{\partial d_k}{\partial
y}(\hat{y}_k,\hat{y}'_k),M ]  (\hat{y}_k)+ \lambda_{\hat{t}_k}
\cI[v_{i,m}](\hat{t}_k,\hat{x}_k,\hat{y}_k) ,\ \frac{\partial d_k}{\partial y}
(\hat{y}_k,\hat{y}'_k)+\psi_i(\hat{y}_k)\Big),
\end{equation}
\begin{equation}
\label{inegalitesishii2}
0  \geq  \min\Big(-\mathcal{K}[\frac{\partial d_k}{\partial
y'}(\hat{y}_k,\hat{y}'_k),-M' ]  (\hat{y}'_k)+ \lambda_{\hat{t}_k} \cI[w_i](\hat{t}_k,
\hat{x}_k,\hat{y}'_k),\ -\frac{\partial d_k}{\partial y'}
(\hat{y}_k,\hat{y}'_k)+\psi_i(\hat{y}'_k)\Big).
\end{equation}
From the second inequality, we have two cases:
\begin{enumerate}
\item $-\frac{\partial d_k}{\partial y'} (\hat{y}_k,\hat{y}'_k)+\psi_i(\hat{y}'_k)\leq 0.$
\item $-\mathcal{K}[\frac{\partial d_k}{\partial y'},-M' ] 
(\hat{y}'_k)+ \lambda_{\hat{t}_k} \cI[w_i](\hat{t}_k, \hat{x}_k,\hat{y}'_k)\leq 0.$
\end{enumerate}
In the first case, we have $\psi_i(\hat{y}'_k)\leq \frac{\partial d_k}{\partial y'} (\hat{y}_k,\hat{y}'_k)$. Applying inequality \eqref{inegalitesishii1}, it follows that 
\begin{equation*}
\begin{split}
\frac  \varepsilon m &\leq 
\frac{\partial d_k}{\partial y} (\hat{y}_k,\hat{y}'_k)+\psi_i(\hat{y}_k)-\psi_i(\hat{y}'_k)+\psi_i(\hat{y}'_k)\\
& \leq 
\frac{\partial d_k}{\partial y} (\hat{y}_k,\hat{y}'_k)+\frac{\partial d_k}{\partial y'} (\hat{y}_k,\hat{y}'_k)+\psi_i(\hat{y}_k)-\psi_i(\hat{y}'_k)\\
& \leq \psi_i(\hat{y}_k)-\psi_i(\hat{y}'_k).
\end{split}
\end{equation*}
Since $\psi_i$ is continuous at $\hat{y}_0$, it follows that $\lim_{k\rightarrow+\infty}\psi_i(\hat{y}_k)-\psi_i(\hat{y}'_k) = 0$, which leads to a contradiction. In the second case, applying inequality \eqref{inegalitesishii1} once more yields
\begin{equation*}
\begin{split}
&\frac{1}{2}(M\sigma(\hat{y}_k)^2-M^\prime\sigma(\hat{y}'_k)^2)-k(h(\hat{y}_k)-h(\hat{y}'_k))(\hat{y}_k-\hat{y}'_k) +\lambda_{\hat{t}_k} \left(\cI[v_{i,m}](\hat{t}_k,
\hat{x}_k,\hat{y}_k)- \cI[w_i](\hat{t}_k,
\hat{x}_k,\hat{y}'_k)\right)
 \\
&~ = \mathcal{K}[  \frac{\partial d_k}{\partial
y}(\hat{y}_k,\hat{y}'_k),M ] (\hat{y}_k) + \lambda_{\hat{t}_k}
\cI[v_{i,m}](\hat{t}_k,\hat{x}_k,\hat{y}_k) +\mathcal{K}[\frac{\partial d_k}{\partial y'}(\hat{y}_k,\hat{y}'_k),-M'](\hat{y}'_k)- \lambda_{\hat{t}_k} \cI[w_i](\hat{t}_k,
\hat{x}_k,\hat{y}'_k)
\\&~\geq \frac \varepsilon m .
\end{split}
\end{equation*}
Using the continuity of $h$, $\lambda$, $v_{i,m}$ and $w_i$, we deduce that
\begin{equation*}
\begin{split}
\frac \varepsilon m & \leq
\lambda_{\hat{t}_0} \left(\cI[v_{i,m}]- \cI[w_i]\right)(\hat{t}_0,
\hat{x}_0,\hat{y}'_0)+\lim_{k\to+\infty}\frac{1}{2}\Big\langle\left(\begin{array}{cc}M& 0\\ 0 &
-M^\prime\end{array}\right)(\sigma(\hat{y}_k),\sigma(\hat{y}'_k))^T,(\sigma(\hat{y}_k),\sigma(\hat{y}'_k))\Big\rangle \\
& \leq\lambda_{\hat{t}_0} \cI[H_0](\hat{t}_0,\hat{x}_0,\hat{y}_0)+\lim_{k\to+\infty}\frac{1}{2}\Big\langle\big(A+\frac{1}{k}
A^2\big)(\sigma(\hat{y}_k),\sigma(\hat{y}'_k))^T,(\sigma(\hat{y}_k),\sigma(\hat{y}'_k))\Big\rangle ,
\end{split}
\end{equation*}
with $H_0(t,x,y):= H_0(t,x,y,y)$, for all $(t,x,y)\in\cS$. By definition of $\rho = H_0(\hat{t}_0,\hat{x}_0,\hat{y}_0)$, we get $$H_0(\hat{t}_0,\hat{x}_0,\hat{y}_0+ q(y_0,\zeta))-H_0(\hat{t}_0,\hat{x}_0,\hat{y}_0)\leq 0,\quad \forall \zeta\in\RR.$$ Consequently, it follows that $\cI[H_0](\hat{t}_0,\hat{x}_0,\hat{y}_0)\leq 0$ and
\begin{equation*}
\begin{split}
\frac \varepsilon m&\leq\lim_{k\to+\infty} \frac{1}{2}\Big\langle\big(A+\frac{1}{k}
A^2\big)(\sigma(\hat{y}_k),\sigma(\hat{y}'_k))^T,(\sigma(\hat{y}_k),\sigma(\hat{y}'_k))\Big\rangle = \lim_{k\to+\infty}\frac{3}{2}k(\sigma(\hat{y}_k)-\sigma(\hat{y}'_k))^2\leq \frac{3}{2}C.
\end{split}
\end{equation*}
 The last equality follows from the Lipschitz continuity of $\sigma$ and inequality \eqref{comparaison_diff_limit}. Since $\eps$ is arbitrary, this results in a contradiction, thereby concluding the proof.
\end{proof}
\subsection{The Free Boundary Problem}
The value function $v_i$ is differentiable almost everywhere by Proposition \ref{differentiable_vf}. Let $\cN_i$ denote the set of Lebesgue measure zero where $v_i$ may not be differentiable. We now define the associated free boundary.
\begin{defi}
\label{regions_def}
    We define the \textit{exercise region} $\cE_i := \midbar{\textrm{int}(\cE^{diff}_i)}$ as the closure of the interior of the set $\cE^{diff}_i$, with
 \begin{equation*}
        \cE^{diff}_i := \big\{(t,x,y)\in \cS\,\backslash\,\cN_i:  -\frac{\partial v_i}{\partial x}-\frac{\partial v_i}{\partial y}-\psi_i=0\big\},
 \end{equation*}
and we define the \textit{continuation region} $\cC_i := \midbar\cS\,\backslash\, \cE_i$ as its complement. \end{defi}
We now study the geometry of the free boundary $\partial \cE_i:=\midbar \cC_i\cap \cE_i$ separating the continuation region $\cC_i$ and the exercise region $\cE_i$. 
\begin{lemma}[Partial smooth-fit principle]
\label{monotonie_limit_boundary}
For all $(t_0, x_0, y_0) \in \overline{\cS}\,\backslash\,\cN_i$, the mappings
$$
y \mapsto \partial_x v_i(t_0, x_0, y)
~~\text{and}~~
x \mapsto \partial_y v_i(t_0, x, 0)
$$
are continuous almost everywhere on $\RR_+$ and on the interval $[0,\midbar X]$, respectively.
\end{lemma}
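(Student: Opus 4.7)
The plan is to deduce the almost-everywhere continuity of each mapping from the fact that it is monotone: on $\RR$, any monotone function has at most countably many discontinuities, so it is automatically continuous almost everywhere. The monotonicity will in turn be obtained from a submodularity property of $v_i$ in $(x,y)$, which is the rigorous counterpart of the intuition that the marginal cost of purchasing one additional share is larger when the deviation process $D_t=\psi_i(Y_t)$ is already larger.

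The central step is to prove that, for each fixed $(t_0,i)\in[0,T]\times\I_m$ and any $0\le x_1\le x_2\le\midbar X$, $0\le y_1\le y_2$,
\begin{equation*}
v_i(t_0,x_1,y_1)+v_i(t_0,x_2,y_2)\;\le\;v_i(t_0,x_1,y_2)+v_i(t_0,x_2,y_1).
\end{equation*}
I would establish this by a coupling argument in the spirit of the proofs of Proposition \ref{lemma_decreasing} and the continuity in $y$ of Theorem \ref{continuous_v}. Given $\eps$-optimal strategies $X^{12}\in\cA_{t_0}(x_1)$ for $v_i(t_0,x_1,y_2)$ and $X^{21}\in\cA_{t_0}(x_2)$ for $v_i(t_0,x_2,y_1)$, one constructs admissible candidates $\widetilde X^{11}\in\cA_{t_0}(x_1)$ and $\widetilde X^{22}\in\cA_{t_0}(x_2)$ for the two diagonal corners. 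The candidates share the same Brownian motion $W$, Poisson measure $M$ and chain $I$; the pathwise ordering of the volume-effect process from Lemma \ref{comparison_prop_Y} ensures that the controlled trajectories remain correctly ordered, and the residual mismatches between the coupled paths are absorbed into a final lump-sum trade at time $T$. The cost differences produced by this coupling are controlled using Lemma \ref{unifcont_phi} and the modulus $\rho'_y$ derived in the continuity proof, exactly as in the estimate of $R_2,R_3,R_4$ there.

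Granted submodularity, fix $(t_0,x_0)\in[0,T]\times[0,\midbar X]$ and, for $h>0$, consider the difference quotient
\begin{equation*}
y\;\longmapsto\;\frac{v_i(t_0,x_0+h,y)-v_i(t_0,x_0,y)}{h}.
\end{equation*}
Submodularity directly yields that this quotient is non-increasing in $y$. Since by Proposition \ref{differentiable_vf} the one-sided derivatives $\partial_x^{\pm}v_i$ are finite everywhere, passing to the limit $h\to 0^{+}$ (resp.\ $h\to 0^{-}$) shows that $y\mapsto\partial_x^{+}v_i(t_0,x_0,y)$ and $y\mapsto\partial_x^{-}v_i(t_0,x_0,y)$ are non-increasing on $\RR_+$. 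Each is therefore continuous outside an at most countable set; at every point of simultaneous continuity the two one-sided derivatives must coincide, so $\partial_x v_i(t_0,x_0,\cdot)$ is well-defined and continuous almost everywhere on $\RR_+$. The same scheme, applied to the difference quotient $x\mapsto h^{-1}(v_i(t_0,x,h)-v_i(t_0,x,0))$ which submodularity shows to be non-increasing in $x$, gives the monotonicity of $x\mapsto\partial_y^{\pm}v_i(t_0,x,0)$ on $[0,\midbar X]$ and hence its continuity almost everywhere.

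The main obstacle is the construction of the coupled admissible strategies $\widetilde X^{11},\widetilde X^{22}$. Preserving admissibility (the terminal constraint $X_T=\midbar X$ and the non-decreasing, right-continuous nature of the control) in the presence of jumps of $X$, of the Poisson-driven jumps of $Y$, and of regime switches of $I$ requires the same type of stopping-time construction used in part 3 of the proof of Theorem \ref{continuous_v}; the potential discontinuities of $\psi_i$ and $\Phi_i$ force all cost comparisons to be written in integrated form and controlled through the modulus $\rho'_y$, which is the place where Assumption \ref{carnetinfini} enters once again in an essential way.
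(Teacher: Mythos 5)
Your route is genuinely different from the paper's. The paper works directly with the difference quotients $D_h(t,x,y)=h^{-1}\big(v_i(t,x+h,y)-v_i(t,x,y)\big)$, invokes the modulus of continuity in $y$ obtained in the proof of Theorem~\ref{continuous_v} to claim equicontinuity of $\{D_h\}_h$, and then applies Arzel\`a--Ascoli to upgrade the pointwise a.e.\ limit $\partial_x v_i$ to a uniform one, concluding a.e.\ continuity in $y$. You instead propose to prove that $v_i$ is submodular in $(x,y)$, so that $y\mapsto D_h(t_0,x_0,y)$ is monotone; letting $h\to 0^{\pm}$ then makes each one-sided derivative $\partial_x^{\pm}v_i(t_0,x_0,\cdot)$ monotone, and a monotone function on an interval has at most countably many discontinuities. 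Conceptually this is an attractive alternative, and it has the merit of not resting on an equicontinuity-in-$h$ assertion.

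The gap is that the submodularity inequality is never actually proved. Your coupling sketch produces ``residual mismatches'' that you propose to control via the modulus $\rho'_y$ of Lemma~\ref{unifcont_phi}, mimicking the estimates of $R_2,R_3,R_4$ in the proof of Theorem~\ref{continuous_v}. But such a bound only yields an approximate inequality $v_i(t_0,x_1,y_1)+v_i(t_0,x_2,y_2)\le v_i(t_0,x_1,y_2)+v_i(t_0,x_2,y_1)+\rho'_y(\,\cdot\,)$, and the error term does not vanish: it is governed by the fixed gaps $x_2-x_1$ and $y_2-y_1$, not by the $\eps$ of $\eps$-optimality, so you cannot send it to zero. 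To get the exact inequality you would need a pathwise Topkis-type swap (e.g.\ $\widetilde X^{11}=X^{12}\wedge X^{21}$, $\widetilde X^{22}=X^{12}\vee X^{21}$) in which the realized costs add up exactly, and it is far from clear that the nonlinearity of $\Phi_i$, the jumps of $M$ and the regime switches of $I$ allow such a path-by-path identity. The central step of your argument is therefore missing.

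A secondary gap sits at the end: monotonicity of $\partial_x^{+}v_i(t_0,x_0,\cdot)$ and of $\partial_x^{-}v_i(t_0,x_0,\cdot)$ gives that each is continuous outside a countable set, but it does not by itself give that the two coincide at a common continuity point, which is what you need for the genuine derivative $\partial_x v_i(t_0,x_0,\cdot)$ to be defined and continuous there. You would need an extra ingredient --- for instance convexity of $v_i$ in $x$ (not established in the paper), or a Fubini argument showing that the slice $\{y:(t_0,x_0,y)\in\cN_i\}$ is Lebesgue-null for a.e.\ $(t_0,x_0)$, so that the derivative exists a.e.\ in $y$ and is trapped between the two monotone one-sided quantities.
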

\begin{proof}
     We know that $v_i$ is jointly continuous on $\midbar\cS$ and uniformly continuous in the variable $y$. Additionally, the map $x \mapsto v_i(t, x, y)$ is differentiable for almost each $x \in [0,\bar X]$. Define, for $h \ne 0$, 
$$
D_h(t, x, y) := \frac{v_i(t, x+ h, y) - v_i(t, x, y)}{h},\quad \forall z\in\midbar\cS\,\backslash\,\cN_i.
$$
Based on the proof of Theorem \ref{continuous_v}, we have that
$$|D_h(t, x, y) - D_h(t, x, y')|\leq \frac{w(|y-y'|)}{|h|},$$
with $w$ a continuous function on $\RR_+$. Hence, the family 
$\{D_h\}_h$ is equicontinuous in $y$ on compact subsets of $\midbar\cS$. Given the pointwise convergence $D_h \underset{h\to 0}{\to} \partial_x v_i$ almost everywhere, the equicontinuity and uniform boundedness in $y$, the Arzelà-Ascoli Theorem ensures uniform convergence in $y$ over compact subsets of $\midbar\cS$, for almost every $x \in [0, \midbar{X}]$. Hence, $\partial_x v_i$ is the uniform limit of continuous functions in $y$ and $y \mapsto \partial_x v_i(t, x, y)$ is continuous almost everywhere on $\RR_+$. 

Similarly, the map $x \mapsto \partial_y v_i(t, x, 0)$ is continuous almost everywhere on $[0,\midbar X]$. This follows from a combination of Rademacher’s Theorem and the fact that the difference quotients for $(t,x)\mapsto\partial_y v_i(t,x,0)$ converge uniformly in $x$ on compact sets due to the joint continuity of $v_i$ and uniform bound in $x$ with respect to $t$. This concludes the proof.
    \end{proof}
\begin{prop}[Connectedness]
\label{connected}
Assume that $\psi_i$ is continuous, that the interiors of $\cC_i$ and $\cE_i$ are non-empty, and that $\sigma + \lambda > 0$. Then, for each $i \in \I_m$, the free boundary $\partial \cE_i$ is non-empty and path-connected. Moreover, if $(t, x, y) \in \cE_i$, then
$$
(t, x, y') \in \cE_i \quad \text{for all } 0 \leq y' \leq y, \quad \text{and} \quad (t, x', 0) \in \cE_i \quad \text{for all } 0 \leq x' \leq x.
$$
\end{prop}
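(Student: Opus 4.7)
The plan is to establish the three assertions separately: first non-emptiness of $\partial \cE_i$, then the two monotonicity statements, and finally path-connectedness. Non-emptiness is immediate: since $\mathrm{int}(\cC_i)$ and $\mathrm{int}(\cE_i)$ are disjoint, open, and both non-empty by hypothesis, while $\midbar{\cS}$ is connected, any continuous path from an interior point of $\cC_i$ to one of $\cE_i$ must cross $\partial\cE_i = \midbar{\cC_i} \cap \cE_i$.

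For the monotonicity in $y$ at fixed $(t,x)$, I would introduce
\[
g_i(t,x,y) := \partial_x v_i(t,x,y) + \partial_y v_i(t,x,y) + \psi_i(y),
\]
which, by Theorem~\ref{viscosity_sub} and Proposition~\ref{differentiable_vf}, is nonnegative a.e.\ with $(t,x,y)\in\cE_i^{\mathrm{diff}}$ iff $g_i(t,x,y) = 0$. Combining the continuity of $\psi_i$ with the partial smooth-fit Lemma~\ref{monotonie_limit_boundary} gives that $y \mapsto g_i(t,x,y)$ is continuous almost everywhere. The core step is to show that $g_i(t,x,y) = 0$ implies $g_i(t,x,y') = 0$ for all $y' \in [0,y]$. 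My approach is to translate the exercise condition, via the dynamic programming principle (Lemma~\ref{prog_dyn}) and the structure of singular controls, into the existence of some $\delta^* > 0$ (possibly in a limiting sense) with
\[
v_i(t,x,y) = v_i(t, x+\delta^*, y+\delta^*) + \int_0^{\delta^*} \psi_i(y+\zeta)\, d\zeta,
\]
and then apply the comparison Lemma~\ref{comparison_prop_Y}, monotonicity of $\psi_i$, and monotonicity of $v_i$ in $y$ (Proposition~\ref{lemma_decreasing}) to force the analogous inequality at $y'$ to saturate. Passing to closures handles the transition from $\cE_i^{\mathrm{diff}}$ to $\cE_i$.

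For the $x$-monotonicity at $y = 0$, I would run an admissible-strategy-extension argument: from $(t,x',0)$ with $x' \leq x$ and $(t,x,0)\in\cE_i$, I construct a control that mimics the optimal strategy from $(t,x,0)$ and inserts an extra jump of size $x-x'$ at a suitably chosen instant, using the $x$-continuity of $\partial_y v_i(t,\cdot,0)$ from Lemma~\ref{monotonie_limit_boundary}. The non-degeneracy assumption $\sigma+\lambda>0$ enters here to prevent the $Y$-dynamics from being pinned near zero, which is needed to propagate the exercise characterization along the $y=0$ axis. Path-connectedness of $\partial\cE_i$ then follows from the structural description: the two monotonicities imply $\cE_i = \{(t,x,y) : (t,x) \in \cR_i,\ 0 \leq y \leq y^*_i(t,x)\}$, where $\cR_i := \{(t,x) : (t,x,0) \in \cE_i\}$ is path-connected (from the $x$-monotonicity on time slices combined with joint continuity of $v_i$ in $t$) and $y^*_i$ is continuous on $\cR_i$ by continuity of $v_i$ and of $g_i$ a.e. The free boundary is then the graph of $y^*_i$ over $\cR_i$ and inherits path-connectedness.

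The hardest step will be rigorously propagating the exercise equality from $(t,x,y)$ to $(t,x,y')$ without $C^{1,2}$-regularity of $v_i$. Since we only know $v_i$ is continuous and a.e.\ differentiable, formal manipulation of the HJB equation is unavailable, and the argument must be built from $\varepsilon$-optimal admissible strategies combined with the smooth-fit statement and the comparison property for $Y$. The non-local integro-differential operator $\mathcal{L}$ and the regime-switching coupling $\sum_j(v_j-v_i)Q_{ij}$ also preclude a direct maximum-principle attack on $g_i$, so the monotonicity has to be extracted from the control-theoretic/probabilistic structure of the optimal strategy rather than from the PDE itself.
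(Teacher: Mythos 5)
Your outline identifies the right structural facts to prove (non-emptiness, downward propagation of the exercise property in $y$, leftward propagation in $x$ at $y=0$, then connectedness from the resulting monotone geometry), but the central step is not actually carried out, and the sketch you give for it does not close. Suppose $(t,x,y)\in\cE_i$ and $y'<y$. From the exercise characterization at $y$ you would get (granting, which already needs an argument given that $\cE_i$ is defined through an a.e.\ gradient constraint and $v_i$ is only a.e.\ differentiable) an identity of the form $v_i(t,x,y)=v_i(t,x+\delta^*,y+\delta^*)+\Phi_i(y+\delta^*)-\Phi_i(y)$. To conclude that $(t,x,y')\in\cE_i$ you need the \emph{reverse} inequality $v_i(t,x,y')\geq v_i(t,x+\delta,y'+\delta)+\Phi_i(y'+\delta)-\Phi_i(y')$ for some $\delta>0$. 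But the tools you invoke all point the wrong way: monotonicity of $v_i$ in $y$ gives $v_i(t,x,y')\leq v_i(t,x,y)$, monotonicity of $\psi_i$ gives $\Phi_i(y'+\delta)-\Phi_i(y')\leq\Phi_i(y+\delta)-\Phi_i(y)$, and the comparison property for $Y$ likewise produces upper bounds; chaining them yields $v_i(t,x+\delta,y'+\delta)+\Phi_i(y'+\delta)-\Phi_i(y')\leq v_i(t,x,y)$, which is compatible with $(t,x,y')$ lying in the continuation region and therefore proves nothing. No saturation at $y'$ is forced. The same problem affects your $x$-propagation at $y=0$: inserting an extra jump of size $x-x'$ into an $\varepsilon$-optimal strategy again only gives one-sided estimates of the DPP type, not the activity of the gradient constraint at $(t,x',0)$.

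The paper proves these two monotonicity statements by a completely different mechanism: a contradiction argument built on the viscosity \emph{subsolution} property (Theorem \ref{viscosity_sub}). Assuming a continuation interval sits below an exercise point in $y$, one takes the boundary point $z_1$ between the two, splits into cases according to $\liminf_{y\to y_1^-}$ of the obstacle function $f_0$ in \eqref{e_0}, and in each case constructs a $C^2$ test function touching $v_i$ from above at $z_1$ (or at a nearby point, using the strict concavity of $y\mapsto v_i(t_0,x_0,y)$ extracted in the second case) with a quadratic term $-\tfrac{1}{2\epsilon}(y-y_1)^2$. Letting $\epsilon\to 0$ makes $-\mathcal{L}\varphi_\epsilon\to+\infty$ through $\tfrac{1}{2\epsilon}\sigma^2(y_1)+\tfrac{1}{2\epsilon}\lambda_t\int q^2\,\nu(\ud z)$, contradicting the subsolution inequality. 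This is precisely where the hypothesis $\sigma+\lambda>0$ is used; your proposal instead assigns it a vague probabilistic role (``preventing $Y$ from being pinned near zero'') that does not correspond to any step you can make rigorous. Finally, your graph representation $\cE_i=\{0\leq y\leq y_i^*(t,x)\}$ with $y_i^*$ continuous is asserted from ``continuity of $g_i$ a.e.,'' which is insufficient: a.e.\ continuity of the obstacle function does not yield continuity of the boundary function, and the paper does not claim it — connectedness is deduced directly from the two monotonicity properties. As it stands, the proposal's key step is a genuine gap, not a different but complete route.
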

\begin{proof}
As $\mathring{\cC_i}$ and $\mathring{\cE_i}$ are non-empty, we can consider $z_0:=(t_0,x_0,y_0)\in\mathring{\cE_i}$ be an interior point of $\cE_i$. By contradiction, we suppose that $\{0<y<y_0: (t_0,x_0,y)\in\cC_i\}$ is non-empty. Being interior points, each admits a neighborhood fully contained in $\mathcal{C}_i$ or $\mathcal{E}_i$, respectively. Define the boundary point $z_1 := (t_0,x_0,y_1)$ with $y_1:=y_0-\delta$. Let $\delta:=\sup\{0<y<y_0: (t_0,x_0,y)\in\cC_i\}$ and $\delta'>0$ such that $\{(t_0,x_0,y): y_0-\delta-\delta'<y<y_0-\delta\}\subset\cC_i$. Define $\eps:=\min\{\delta,\delta'\}$. The following holds, 
    \begin{equation*}
\resizebox{\textwidth}{!}{$
\begin{aligned}\cO_c := \{(t_0,x_0,y): y_0-\delta-\eps<y<y_0-\delta\}\subset\cC_i,~~\textrm{and}~~\cO_e := \{(t_0,x_0,y): y_0-\delta<y<y_0-\delta+\eps\}\subset\cE_i.\end{aligned}
$}
\end{equation*}
Figure \ref{free_bound} provides a graphical representation of the previously defined sets. 
\begin{figure}[H]
        \centering
        \includegraphics[width=0.5\textwidth]{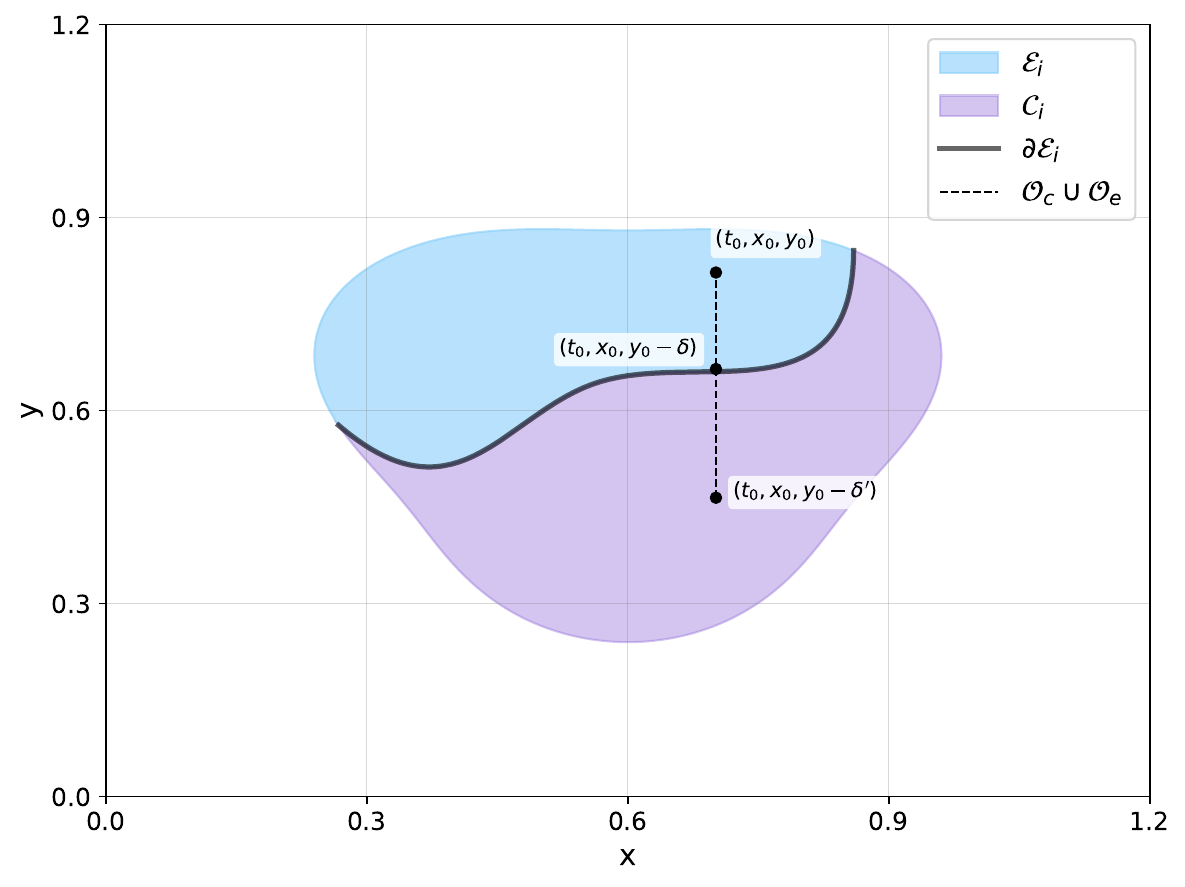}
    \caption{Illustration of $\cO_e$ and $\cO_c$.}
    \label{free_bound}
\end{figure}
In this setting, define the function $f_0:\cO_c\cup\cO_e\cup\{(t_0,x_0,y_1)\} \to \mathbb{R}$ satisfying
\begin{equation}
\label{e_0}
 f_0(t_0,x_0,y) := \begin{cases}
     
 -\partial_x^+ v_i(t_0,x_0,y_1) - \partial^-_y v_i(t_0,x_0,y) - \psi_i(y) ,&\textrm{if} ~~y< y_1. \\-\partial_x^+ v_i(t_0,x_0,y_1) - \partial^+_y v_i(t_0,x_0,y) - \psi_i(y), &\textrm{if} ~~y\geq y_1.
 \end{cases}
\end{equation}
Given that $f_0(t_0, x_0, y_1) = \liminf_{y \to {y_1}^+} f_0(t_0, x_0, y) = 0$, two scenarios may arise: $$\liminf_{y \to {y_1}^-} f_0(t_0, x_0, y) < \eta_0<0,~~\textrm{or}~~\liminf_{y \to {y_1}^-} f_0(t_0, x_0, y) = 0.$$
\textbf{Case 1.} $\liminf_{y\to {y_1}^-}f_0(t_0,x_0,y)<\eta_0<0$.\\
  Define the function $\varphi_{\eps}:\cS\rightarrow \RR$, for $\epsilon>0$ and $z\in\cS$, as $$\varphi_{\epsilon}(z):=v_i(z_1) + \partial^+_t v_i(z_1)(t- t_0) + \partial_x^+v_i(z_1)(x- x_0) -(\partial^+_x v_i(z_1) + \psi_i(y_1) + \frac{\eta_0}{2})(y- y_1) -\frac{1}{2\epsilon} (y-y_1)^2.$$
Since $v_i$ is continuous and non-decreasing function in $y$, it holds, as established in \textcite{rudin1976principles}, that 
$$\limsup_{y \to y_1^-} \partial_y^-v_i(t_0,x_0,y) \leq \partial_y^-v_i(t_0,x_0,y_1) \quad \text{and} \quad \partial_y^+v_i(t_0,x_0,y_1)  \leq \liminf_{y \to y_1^+} \partial_y^+v_i(t_0,x_0,y).$$
Using Taylor's expansion, we have that for $t\geq t_0$, $x\geq x_0$ and $\|z- z_1\|$ small enough,
$$v_i(z)-\varphi_{\eps}(z) \leq g_{i}(z)(y- y_1) + \operatorname{o}(\|z- z_1\|),$$
where $g_{i} : \mathcal{S} \to \mathbb{R}$ are non-positive and defined, for all $z\in\cS$, as
$$g_{i}(z) := \begin{cases}
-\liminf_{y\to {y_1}^-}f_0(t_0,x_0,y) + \frac{\eta_0}{2} & \text{if } y < y_1, \\ \frac{\eta_0}{2} & \text{else} .
\end{cases}$$
Since $g_{i}(z)(y-y_1)<\frac{\eta_0}{2}|y-y_1|<0$, then $\varphi_{\epsilon}$ is a test function in $C^2(\mathcal{S})$ such that $z_1$ achieves a local maximum of $v_i-\varphi_{\epsilon}$ and $v_i(z_1) = \varphi_{\epsilon}(z_1)$. Using Theorem \ref{viscosity_super}, we get that $v_i$ is a viscosity subsolution of \eqref{hjb}. Therefore, 
 $$\max\Big(-\big(\frac{\partial
  \varphi_{\epsilon}}{\partial t}+\mathcal{L}\varphi_{\epsilon}+\sum_{j\neq i}Q_{ij}(v_j-\varphi_{\epsilon})\big)(z_1) ,\ -\big(\frac{\partial
  \varphi_{\epsilon}}{\partial x}+\frac{\partial \varphi_{\epsilon}}{\partial y}+\psi_i\big)(z_1)\Big) \leq 0.$$
We know that $$\mathcal{L}\varphi_{\epsilon}(z_1) = -\frac{1}{2\eps}\sigma^2(y_1)-\big(\partial^+_x v_i(z_1) + \psi_i(y_1) + \frac{\eta_0}{2}\big)\big(h(y_1)+\lambda_t\int_{\RR}q(y_1, z)\nu(\ud z)\big) -\frac{1}{2\epsilon} \lambda_t\int_{\RR}q^2(y_1, z)\nu(\ud z).$$ 
Hence, we have $\lim_{\eps\rightarrow 0}-\cL\varphi_{\eps}(z_1) = +\infty$, while $ -\big(\mathcal{L}\varphi_{\epsilon}+\sum_{j\neq i}Q_{ij}(v_j-v_i)\big)(z_1) \leq 0$ which leads to a contradiction and the initial assumption does not hold true.\\
\textbf{Case 2.} $\liminf_{y\to {y_1}^-}f_0(t_0, x_0,y)=0$.\\
Since $\liminf_{y \to {y_1}^-} f_0(t_0, x_0, y) \leq \limsup_{y \to {y_1}^-} f_0(t_0, x_0, y)\leq 0$, then $\lim_{y \to {y_1}^-} f_0(t_0, x_0, y) = f_0(t_0, x_0, y_1)$. Hence, $y\mapsto f_0(t_0, x_0,y)$ is left-continuous in $y_1$. The fact that $f_0(t_0, x_0, y) < 0$ for almost every $y < y_1$, combined with $f_0(t_0, x_0, y_1) = 0$, and the left-continuity of $f_0$, implies the existence of $\delta_0 > 0$ such that $y\mapsto f_0(t_0, x_0, y)$ is increasing on $[y_1 - \delta_0, y_1]$. Moreover, since $\psi_i$ is continuous on $\RR_+$, then $y\mapsto\partial^-_y v_i(t_0, x_0, y)$ is also continuous on the interval $[y_1 - \delta_0, y_1]$. Given that $\psi_i$ is non-decreasing on $\mathbb{R}_+$ and positive, and that
$$
\partial^-_y v_i(t_0, x_0, y) \geq 0, \quad \forall y > 0,
$$
we conclude that for $f_0$ to vanish at $z_1$, the function $y\mapsto\partial^-_y v_i(t_0, x_0, y)$ must be decreasing on $[y_1 - \delta_0, y_1],$ otherwise $f_0$ would remain negative in $z_1$. Therefore, the map $y \mapsto v_i(t_0, x_0, y)$ is strictly concave on $[y_1 - \delta_0, y_1]$, since the existence of a decreasing left derivative implies strict concavity (see \textcite[Theorem 24.1]{rockafellar1970convex}). This means that for $y_2\in]y_1 - \delta_0, y_1[$, there exists $c>0$ such that, for all $y\in]y_1 - \delta_0, y_1[$,
$$v_i(t_0,x_0,y)<v_i(t_0,x_0,y_2) + c(y-y_2).$$
 For any $\varepsilon > 0$, define the function $\widetilde{\varphi}_\varepsilon : \mathcal{S} \to \mathbb{R}$ by
$$
\widetilde{\varphi}_\varepsilon(z) := v_i(z_2) + \partial_t^+ v_i(z_2)(t - t_0) + \partial^+_x v_i(z_2)(x - x_0) + c (y - y_2) - \frac{1}{2\varepsilon}(y - y_2)^2.
$$
It follows that $v_i - \widetilde{\varphi}_\varepsilon $ is negative with respect to $y$ in a neighborhood of $z_2$, excluding $z_2$ itself. In particular, $\widetilde{\varphi}_\varepsilon\in C^2(\mathcal{S})$ strictly dominates $v_i$ locally around $z_2$. Repeating the same reasoning as in the previous case thus leads to a contradiction. 

Similarly, we have that
$$
(t, x, 0) \in \cE_i\quad\Rightarrow \quad (t, x', 0) \in \cE_i \quad \text{for all } 0 \leq x' \leq x.
$$
As a result, the exercise region always lies below the continuation region with respect to the volume variable. That is, for any fixed $(t_0, x_0) \in [0,T] \times [0,\midbar X]$, it is not possible to have $0 \leq y_1 < y_2 < y_3$ such that $\{(t_0, x_0, y) : y_2 < y < y_3\} \subseteq \cE_i$ while simultaneously $\{(t_0, x_0, y) : y_1 < y < y_2\} \subseteq \cC_i$. A similar argument establishes that the exercise region is always to the left of the continuation region $\cC_i$ in terms of the inventory dimension when $y=0$. Consequently, the free boundary must be connected.
\end{proof}
  \begin{rque}
      The non-emptiness assumption in Proposition \ref{connected} is not restrictive, as it merely excludes trivial cases where $\mathcal{S} = \mathcal{E}_i$ or $\mathcal{S} = \mathcal{C}_i$.
  \end{rque}
\section{Approximation Scheme}
\label{numerical_scheme}
We introduce a numerical scheme to solve the PIDE \eqref{hjb}, employing a discretization approach that ensures stability and accuracy in approximating the integral-differential operator. We apply a standard finite difference scheme to discretize the differential terms while handling the integral term using a suitable quadrature rule, following the approach of \textcite{finite_diff} in the finite activity case.
\paragraph{Discretization.} The time domain $[0,T]$ is discretized into $N_T$ steps with $\Delta t := T/N_T$, such that $t_n := T - n\Delta t$, with $n\in\{0,\dots,N_T\}$. To ensure numerical feasibility, we localize $y$, which is normally in $\RR_+$, by restricting it to a bounded domain of size $\midbar Y$. The state space is discretized into $N_X$ and $N_Y$ steps, respectively. The spatial domain is discretized into $(N_X+1) \times (N_Y+1)$ grid points. The spatial step sizes are defined as $\Delta x := \midbar X /N_X$ and $\Delta y := \midbar Y/N_Y$, leading to the spatial grid points
$$(x_k, y_l) := ((k-1)\Delta x, (l-1)\Delta y), \quad \forall (k,l) \in \I_{N_X+1}\times\I_{N_Y+1}.$$ A function evaluated at a grid point $t_n$ is denoted with the superscript $n$. The matrix  
$v^{n} := (v^{n}_i)_{i\in \I_m}$ has entries representing $v^{n}_i := \big(v^n_i(x_k,y_l)\big)_{(k,l)\in \I_{N_X+1}\times\I_{N_Y+1}}$.
\paragraph{Finite Difference Scheme.} The scheme follows a fully implicit-explicit method to ensure stability. The HJBQVI is discretized, for all $i\in\I_m$ and $n\in\{1,\dots,N_T-1\}$, as
\begin{equation*}
\max \Big(-\frac{v_i^{n+1} - v_i^n}{\Delta t} - \cJ v_i^{n} -\cI v^{n+1}_i- \sum_{j\neq i}(v_j^{n+1} - v_i^{n+1}) Q^{n+1}_{ij},\ -\frac{\partial v^{n+1}_i}{\partial x} - \frac{\partial v^{n+1}_i}{\partial y} - \psi_i \Big) = 0,
\end{equation*}
with $\cJ(g) := \cK[\frac{\partial g}{\partial y},\frac{\partial^2 g}{\partial y^2}] = \frac{\sigma^2(y)}{2} \frac{\partial^2 g}{\partial y^2} -h(y)\frac{\partial g}{\partial y}$, and the operators $\cK$ and $\cI$ are defined in \eqref{operators}. The first-order derivatives are approximated as
\begin{equation*}
\frac{\partial v^n_i}{\partial x}(x_k,y_l) \approx \frac{v_i^n(x_{k+1}, y_l) - v_i^n(x_k,y_l)}{\Delta x},~~\text{and}~~
\frac{\partial v^n_i}{\partial y}(x_k,y_l) \approx \frac{v_i^n(x_k,y_{l+1}) - v_i^n(x_k,y_l)}{\Delta y},
\end{equation*}
for all $(k,l)\in\{1, \dots, N_X\}\times\{1, \dots, N_Y\}$. The second-order derivative is discretized using a centered finite difference scheme for interior points, while first-order \textit{Neumann conditions} are imposed at the boundaries. In other words, for all $k\in\I_{N_X+1}$,
$$
\frac{\partial^2 v^n_i}{\partial y^2}(x_k,y_l) \approx
\begin{cases}
\frac{v^n_i(x_k,y_3) - 2 v^n_i(x_k,y_2) + v^n_i(x_k,y_1)}{\Delta y^2}, & \text{if }l = 1, \\
\frac{v^n_i(x_k,y_{l+1}) - 2 v^n_i(x_k,y_l) + v^n_i(x_k,y_{l-1})}{\Delta y^2}, & \text{if }2\leq l \leq N_Y , \\
\frac{v^n_i(x_k,y_{N_Y+1}) - 2 v^n_i(x_k,y_{N_Y}) + v^n_i(x_k,y_{N_Y-1})}{\Delta y^2}, & \text{if }l = N_Y+1.
\end{cases}$$
The integral term in $\cI$ is discretized using a trapezoidal quadrature scheme, such that for all $(k,l)\in\I_{N_X+1}\times\I_{N_Y+1}$, we have
\begin{equation*}
\cI v^n_i(x_k,y_l) \approx \lambda_{t_n} \sum_{p=1}^{N_Z} w_p \big( v_i^n(x_k, y_l+ q(y_l, z_p)) - v_i^n(x_k, y_l) \big),
\end{equation*}
where $z_p$ are quadrature points, and $w_p$ are corresponding quadrature weights such that $\sum_{p=1}^{N_Z} w_p = \int_{\RR}\nu(\ud z)$. Since $q(y_l, z_p)$ does not necessarily coincide with a grid point, interpolation is necessary. In such cases, the shifted value $y_l + q(y_l, z_p)$ is approximated using linear interpolation between the closest grid points $y_j$ and $y_{j+1}$, satisfying $y_l + q(y_l, z_p) \in [y_{j(p)}, y_{j(p)+1}]$. The interpolation formula is given by $v^{n}_i(x_k,y_l + q(y_l, z_p)) \approx v^{n,p}_i(x_k,y_l)$, with
$$
v^{n,p}_i(x_k,y_l):= (1 - \alpha_p) v^n_i(x_k, y_{j(p)}) + \alpha_p v^n_i(x_k, y_{j(p)+1}),
$$
where $\alpha_p = \frac{y_l + q(y_l, z_p) - y_{j(p)}}{y_{j(p)+1} - y_{j(p)}}$. 
Substituting this into the discretized integral, we obtain the formulation
$$
\mathcal{I} v^n_i(x_k,y_l) \approx \lambda_{t_n} \sum_{p=1}^{N_Z} w_p \big( v^{n,p}_i(x_k,y_l) - v^n_i(x_k,y_l)\big),\quad \forall (k,l)\in\I_{N_X+1}\times\I_{N_Y+1}.
$$ 
Note that we handle the integral part using an implicit time-stepping approach to circumvent the inversion of the dense matrix $\cI$. Define $F_i^n$ and $G_i^n$ such that
\begin{equation*}
\resizebox{\textwidth}{!}{$
\begin{aligned}
        F_i^{n}(x_k,y_l) &:= \Big(-\frac{v_i^{n} - v_i^{n-1}}{\Delta t} - \cJ v_i^{n-1} -\cI v^{n}_i- \sum_{j\neq i}(v_j^{n} - v_i^{n}) Q^{n}_{ij}\Big)(x_k,y_l),~\forall (k,l)\in\I_{N_X+1}\times\I_{N_Y+1}, 
        \\G_i^{n}(x_k,y_l) &:= -\frac{v_i^{n}(x_{k+1}, y_k) - v_i^{n}(x_k,y_l)}{\Delta x} - \frac{v_i^{n}(x_k, y_{l+1}) - v_i^{n}(x_k,y_l)}{\Delta y} - \psi_i(y_l),~ \forall (k,l)\in\I_{N_X}\times\I_{N_Y}.\end{aligned}
$}
\end{equation*}
Since $\max(F^n_i, G^n_i) = 0$, we define the \textit{active region} as $$\cE^n_i = \{(x_k,y_l)\in[0,\midbar X]\times[0,\midbar Y]: F^{n}_i(x_k,y_l) < G^{n}_i(x_k,y_l)\},$$ and the \textit{passive region} as $$\cC^n_i = \{(x_n,y_n)\in[0,\midbar X]\times[0,\midbar Y]: F^{n}_i(x_n,y_n) \geq G^{n}_i(x_n,y_n)\}.$$ 
 Rearranging for all grid points in $\cE^{n+1}_i$ results in a linear system formulated as
\begin{equation}
\label{matrix_system_scheme}
     \big(\big(I_{m} + \Delta t Q^{n+1})\big)\otimes I_m\big) v^{n+1} + \Delta t\lambda_{t_{n+1}}\sum_{p= 1}^{N_Z}w_p (I_m\otimes I_m)(v^{n+1,p}-v^{n+1}) = \big( (I_{m} -\Delta t A)\otimes I_{m} \big)v^n,\end{equation}
where $I_{m}$ is the identity matrix of size $m\times m$. The matrix $A$ is defined as a tridiagonal matrix of size $(N_Y+1)\times (N_Y+1)$ with the structure

\vspace{0.2cm}
\begin{equation*}
A :=
\begin{bmatrix}
\mathbb{B}_1  & \mathbb{A}_1 & 0       & 0      & \dots  & 0  \\
\mathbb{C}_2 & \mathbb{B}_2  & \mathbb{A}_2 & 0      & \dots  & 0  \\
0       & \mathbb{C}_3 & \mathbb{B}_3  & \mathbb{A}_3 & \dots  & 0  \\
\vdots  & \vdots  & \vdots  & \vdots  & \ddots & \vdots \\
0       & 0       & 0       & \mathbb{C}_{N_Y} & \mathbb{B}_{N_Y}  & \mathbb{A}_{N_Y} \\
0       & 0       & 0       & 0      & \mathbb{C}_{N_Y+1} & \mathbb{B}_{N_Y+1}
\end{bmatrix}.
\end{equation*}
The coefficients $\mathbb{A}_l, \mathbb{B}_l$ and $\mathbb{C}_l$ are defined, for $l=1$ and $l=N_y+1$, as
 \begin{align*}
 \mathbb{A}_1 := \frac{\sigma^2(y_1)}{2\Delta y^2},~~\mathbb{B}_1 := -\frac{\sigma^2(y_1)}{\Delta y^2},~~\mathbb{B}_{N_Y+1} := -\frac{\sigma^2(y_{N_Y+1})}{\Delta y^2},~~\text{and}~~ \mathbb{C}_{N_Y+1} := \frac{\sigma^2(y_{N_Y+1})}{\Delta y^2}.
 \end{align*}
 For $l \in \{2, \dots, N_Y\}$,
\begin{align*}
\mathbb{A}_l := \frac{\sigma^2(y_l)}{2\Delta y^2} - \frac{h(y_l)}{2\Delta y},~~\mathbb{B}_l := -\frac{\sigma^2(y_l)}{\Delta y^2} ~~\text{and}~~ \mathbb{C}_l := \frac{\sigma^2(y_l)}{2\Delta y^2} + \frac{h(y_l)}{2\Delta y}.
\end{align*}
 On the hand, if we are in $\cC^{n+1}_i$, we have, for all $(k,l)\in\I_{N_X}\times\I_{N_Y}$,
$$-\frac{v_i^{n+1}(x_{k+1}, y_l) - v_i^{n+1}(x_k,y_l)}{\Delta x} - \frac{v_i^{n+1}(x_k, y_{l+1}) - v_i^{n+1}(x_k,y_l)}{\Delta y} - \psi^{n+1}_i(y_l) = 0.$$ The boundary and terminal conditions are enforced explicitly, ensuring that
\begin{equation*}
v^{N_T}_i(x_k, y_l) = \Phi^n_i(y_l+\midbar{X}-x_k)-\Phi^n_i(y_l),~~\text{and}~~
v^n_i(x_{N_X+1}, y_l) = 0,\quad \forall (k,l)\in\I_{N_X+1}\times\I_{N_Y+1}.
\end{equation*}
To ensure numerical stability, we impose a Courant-Friedrichs-Lewy (CFL) condition
\begin{equation*}
\Delta t \leq C \min\{(\Delta x)^2,(\Delta y)^2\},
\end{equation*}
where $C$ is a positive constant.

We select the time step based on a reference spatial scale proportional to the domain’s upper bound. This follows \textcite{morton2005numerical} and ensures stability in regions with steep gradients. The following algorithm summarizes the time-stepping scheme.
\begin{algorithm}[H]
\caption{Time-Stepping Scheme for PIDE Solution}
\begin{algorithmic}[1]
\State \textbf{Initialize:} Set terminal condition $v^{N_T}_i(x,y) = \Phi^{N_T}_i(y+\midbar{X}-x)-\Phi^{N_T}_i(y)$.
\For{$n = N_T$ to $1$}
    \State Find $v^{n-1}$ that solves the linear system \eqref{matrix_system_scheme}: $F^{n} = 0$.
    \For{$i = m$ to $1$}        
        \State Update $v^{n-1}_i$: $v^{n-1}_i(x,y) \xleftarrow{} \inf_{a\in\{0,\dots,~\midbar X-x\}} v^{n-1}_i(x+a,y+a) + \Phi_i(y+a)-\Phi_i(y)$.
        \State Find the regions $\cE_i^{n-1}$ and $\cC_i^{n-1}$.
    \EndFor
\EndFor
\State \textbf{Output:} Approximate solution $v^n_i(x,y)$.
\end{algorithmic}
\end{algorithm}
\section{Numerical Results}
\label{approximation_numerical}
In this section, we analyze the optimal trading strategy using the numerical scheme detailed in Section \ref{numerical_scheme}. We examine various limit order book shapes and illustrate the structure of the optimal exercise and continuation regions for each case. In this numerical study, we define the coefficients in a multiplicative form as
$$
h(y) = c y, \quad\sigma(y) = d y, \quad\text{and}\quad
q(y, z) = e yz, \quad \forall y,z \in \mathbb{R}_+,
$$
where $c$, $d$ and $e$ are non-negative constants. The controlled \textit{volume effect} process \eqref{controlled_volume_effect_process} in this example satisfies both Lipschitz continuity and linear growth assumptions. Here, the drift term is proportional to $y$, ensuring a Geometric Brownian Motion (GBM) structure, the diffusion coefficient scales with the process to maintain a normal-type distribution, and the jump term remains proportional to the current state. We assume that the jumps, which correspond to block trades by external agents, follow an exponential distribution, meaning that the Lévy measure $\nu$ takes the form
$$
\nu(\ud z) = \eta e^{-\eta z} \mathbbm{1}_{\{z > 0\}} \ud z,\quad \forall z\in\RR,
$$
where $\eta > 0$ controls the decay rate of large jumps. This choice is in line with the existing literature on the subject, notably \textcite{Pomponio2011}. We propose in this case a price impact model in which the trading size influences the price in a concave manner based on the work of \textcite{BOUCHAUD200957}.
 The shape functions are given by the power law family with $$\ud F_i(x)=\frac{\kappa}{(x+1)^{\gamma_i}}\ud x, \quad\forall (i,x)\in\I_m\times\RR_+.$$ In other words, for all $(i,x,y)\in\I_m\times\RR_+^2$, the functions $F_i$ and $\psi_i$ are given by
\begin{equation*}
\resizebox{\textwidth}{!}{$
\begin{aligned}
F_i(x)=\left\{\begin{array}{ll}
\kappa x, & \text { if } \gamma_i=0, \\\kappa \log (x+1), & \text { if } \gamma_i=1, \\
\frac{\kappa}{1-\gamma_i}\big[(x+1)^{1-\gamma_i}-1\big], & \text { else. }
\end{array} ~\text{and}~~ \psi_i(y)= \begin{cases}
\frac{y}{\kappa}, & \text { if } \gamma_i=0, \\e^{\frac{y}{\kappa}}-1, & \text { if } \gamma_i=1, \\
{\big[1+(1-\gamma_i)\frac{y}{\kappa}\big]^{\frac{1}{1-\gamma_i}}-1}, & \text { else.}\end{cases}\right.
\end{aligned}
$}
\end{equation*}
 We use the following default parameters for all numerical results unless stated otherwise. These parameters reflect a square root impact.
\begin{table}[htbp]
\small
    \centering
    \begin{tabular}{lcccccccccc}
        \toprule
        \textbf{Parameter} & \(c\) & \(d\) & \(e\)  & \(\eta\) & \(\kappa\) & \(\gamma_0\)& \(\lambda\) & \(\midbar X\)& \(\midbar Y\) & \(T\)\\
        \midrule
        \textbf{Value} & 0.5 & 0.1 & 0.2 & 1.0 & 0.8 & -1 & 0.5 & 4 & 5
 & 4.0 \\
        \bottomrule
    \end{tabular}
    \caption{Default parameter values.}
    \label{tab:parameters}
\end{table}
\subsection{Single Regime Case}
To isolate the effects of stochastic resilience and liquidity, we first examine the single-regime case, where the Markov chain $I$ remains fixed in a single state. This is equivalent to setting the transition rate matrix $Q$ to zero, making each state absorbing. By doing so, we can analyze the impact of stochastic \textit{volume effect} independently from the influence of changing liquidity regimes. 

First, we examine how the value function responds to changes in various model parameters. The plots in Figure \ref{fig:value_function_variation} illustrate the impact of drift $h$, jump intensity $\lambda$, and volatility $\sigma$ on execution costs, where a lower volume effect corresponds to higher liquidity and improved execution. In particular, a higher volatility indicate greater market activity. The left hand side of Figure \ref{fig:value_function_variation} shows that a higher jump intensity leads to less favorable execution opportunities, thereby raising the overall execution cost. Likewise, increased volatility in the volume process reflects a dynamic market, helping to reduce execution costs. Figure \ref{fig:value_function_variation} also shows that higher resilience, which reflects a higher mean reversion speed, improves execution by accelerating liquidity replenishment. 
\begin{figure}[H]
    \centering
    \begin{subfigure}[b]{0.31\textwidth}
        \centering
        \includegraphics[width=\textwidth]{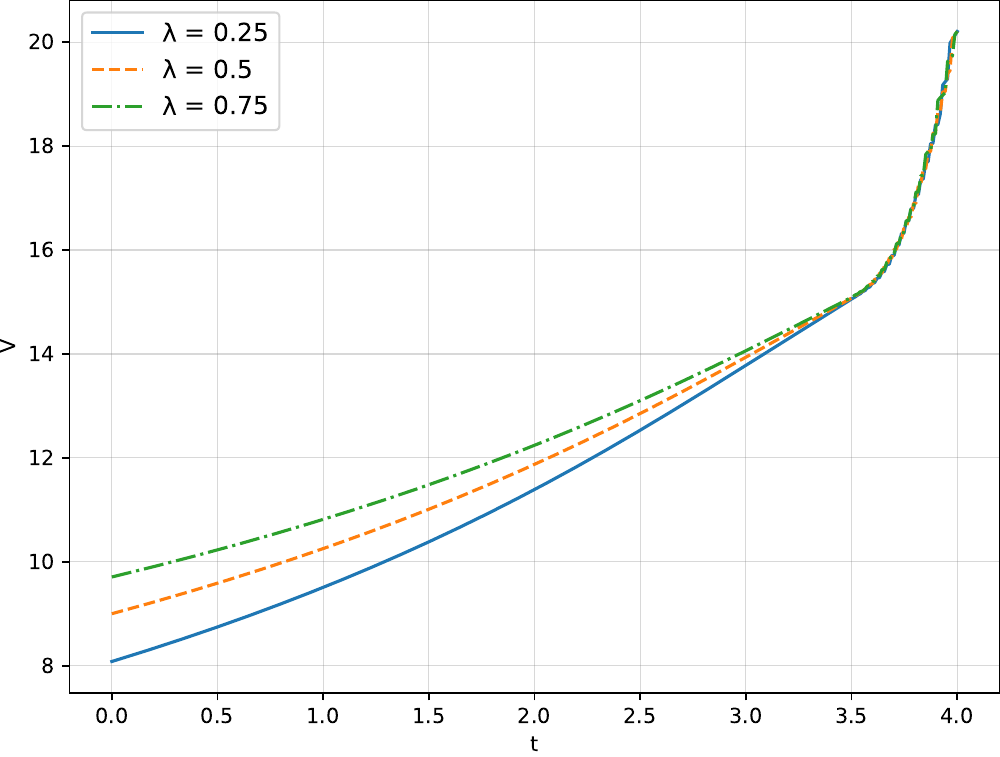}
    \end{subfigure}
    \hfill
    \begin{subfigure}[b]{0.31\textwidth}
        \centering
        \includegraphics[width=\textwidth]{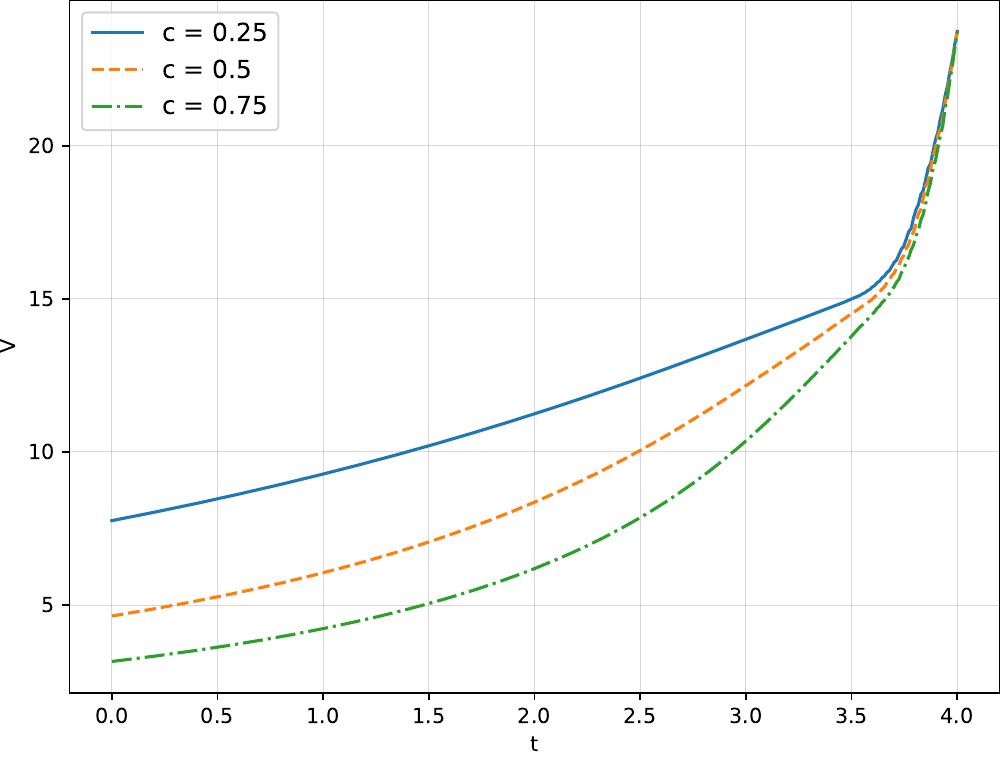}
    \end{subfigure}
    \hfill
    \begin{subfigure}[b]{0.316\textwidth}
        \centering
        \includegraphics[width=\textwidth]{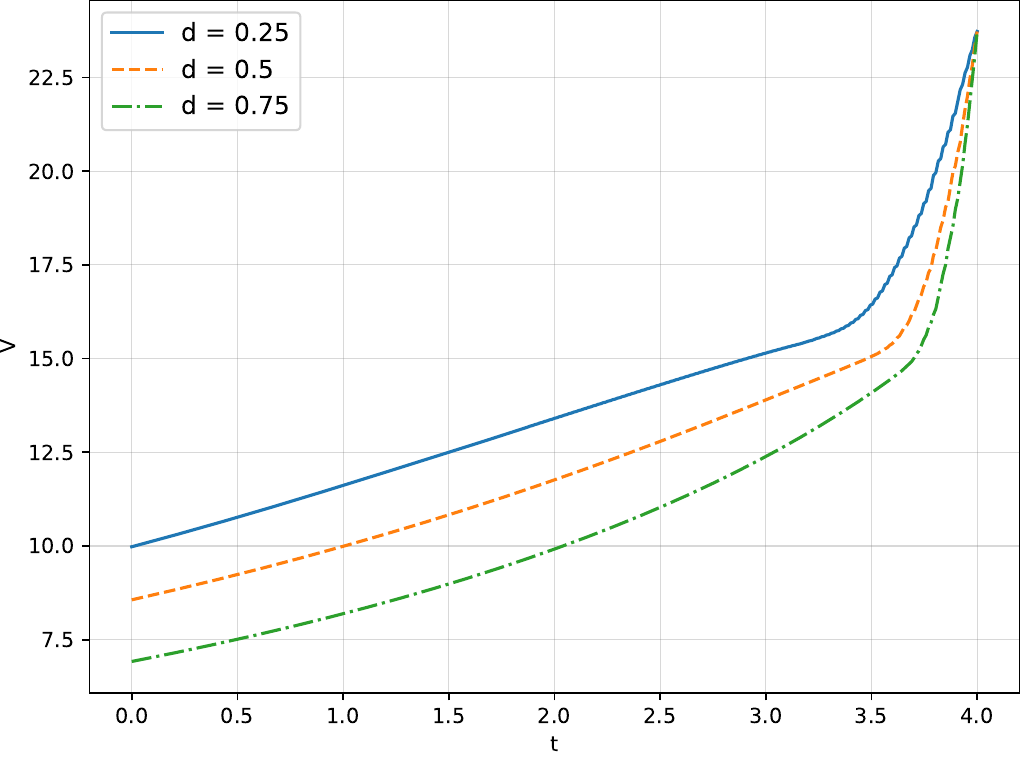}
    \end{subfigure}
    
    \caption{Variation of the value function over time under different market conditions with jump intensity on the left, resilience in the middle, and volatility on the right.}
    \label{fig:value_function_variation}
\end{figure}
  
Next, we illustrate the optimal execution strategy under the square root price impact model. 
\begin{figure}[H]
        \centering
\includegraphics[width=0.38\textwidth]{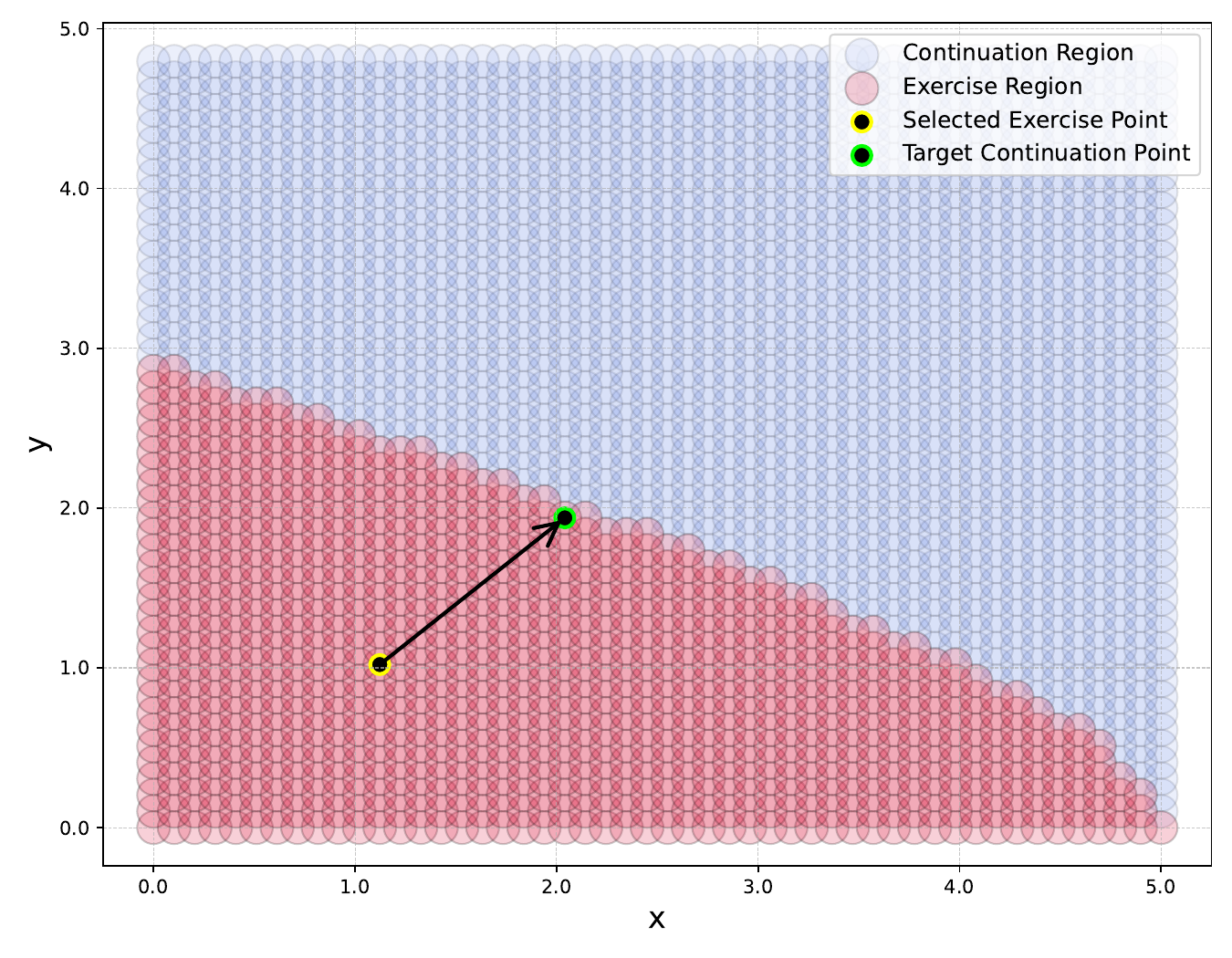}
       \caption{Illustration of continuation (blue) and exercise (red) regions, plotted against purchased quantity (x-axis) and volume effect (y-axis) for $\midbar X = 5$. The arrow shows an impulse shifting a state from exercise to continuation along $y=x$.}
    \label{fig:exercice_continuation}
\end{figure}
As the x-axis in Figure \ref{fig:exercice_continuation} represents the quantity already acquired, the continuation region expands with increasing 
$x$, indicating that larger holdings reduce the urgency to buy more immediately. The curved boundary between the regions reflects the trade-off between acquiring now or waiting, as price impact increases with trade size but at a decreasing rate. Upon reaching the exercise region, the optimal strategy is to purchase the smallest possible amount to reach the continuation region, thus reducing market impact in an optimal manner.

Subsequently, we verify in Figure \ref{fig:exercise_continuation_variation} the validity of the previous results by examining how variations affect the exercise and continuation regions. 
\begin{figure}[H]
    \centering
    \begin{subfigure}[b]{0.31\textwidth}
        \centering
        \includegraphics[width=\textwidth]{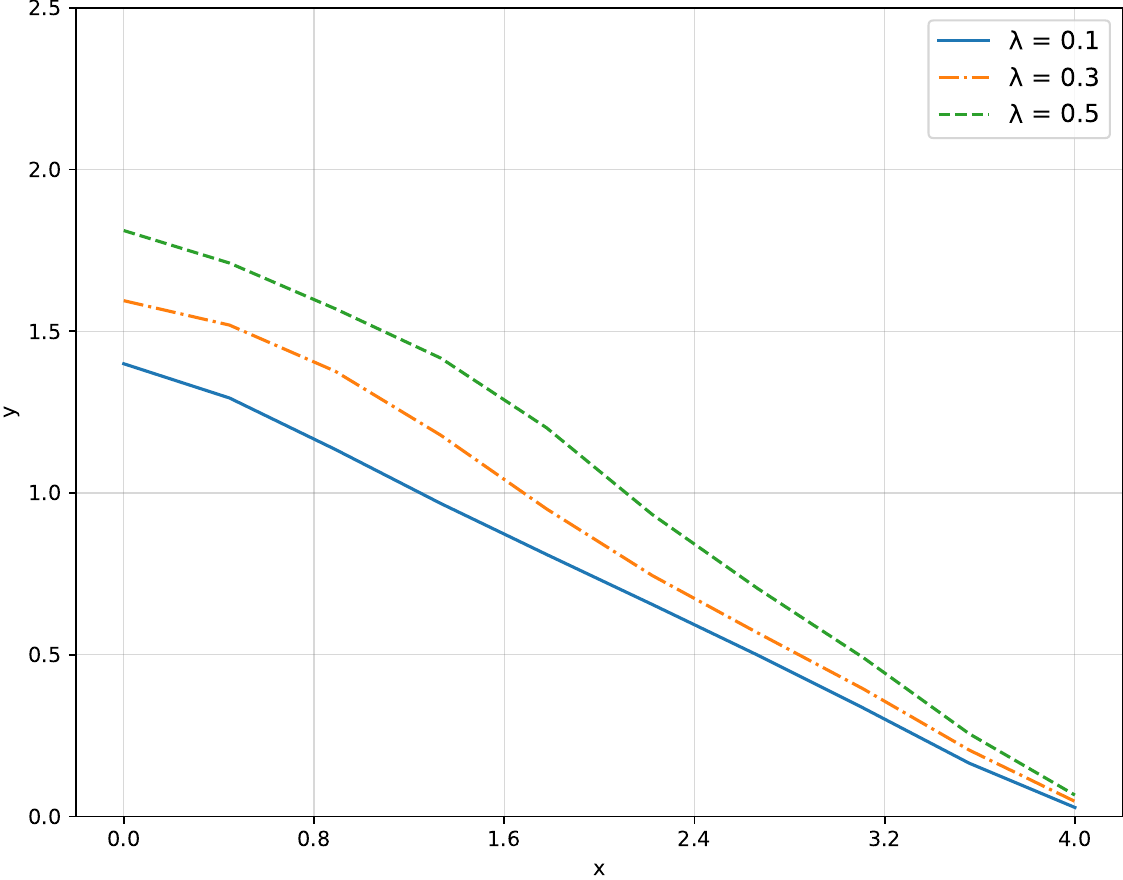}
    \end{subfigure}
    \hfill
    \begin{subfigure}[b]{0.31\textwidth}
        \centering
        \includegraphics[width=\textwidth]{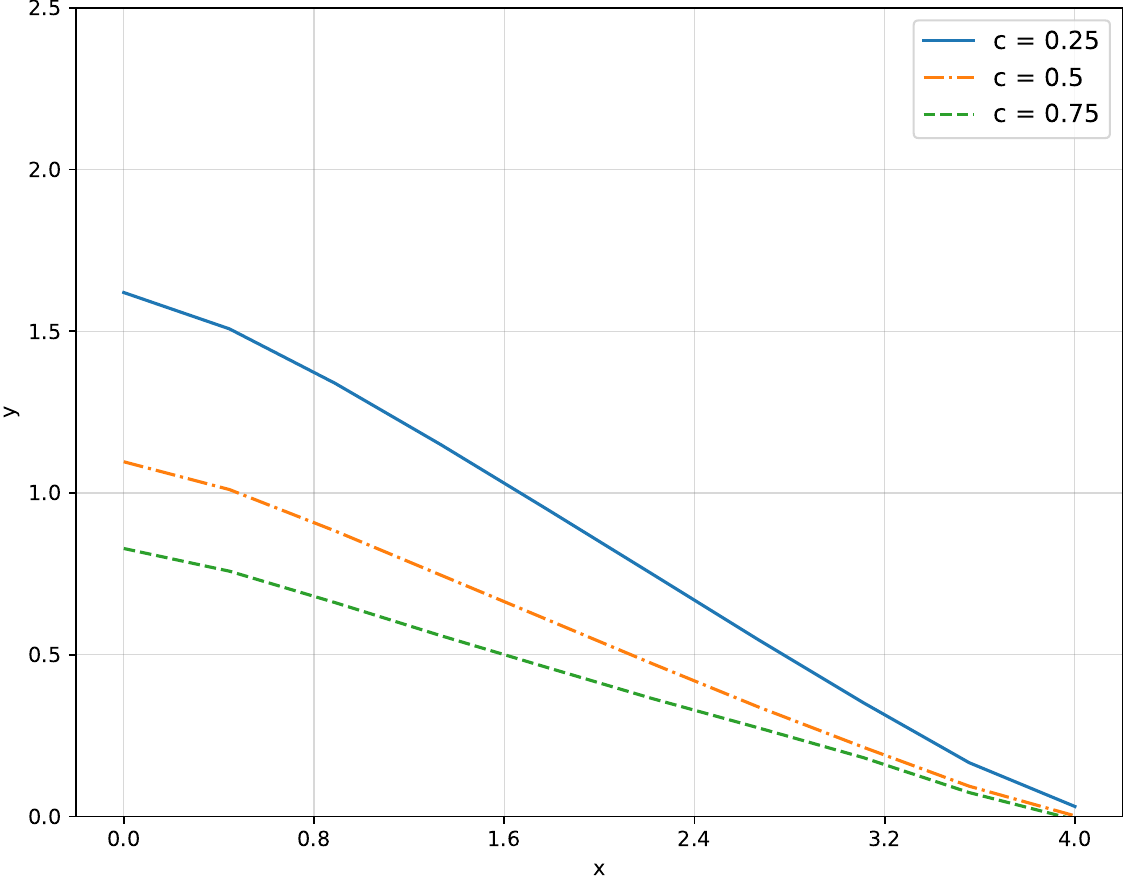}
        
    \end{subfigure}
    \hfill
    \begin{subfigure}[b]{0.31\textwidth}
        \centering
        \includegraphics[width=\textwidth]{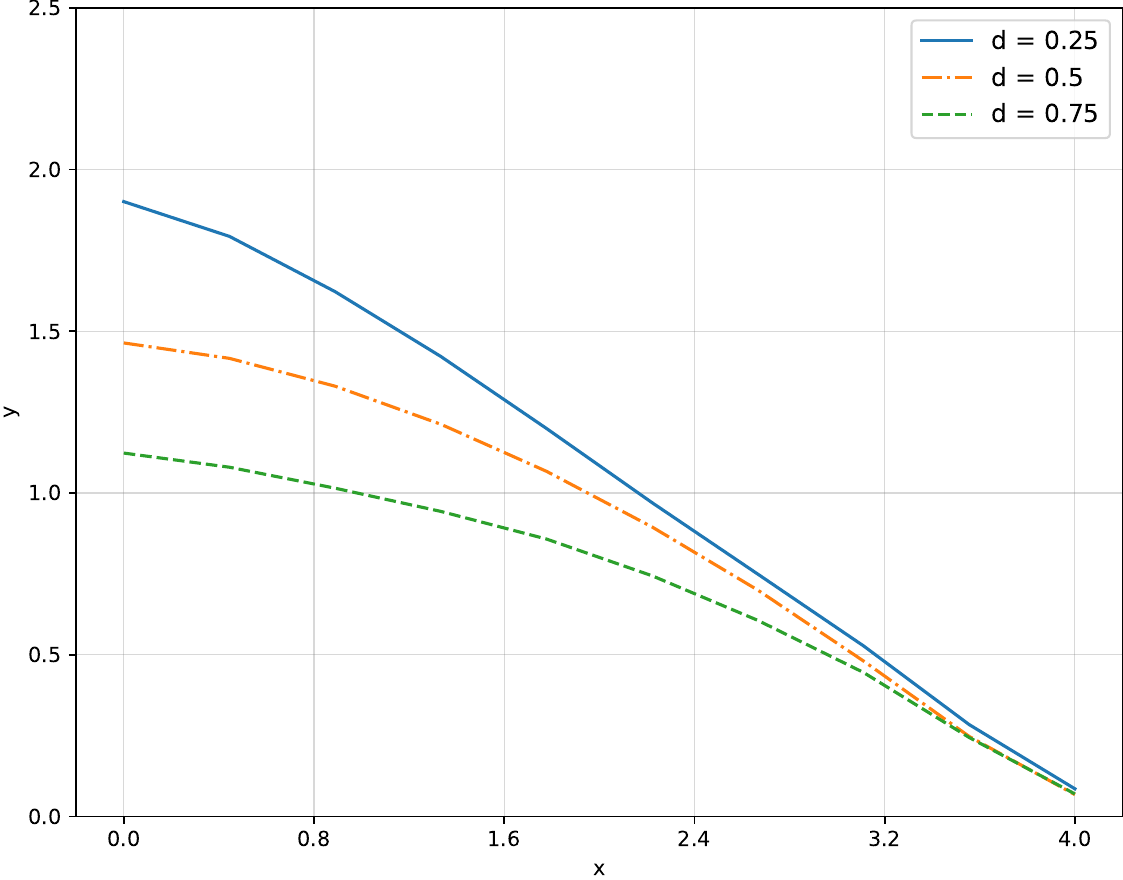}
    \end{subfigure}
    
    \caption{Variation of the exercise boundary $x\mapsto y^*(T/2,x)$ under different market conditions with jump intensity on the left, resilience in the middle, and volatility on the right.}
    \label{fig:exercise_continuation_variation}
\end{figure}
The exercise region expands in response to jumps. Indeed, when jump intensity is high, the optimal strategy involves anticipating increased future price impact by executing larger trades earlier. Conversely, when the volatility of the process governing the volume effect is high, the optimal policy is to acquire fewer shares. This reflects the expectation of more favorable market conditions ahead. The same holds when market resilience increases. This outcome aligns with the previous findings in Figure \ref{fig:value_function_variation} and those of \textcite{general_shape}.

\subsection{Regime-Switching case}
Next, we extend the analysis to the multi-regime case, where both sources of uncertainty interact. We assume a two-state homogeneous Markov chain, with the transition rate matrix $Q(t)$ at time $t \in [0,T]$ given by
$$
Q(t) = \begin{pmatrix}
-q_1 & q_1 \\
q_2 & -q_2
\end{pmatrix},$$
where $q_1$ and $q_2$ are positive constants. We begin by fixing $q_1 = q_2 = 0.2$ and $\gamma_1 = 0$. With these parameters set, we will then vary the price impact parameter $\gamma_2$ to analyze its effect on the continuation and exercise regions of the value functions $v_1$ and $v_2$.
\begin{figure}[H]
    \centering
    \begin{subfigure}[b]{0.33\textwidth}
        \centering
        \includegraphics[width=\textwidth]{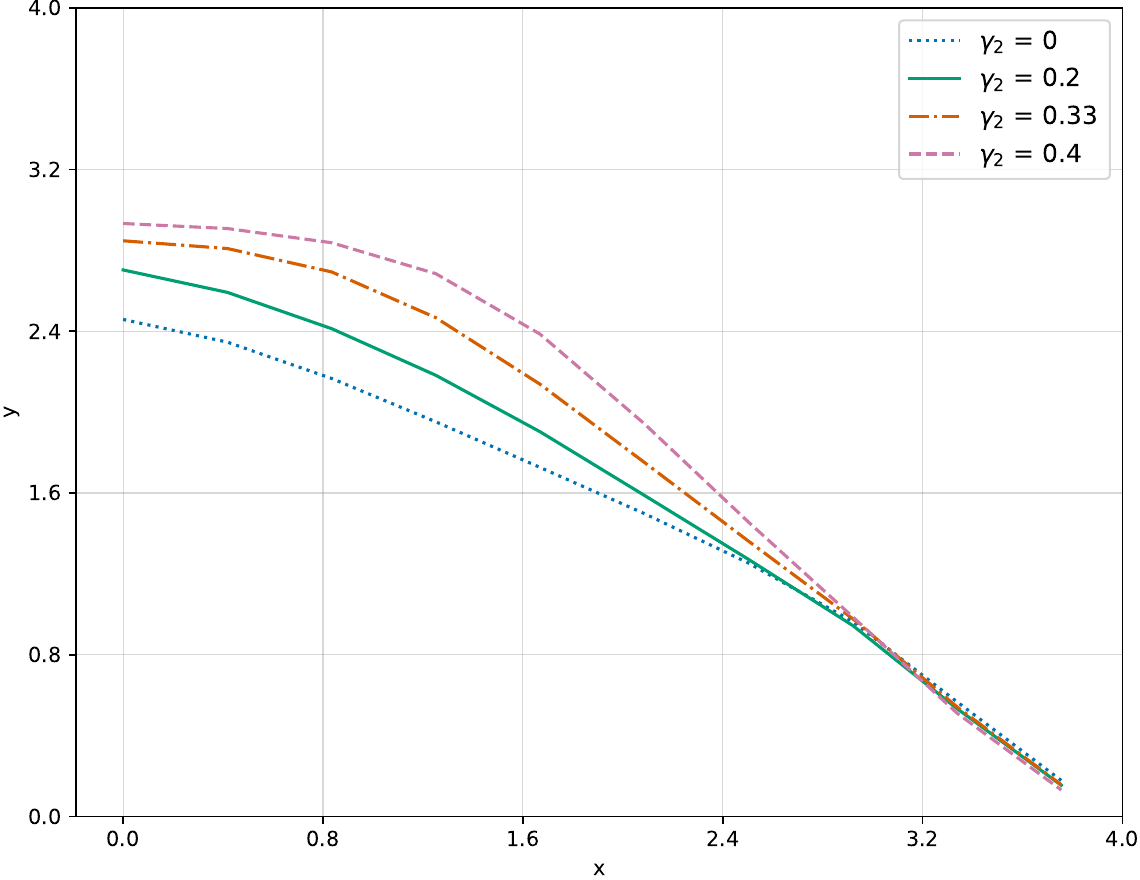}
    \end{subfigure}\quad\quad
    \begin{subfigure}[b]{0.33\textwidth}
        \centering
        \includegraphics[width=\textwidth]{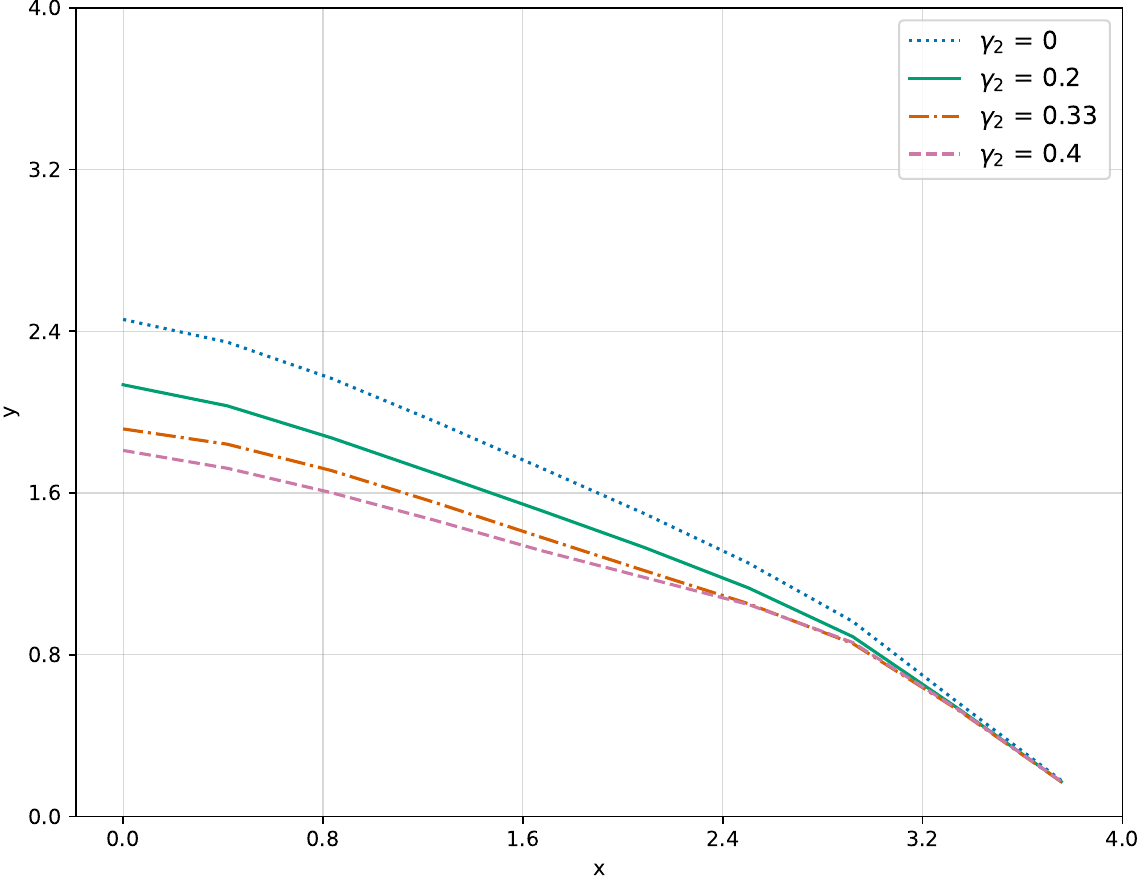}
        
    \end{subfigure}    
    \caption{Variation of the exercise boundary $x\mapsto y^*(T/2,x)$ of $v_1$ on the left and $v_2$ on the right under different price impacts.}
    \label{fig:execise_continuation_vary_gamma}
\end{figure}
Figure~\ref{fig:execise_continuation_vary_gamma} illustrates that when $\gamma_1 < \gamma_2$, the trader anticipates costlier execution in regime $2$ and adjusts their strategy accordingly. Indeed, in regime is $1$, where the price impact is lower than in regime $2$, an increase in the parameter $\gamma_2$ leads to an expansion of the exercise region (in regime $1$), as illustrated on the left-hand side of Figure \ref{fig:execise_continuation_vary_gamma}. Conversely, the right-hand side of Figure \ref{fig:execise_continuation_vary_gamma} shows that if we are in regime $2$, where the price impact is higher, the exercise region contracts as $\gamma_2$ increases. Consequently, the trader increases order sizes in regime $1$ to preempt higher execution costs associated with a potential switch to regime $2$, while in regime $2$, the trader decreases order sizes in anticipation of a favorable transition to regime $1$, where trading conditions are more advantageous.

Next, we study the case where the price impact parameter is set to $\gamma_1 = 0$ in regime $1$ and $\gamma_2 = 0.33$ in regime $2$. We consider $q_1 = q_2 = q_0$ and vary the values of $q_0$. This setup corresponds to a scenario where there is an equal probability of being in either regime, but the frequency of transitions between regimes changes. The results of this analysis are presented in Figure \ref{fig:execise_continuation_vary_switching_intensity}.
\begin{figure}[H]
    \centering
    \begin{subfigure}[b]{0.33\textwidth}
        \centering
        \includegraphics[width=\textwidth]{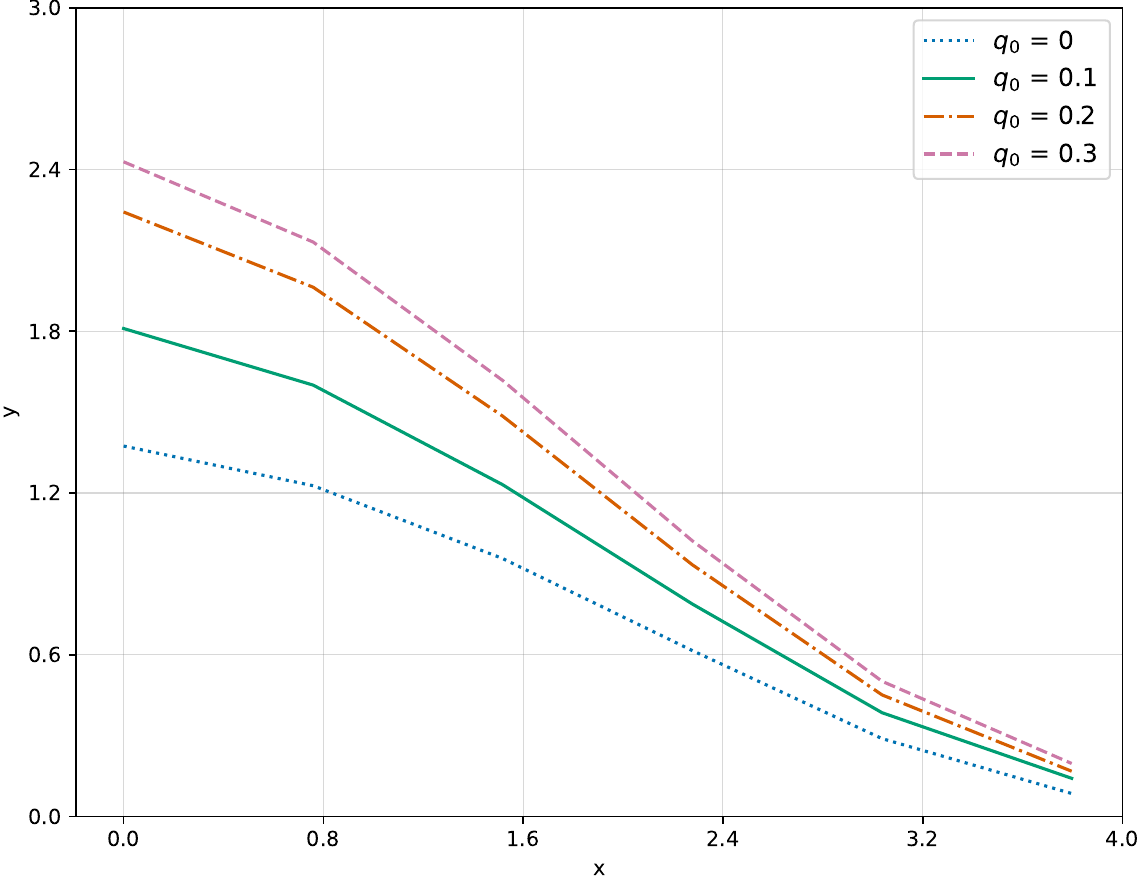}
    \end{subfigure}\quad\quad
    \begin{subfigure}[b]{0.33\textwidth}
        \centering
        \includegraphics[width=\textwidth]{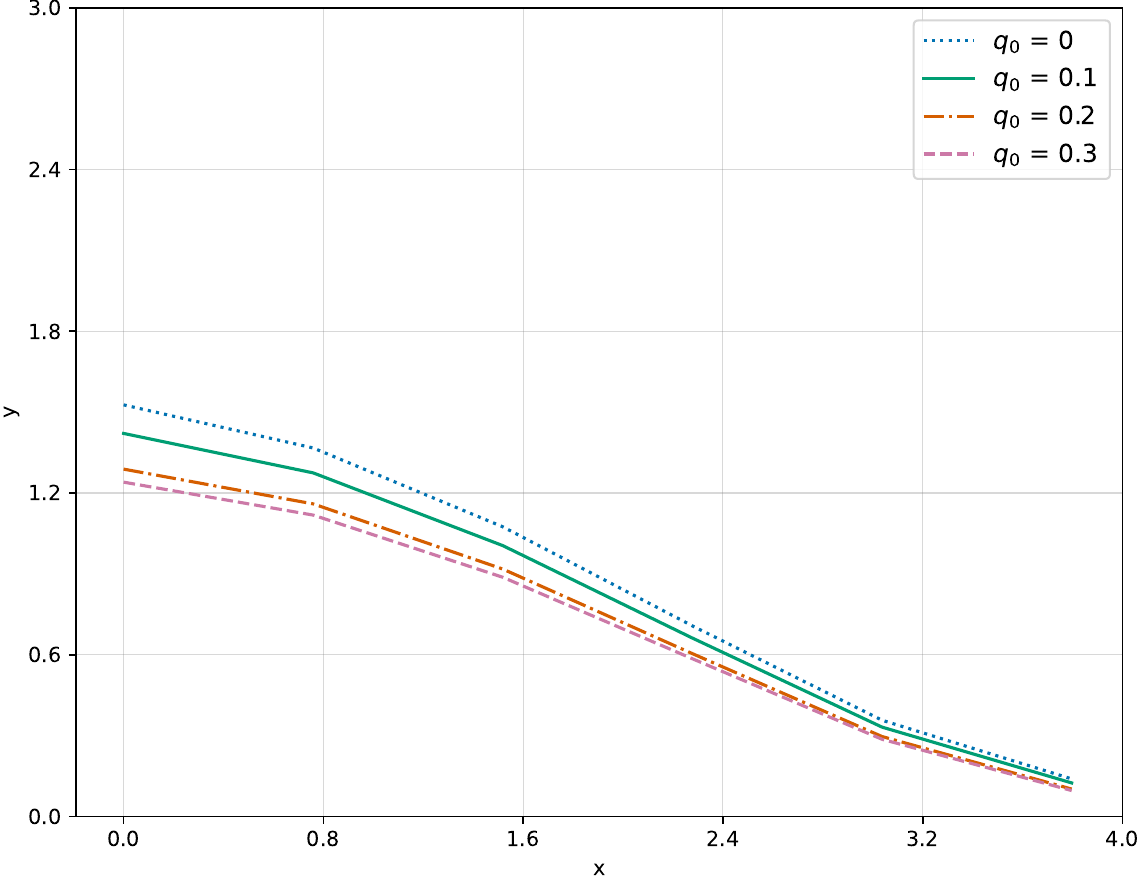}
        
    \end{subfigure}    
    \caption{Variation of the exercise boundary $x\mapsto y^*(T/2,x)$ of $v_1$ on the left and $v_2$ on the right under different regime switching intensities.}
    \label{fig:execise_continuation_vary_switching_intensity}
\end{figure}
 A higher switching intensity expands the exercise region in the low price impact regime (see the left-hand side of Figure \ref{fig:execise_continuation_vary_switching_intensity}) and contracts it in the high impact regime (see the right-hand side of Figure \ref{fig:execise_continuation_vary_switching_intensity}). In regime $1$, the increased likelihood of a regime shift prompts the trader to execute larger order sizes in anticipation of deteriorating execution conditions in the former. In the latter (regime $2$), the same anticipation leads to smaller order sizes, as the trader expects a transition to a more favorable regime. This behavior is consistent with the earlier observation that traders adjust their execution strategy based on the expected direction of the regime shift and its impact on future trading costs.

Finally, we analyze the case where the price impact parameter is set to $\gamma_1 = 0$ in regime $1$ and $\gamma_2 = 0.33$ in regime $2$. We fix $q_1 = 0.2$ and vary $q_2$, allowing us to study how the asymmetry in regime-switching probabilities affects execution. 
\begin{figure}[H]
    \centering
    \begin{subfigure}[b]{0.33\textwidth}
        \centering
        \includegraphics[width=\textwidth]{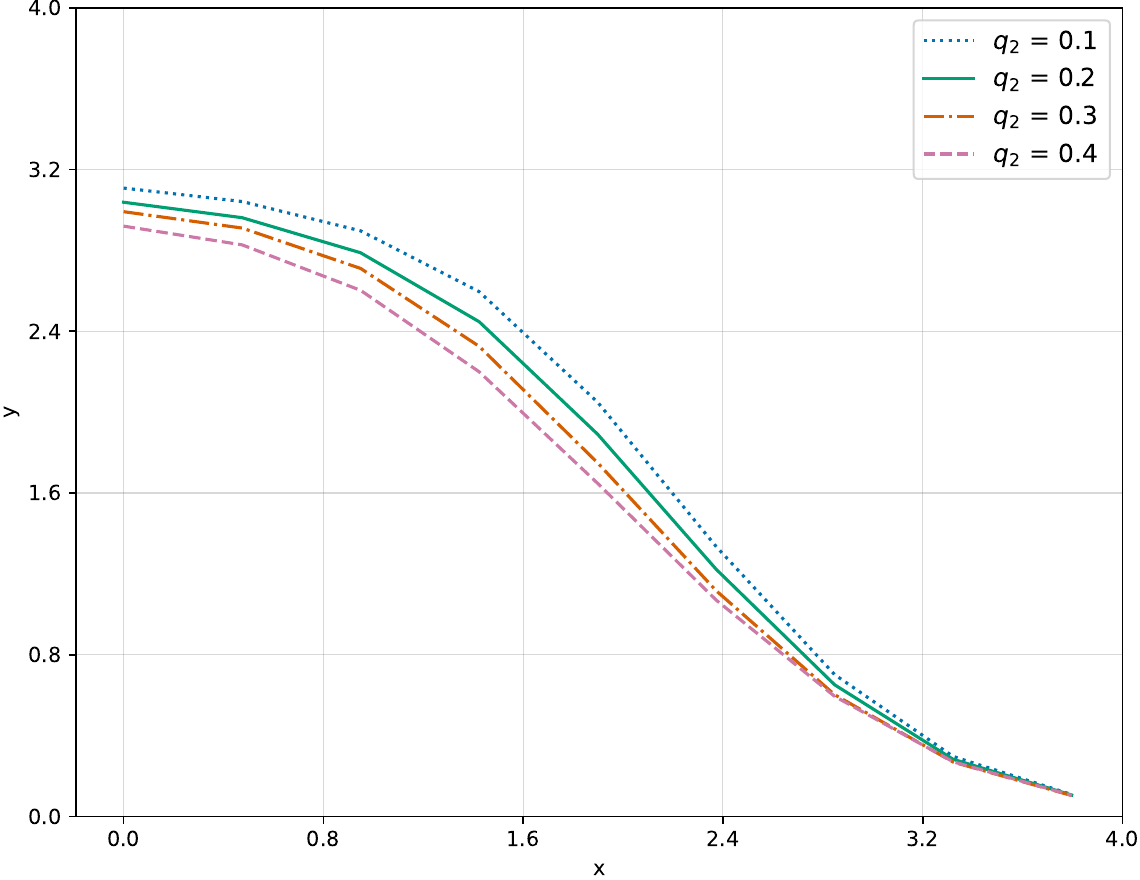}
    \end{subfigure}\quad\quad
    \begin{subfigure}[b]{0.33\textwidth}
        \centering
        \includegraphics[width=\textwidth]{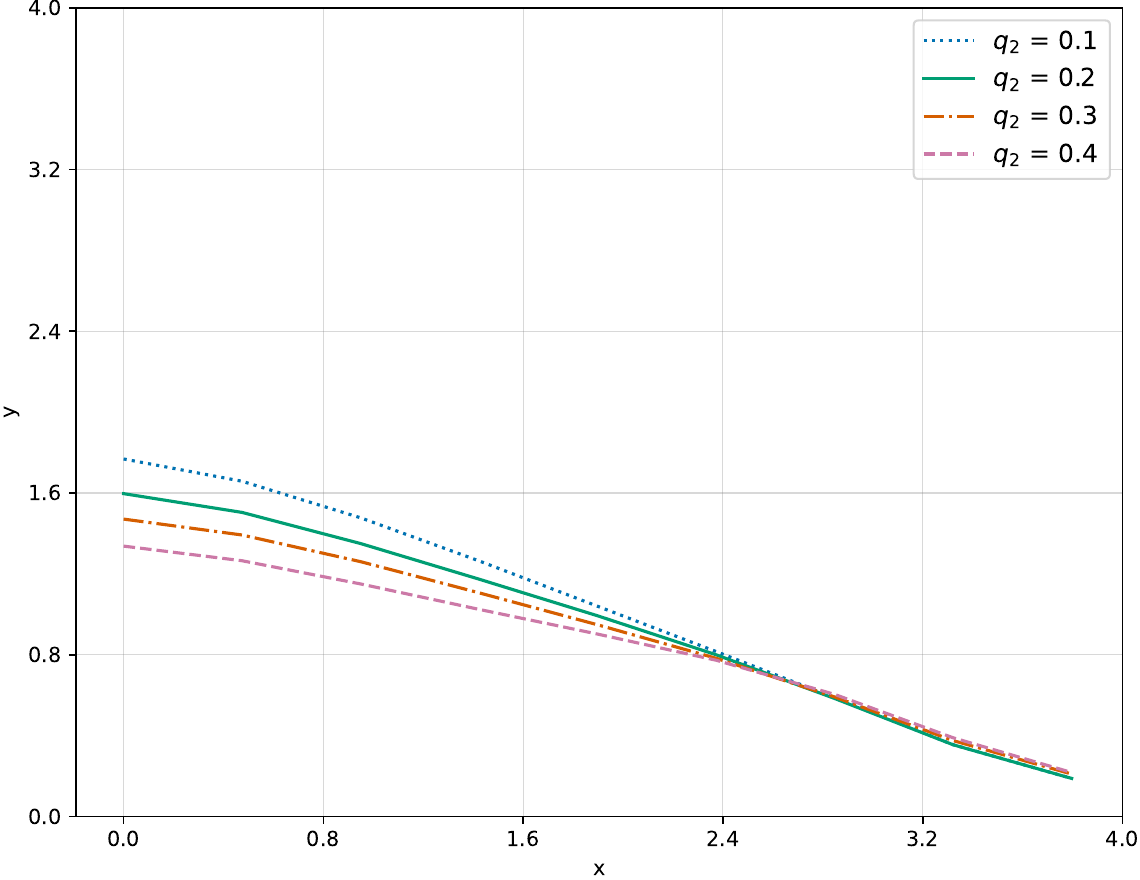}
        
    \end{subfigure}    
    \caption{Variation of the exercise boundary $x\mapsto y^*(T/2,x)$ of $v_1$ on the left and $v_2$ on the right under different asymmetric regime switching intensities.}
\label{fig:execise_continuation_vary_asymmetric_switching_intensity}
\end{figure}
Figure \ref{fig:execise_continuation_vary_asymmetric_switching_intensity} confirms our previous findings and highlights the effect of asymmetry in the switching dynamics. The exercise region expands in the low-impact regime when the transition rate matrix implies a higher long-term probability of remaining in the high-impact regime $2$. Conversely, as regime $2$ becomes more persistent, the exercise region contracts in regime~$2$ and expands in regime $1$, reflecting adjustments in order sizes based on anticipated execution costs. 

{\raggedright 
\printbibliography}

\appendix

\section{Proofs of the Results in Section \ref{Characterization of the Value Function}}
\label{appendix_caracterization}
\subsection{Continuity Modulus}
\begin{proof}[Proof of Lemma \ref{unifcont_phi}]
Let $0\leq t\leq s\leq T$, $x\in[0,\midbar X] $, $y\geq 0$ and $X\in\mathcal{A}_t(x)$. Suppose $\xi$ is a nonnegative random variable with finite moments. We have 
\begin{equation*}
    \begin{split}
\Delta & :=  
\E\Big[\Phi_{I_s}\left(Y^{t,y,X}_s+\xi\right)-\Phi_{I_s}\left(Y^{t,y,X}_s\right)\Big]  =   \E\Big[\int_0^\xi\psi_{I_s}\left(Y^{t,y,X}_s+\zeta\right)\ud\, \zeta\Big].
\end{split}
\end{equation*}
Note that under Assumption \ref{carnetinfini}, we have that $\psi_{I_s}(y)\leq b^{\frac{1}{\beta}}y^{\frac{1}{\beta}},$ for all $y\geq F_{I_s}(a).$ As $y\mapsto\psi_{I_s}(y)$ is non-decreasing on $\RR_+$, we have
\begin{align*}
\Delta & :=  \E\Big[\int_0^{\xi}\psi_{I_s}\left(Y^{t,y,X}_s+\zeta\right)\ud\, \zeta \Big]\\
 & \leq \E\Big[\int_0^{\xi}b^{\frac{1}{\beta}}\left(F_{I_s}(a)+\zeta\right)^{\frac{1}{\beta}}\1_{\{Y^{t,y,X}_s\leq F_{I_s}(a)\}}\ud\, \zeta +\int_0^{\xi}b^{\frac{1}{\beta}}\left(Y^{t,y,X}_s+\zeta\right)^{\frac{1}{\beta}}\1_{\{Y^{t,y,X}_s> F_{I_s}(a)\}}\ud\, \zeta\Big]\\ & = \sum_{i=1}^m\frac{\beta}{1+\beta}b^{\frac{1}{\beta}}\E\Big[\1_{\{I_s = i\}}\left(F_i(a)\right)^{\frac{\beta+1}{\beta}}\1_{\{Y^{t,y,X}_s\leq F_i(a)\}}\Big(\big(1+\frac{\xi}{F_i(a)}\big)^{\frac{\beta+1}{\beta}}-1\Big)\Big]\\ 
  &\quad\quad +\sum_{i=1}^m\frac{\beta}{1+\beta}b^{\frac{1}{\beta}}\E\Big[\1_{\{I_s = i\}}\left(Y^{t,y,X}_s\right)^{\frac{\beta+1}{\beta}}\1_{\{Y^{t,y,X}_s> F_i(a)\}}\Big(\big(1+\frac{\xi}{Y^{t,y,X}_s}\big)^{\frac{\beta+1}{\beta}}-1\Big)\Big]\\
   & \leq  \sum_{i=1}^m \frac{\beta}{1+\beta}b^{\frac{1}{\beta}}\E\Big[\Big(F_i(a)^{\frac{\beta+1}{\beta}}+\underset{0\leq u\leq T}{\sup}|Y^{t,y,X}_u|^{\frac{\beta+1}{\beta}}\Big)\Big(\big(1+\frac{\xi}{F_i(a)}\big)^{\frac{\beta+1}{\beta}}-1\Big)\Big].
\end{align*}
Using Cauchy-Schwarz inequality,
\begin{equation*}
\begin{split}
\Delta & \leq  \sum_{i=1}^m \frac{\beta}{1+\beta}b^{\frac{1}{\beta}}\E\Big[\Big(F_i(a)^{\frac{\beta+1}{\beta}}+\underset{0\leq u\leq T}{\sup}|Y^{t,y,X}_u|^{\frac{\beta+1}{\beta}}\Big)^2\Big]^{\frac{1}{2}}\E\Big[\Big(\big(1+\frac{\xi}{F_i(a)}\big)^{\frac{\beta+1}{\beta}}-1\Big)^2\Big]^{\frac{1}{2}}.
\end{split}
\end{equation*}
Finally, based on Proposition \ref{existence_uniqueness_SDE}, we get that
\begin{equation*}
\begin{split}
\Delta &\leq  \sum_{i=1}^m C^y_i\E\Big[\Big(\big(1+\frac{\xi}{F_i(a)}\big)^{\frac{\beta+1}{\beta}}-1\Big)^2\Big]^{\frac{1}{2}}\\&=\rho_y(\xi),
\end{split}
\end{equation*}
where $C^y_i:= \frac{2\beta}{1+\beta}b^{\frac{1}{\beta}}\left(F_i(a)^{\frac{2(\beta+1)}{\beta}}+ C_T\big(1+|y|^{\frac{2(\beta+1)}{\beta}}\big)\right)^{\frac{1}{2}}$. This completes the proof.
\end{proof}
\section{Proofs of the Results in Section \ref{viscosity_section}}
\label{appendix_viscosity}
\subsection{Viscosity Subsolution Property}
\begin{proof}[Proof of Theorem \ref{viscosity_sub}]
\label{proof_subsolution}
     Let $z_0:=(t_0,x_0,y_0) \in\mathcal{S}$ and $i\in \I_m$. Consider a test function $\varphi \in
C^{1,2}(\mathcal{S})$ such that $z_0$ achieves a local maximum of $v_i-\varphi$ and $v_i(z_0) = \varphi(z_0)$. In other words, there exists $r_0>0$ such that 
$$v_i(z) \leq \varphi(z), \quad \forall z \in \midbar B_{r_0}(z_0).
$$
Define, for all $(j,z) \in \I_m\times\cS$,
\begin{equation*}
\Psi(z,j)=\left\{\begin{array}{ll}
\varphi(z), & \text { if } j=i, \\
v_j(z), & \text { if } j \neq i.
\end{array} \right.\end{equation*}
Observe that $v_j(t,x,y) \leq \Psi(z,j)$ holds within the neighborhood $\midbar B_{r_0}(z_0)$. Define $Z^{z_0,X}_u$ as the tuple $\big(u, X_u, Y^{t_0,y_0,X}_u\big)$ for all $u \in [0,T]$. Next, define $\tau_{r_0}$ as the first exit time of $Z^{z_0,X}$ from $\midbar B_{r_0}(z_0)$, i.e.,  
$$\tau_{r_0} = \inf \big\{ u \geq t_0 : Z^{z_0,X}_u \notin \midbar B_{r_0}(z_0) \big\}.$$ Similarly, let $\tau^I_1$ denote the first jump time of the $I$ after time $t_0$, given by  
$$\tau^I_1 = \inf \big\{ u \geq t_0 : I_u \neq i \big\}.$$ 
Furthermore, we introduce the $\cF$-stopping time $\tau_0$ such that $\tau_0 \in\big]t_0, (\tau^I_1\wedge\tau_{r_0}) \wedge T\big[$. Note that $\tau^I_1>t_0$ and that $I_{u} = i$, for all $u\in[t_0,\tau^I_1[$. Consider an admissible control $X \in \mathcal{A}_{t_0}(x_0)$ in which the agent buys $0 \leq \eta < (\bar{X} - x_0)\wedge r_0$ assets at time $t_0$ and then remains inactive until time $(t_0 + \varepsilon) \wedge \tau_0$, where $\varepsilon>0$. Using the DPP introduced in Proposition \ref{prog_dyn}, we have
\begin{equation} \begin{split}
  \Psi(z_0,i) &= \varphi(z_0)\\ &\leq
  \E
  \Big[\Phi_{I_{(t_0+\eps)\wedge \tau_0}}(y_0+\eta)-\Phi_{I_{(t_0+\eps)\wedge \tau_0}}(y_0)+ v((t_0+\eps)\wedge \tau_0, x_0+\eta,{Y}^{t_0,y_0,X}_{(t_0+\eps)\wedge \tau_0},I_{(t_0+\eps)\wedge \tau_0}) \Big]\\
  &\leq \E
\Big[\Phi_{i}(y_0+\eta)-\Phi_{i}(y_0)+ \Psi((t_0+\eps)\wedge \tau_0, x_0+\eta, {Y}^{t_0,y_0,X}_{(t_0+\eps)\wedge \tau_0}, i) \Big].
\label{proofvis2}
\end{split} \end{equation}
 Applying It\^o's formula and taking the expectation over the interval 
$[t_0, (t_0+\varepsilon) \wedge \tau_0]$, while noting that 
$\mathbb{E} \big[\Psi(Z^{z_0,X}_{t_0}, I_{t_0})\big] = \mathbb{E} \big[\varphi(Z^{z_0,X}_{t_0})\big]$, we obtain
 \begin{equation} \begin{split}
  \E \Big[\Psi(Z^{z_0,X}_{(t_0+\eps)\wedge \tau_0}, I_{(t_0+\eps)\wedge \tau_0})\Big] &=
  \varphi(t_0,x_0+\eta,y_0+\eta) \\
  &\quad + \E \Big[\int_{t_0}^{(t_0+\eps)\wedge \tau_0} \big(
  \frac{\partial \varphi}{\partial t}+\Lc
  \varphi\big)(u,x_0+\eta,Y_{u}^{t_0,y_0,X}) \ud u \Big]\\
  &\quad + \E \Big[\int_{t_0}^{(t_0+\eps)\wedge \tau_0} \sum_{j\neq i}Q_{ij}(u)\big(v_j-\varphi\big)(u, x_0+\eta,Y_{u}^{t_0,y_0,X}) \ud u \Big].
  \label{proofvis3}
\end{split} \end{equation}
In the last equality, the martingale expectation vanishes since 
$\varphi(Z^{z_0,X}_u)$ and $D\varphi(Z^{z_0,X}_u)$ are bounded, given that $Z^{z_0,X}_u \in \midbar B_{r_0}(z_0)$ for all $u \in [t_0, (t_0+\varepsilon) \wedge \tau_0]$. For all $u \in [t_0, (t_0+\varepsilon) \wedge \tau_0]$, since $Z^{z_0,X}_u$ remains in $\midbar B_{r_0}(z_0)$, we obtain that 
$v_i(Z^{z_0,X}_u) \leq \varphi(Z^{z_0,X}_u)$. Combining this with relations \eqref{proofvis2} and
\eqref{proofvis3} yields
\begin{equation} \begin{split}
  \E \Big[\int_{t_0}^{(t_0+\eps)\wedge \tau_0}
  \big(\frac{\partial \varphi}{\partial t}+\Lc
  \varphi\big)(u,x_0+&\eta,Y_{u^-}^{t_0,y_0,X})+\sum_{j\neq i}Q_{ij}(u)\big(v_j-\varphi\big)(u,x_0+\eta,Y_{u^-}^{t_0,y_0,X}) \ud u \Big] \\& \geq
  \varphi(z_0)-\varphi(t_0,x_0+\eta,y_0+\eta) - \Phi_i(y_0+\eta)+\Phi_i(y_0).  \label{proofvis4}
\end{split} \end{equation}
Assume that the agent decides to do nothing, meaning $\eta = 0$. We know from Theorem \ref{continuous_v} that $v_j$ is continuous on $\midbar \cS$, for all $j\in \I_m$. Additionally, $Q$ is continuous on $\R_+$. Dividing inequality \eqref{proofvis4} by $\eps$ and letting $\eps$ going to $0$ and recalling that $\varphi(z_0) = v_i(z_0)$, we get that
\begin{equation*}
 \big(\frac{\partial \varphi}{\partial t}+\Lc \varphi+\sum_{j\neq i}Q_{ij}(v_j-v_i)\big) (z_0)\geq 0.
\end{equation*}
Assume now that the agent executes a part $\eta > 0$ of the position at time $t_0$. By sending $\varepsilon$ to $0$, we get
$$
0\geq-\varphi(t_0,x_0+\eta,y_0+\eta)+\varphi(t_0,x_0,y_0)- \Phi_i(y_0+\eta)+\Phi_i(y_0).$$
It follows from equation \eqref{definitionPhi} that
\begin{equation*}
 \int_0^\eta \frac{\partial \varphi}{\partial x}(t_0,x_0+v,y_0+\eta)\ud v+\int_0^\eta\frac{\partial
\varphi}{\partial y}(t_0,x_0,y_0+v)\, \ud v+\int_{y_0}^{y_0+\eta}\psi_i(\xi)\,
\ud \xi\geq 0.
\end{equation*}
The function $\psi_i$ is left-continuous and non-decreasing on $\R_+$. Dividing by $\eta$ and letting $\eta
\rightarrow 0$, we obtain
\begin{equation*}
\big(\frac{\partial \varphi}{\partial x}+\frac{\partial \varphi}{\partial
y}+\psi_i\big)(z_0)\geq 0.
\end{equation*}
This proves the required subsolution property
\begin{equation*}
 \max\Big(-\big(\frac{\partial
  \varphi}{\partial t}+\mathcal{L}\varphi+\sum_{j\neq i}Q_{ij}(v_j-v_i)\big)(z_0) ,\ -\big(\frac{\partial
  \varphi}{\partial x}+\frac{\partial \varphi}{\partial y}+\psi_i\big)(z_0)\Big)\leq 0.
\end{equation*}
\end{proof}
\subsection{Viscosity Supersolution Property}
\begin{proof}[Proof of Theorem \ref{viscosity_super}] 
\label{proof_supersolution}
Suppose that $v$ does not satisfy the supersolution property. Then, there exists $i_0\in \I_m$, $z_0:=(t_0,x_0,y_0) \in\mathcal{S}$, a test function $\bar\varphi \in
C^{1,2}(\mathcal{S})$ such that $z_0$ achieves a local minimum of $v_{i_0}-\bar\varphi$ and $v_{i_0}(z_0) = \bar\varphi(z_0)$, and $\eta>0$ such that
\begin{equation*}
-\big(\frac{\partial \bar\varphi}{\partial t}+\mathcal{L} \bar\varphi+\sum_{j\neq {i_0}}(v_j-v_{i_0})Q_{i_0j}\big)(z_0) < -\eta,~~\text{and}~~
-\big(\frac{\partial \bar\varphi}{\partial x}+\frac{\partial \bar\varphi}{\partial
y}+\psi_{i_0}\big)(z_0)<-\eta.
\end{equation*}
Since $z_0$ achieves a local minimum of $v_{i_0}-\bar\varphi$, there exists $r_0>0$ such that 
$$v_{i_0}(z) \geq \bar\varphi(z), \quad \forall z \in\midbar B_{r_0}(z_0).
$$
By Theorem \ref{continuous_v}, the value functions $(v_j)_{j\in \I_m}$ are continuous on $\bar{\cS}$. Moreover, the test function $\bar\varphi$ belongs to $C^{1,2}(\mathcal{S})$, and each $\psi_{i_0}$ is non-decreasing and left-continuous. Consequently, there exists $r_1>0$ such that $t_0+r_1<T$ and for all $z:=(t,x,y)\in\midbar{B}_{r_1}(z_0)$, 
\begin{equation}
  \label{proofvis5} - \big(\frac{\partial \bar\varphi}{\partial t}+\Lc\bar\varphi+\sum_{j\neq i_0}(v_j-v_{i_0})Q_{i_0j}\big)(z) < -\eta,~~\text{and}~~
-\big(\frac{\partial \bar\varphi}{\partial x}+\frac{\partial \bar\varphi}{\partial
y}+\psi_{i_0}\big)(z) < -\eta,
\end{equation}
where
$\midbar B_{r_1}(z_0):= \{ z \in \cS : \|z - z_0\| \leq r_1 \}$. Define, for all $(j,z) \in \I_m\times \cS$,
\begin{equation*}
\Psi(z,j)=\left\{\begin{array}{ll}
\bar\varphi(z), & \text { if } j=i_0, \\
v_j(z), & \text { if } j \neq i_0.
\end{array} \right.\end{equation*}
Let $X$ and admissible strategy in $\cA_{t_0}(x_0)$ and $Z^{z,X}_u$ represent the tuple $\big(u, X_u, Y^{t,y,X}_u\big)$ for all $(u,z)\in[0,T]\times \cS$. Additionally, set the radius $\eps_0$ as $\eps_0:=r_0\wedge r_1$. Define $\tau_B$ as the first exit time of $Z^{z_0,X}$ from $\midbar{B}_{\eps_0}(z_0)$, i.e.,  
$$\tau_{B} = \inf \big\{u \geq t_0:
Z^{z,X}_u \notin \midbar{B}_{r_0}(z_0)\cap \midbar B_{r_1}(z_0)\big\}$$ for any admissible control $X\in\mathcal{A}_{t_0}(x_0)$. Let $\tau^I_1$ be the first jump time after time $t_0$ of the Markov chain $I$ such that $$\tau^I_1=\inf \big\{u \geq t_0: I_u \neq i_0\big\}.$$ Next, we introduce the $\cF$-stopping time $\tau_1$ such that $\tau_1:=\tau^I_1\wedge\tau_B$. Note that $\tau_B<T$, $t_0<\tau^I_1$ and that $I_{u} = i_0$, for all $u\in[0,\tau^I_1[$. Applying  It{\^o}'s formula between $t_0$ and
$\tau_1^-$, we obtain \begin{equation}
\begin{split}
\E\Big[\Psi(Z^{z_0,X}_{\tau_1^-},I_{\tau_1^-}) \Big] &=
\bar\varphi(z_0)
+ \E \Big[\int_0^{\tau_1^-}\big(\frac{\partial
  \bar\varphi}{\partial x}+\frac{\partial \bar\varphi}{\partial
  y}\big)(Z^{z_0,X}_{u^-})\, \ud X^c_u \Big] \\
    & \quad+ \E \Big[\int_{t_0}^{\tau_1^-} \big(\frac{\partial \bar\varphi}{\partial t} +\Lc
\bar\varphi+\sum_{j\neq i}(v_j-\bar\varphi)Q_{i_0j}\big) (Z^{z_0,X}_{u^-}) \ud u \Big]\\
  &\quad + \E \Big[ \sum_{t_0\leq u < \tau_1}
  \bar\varphi(u,X_u,Y_u^{t_0,y_0,X}) -\bar\varphi(u,X_{u^-},\check{Y}_{u^-}^{t_0,y_0,X})\Big]. \label{proofvis8}
  \end{split}
  \end{equation}  
In the last equality, the martingale expectation vanishes since 
$\varphi(Z^{z_0,X}_u)$ and $D\varphi(Z^{z_0,X}_u)$ are bounded, given that $Z^{z_0,X}_u \in \midbar B_{\eps_0}(z_0)$ for all $u \in [t_0,\tau_1[$. From \eqref{proofvis5} the fact that 
$\Delta X_u := X_u - X_{u^-} = Y_{u}^{t_0,y_0,X} - \check{Y}_{u^-}^{t_0,y_0,X}$ 
for all $t_0 \leq u < \tau_1$, we deduce that
\begin{equation}
\label{proofvis55}
\begin{split}
\bar\varphi(u,X_u,Y_u^{t_0,y_0,X}) -
\bar\varphi(u,X_{u^-},\check{Y}_{u^-}^{t_0,y_0,X}) & =  \int_0^{\Delta X_u} \big(\frac{\partial \bar\varphi}{\partial x}+\frac{\partial
\bar\varphi}{\partial y}\big)(u,X_{u^-}+v,\check{Y}_{u^-}^{t_0,y_0,X}+v)\, \ud v\\
& \geq \eta\Delta X_u-\int_0^{\Delta X_u}\psi_{i_0}(\check{Y}_{u^-}^{t_0,y_0,X}+\zeta)\,
\ud\zeta\\ & = \eta\Delta
X_u-\Phi_{i_0}(Y^{t_0,y_0,X}_u)+\Phi_{i_0}(\check{Y}_{u^-}^{t_0,y_0,X}).
    \end{split}
  \end{equation}
Note that $\E\Big[\Psi(Z^{z_0,X}_{\tau_1^-},I_{\tau_1^-}) \Big] = \E\Big[\bar\varphi(Z_{\tau_1^-}^{z_0,X}) \Big]$. Plugging \eqref{proofvis5} and \eqref{proofvis55} in equation
\eqref{proofvis8}, we obtain
\begin{equation*}
\begin{split}
\E\Big[\bar\varphi(Z_{\tau_1^-}^{z_0,X}) \Big]
& \geq  \bar\varphi(z_0)
  + \eta\E \Big[\tau_1-t_0 +  \int_{t_0}^{\tau_1^-}\ud X_u\Big] -\E\Big[\int_{t_0}^{\tau_1^-}\psi_{i_0}(\check{Y}_{u^-}^{t_0,y_0,X})\ud X^c_u\Big]\\ & \quad-\E \Big[\sum_{t_0\leq u < \tau_1}
  \Phi_{i_0}(Y_u^{t_0,y_0,X}) -
  \Phi_{i_0}(\check{Y}_{u^-}^{t_0,y_0,X}) \Big].
\end{split}
  \end{equation*}
Therefore, we obtain
\begin{equation}
\begin{split}
  v_{i_0}(z_0) = \bar\varphi(z_0) &\leq \E\Big[\int_{t_0}^{\tau_1^-} \psi_{i_0}(\check{Y}_{u^-}^{t_0,y_0,X}) \ud X^c_u
+\bar\varphi(Z_{\tau_1^-}^{z_0,X}) \Big]
 \\& \quad+\E\Big[ \sum_{t_0\leq u < \tau_1}
  \Phi_{i_0}(Y_u^{t_0,y_0,X}) -
  \Phi_{i_0}(\check{Y}_{u^-}^{t_0,y_0,X}) - \eta \big( \int_{t_0}^{\tau_1^-}\ud u + \int_{t_0}^{\tau_1^-}\ud X_u\big)\Big].
    \label{proofvis11}
 \end{split}
  \end{equation}
Define the set $$\cD^{\eps_0} := \Big\{\omega\in \Omega:\tau_B(\omega)<\tau^I_1(\omega)~\text{and}~\check{Y}_{\tau_1^-}^{t_0,y_0,X}(\omega)\in[y_0-\varepsilon_0,
y_0+\varepsilon_0] \Big\},$$ 
and the $\cF_{\tau_1}$-random variable,
$$Z^{(\gamma)}_{\tau_1} :=
\big(\tau_1,\check{Y}_{\tau_1^-}^{t_0,y_0,X}+\gamma\Delta
X_{\tau_1},X_{\tau_1^-}+\gamma\Delta
X_{\tau_1}\big).$$
We know that if $\tau_1=\tau_B$ and $\check{Y}_{\tau_1^-}^{t_0,y_0,X} \in [y_0-\varepsilon_0,y_0+\varepsilon_0]$, then $Z^{z_0,X}_{\tau_1}$ either lies on the boundary $\partial\bar{B}_{\eps_0}(z_0)$ or has exited $\bar{B}_{\eps_0}(z_0)$ at time $\tau_1$. Since $\check{Y}_{\tau_1^-}^{t_0,y_0,X}$ accounts for the jumps of $M$, only the jumps in the inventory process $X$ can cause $Z^{z_0,X}$ to cross the boundary at $\tau_1$. Consequently, there exists an $\cF_{\tau_1}$-measurable random variable $\gamma \in [0,1]$ such that
$$
\check{Y}_{\tau_1^-}^{t_0,y_0,X} +\gamma\Delta
X_{\tau_1}\in\big\{y_0-\varepsilon_0,y_0+\varepsilon_0\big\},~~\text{and}~~Z^{(\gamma)}_{\tau_1} \in\midbar{B}_{\eps_0}(z_0),$$
on the set $\cD^{\eps_0}$. It follows from the DPP (see Lemma \ref{prog_dyn}), that
$$
v_{i_0}({Z}^{(\gamma)}_{\tau_1}(\omega)) \leq v_{i_0}(Z^{z_0,X}_{\tau^-_1}(\omega))
+\Phi_{i_0}({Y}^{t_0,y_0,X}_{\tau_1}(\omega))-\Phi_{i_0}(\check{Y}^{t_0,y_0,X}_{\tau_1^-}(\omega)+\gamma \Delta
X_{\tau_1}(\omega)),\quad \forall \omega\in\cD^{\eps_0}.$$
Additionally, given that $Z^{(\gamma)}_{\tau_1}\in\midbar{B}_{\eps_0}(z_0)$, it follows that, for all $\omega\in\cD^{\epsilon_0}$,
\begin{equation*}
\resizebox{\textwidth}{!}{$
\begin{aligned}
\bar\varphi(Z^{z_0,X}_{\tau_1^-}(\omega)) & =
\bar\varphi(Z^{(\gamma)}_{\tau_1}(\omega))-\int_0^{\gamma(\omega)\Delta X_{\tau_1}(\omega)}\big(\frac{\partial
  \bar\varphi}{\partial x}+\frac{\partial \bar\varphi}{\partial
  y}\big)(t,X_{\tau_1^-}(\omega) + v ,\check{Y}_{\tau_1^-}^{t_0,y_0,X}(\omega)+v)\, \ud v\\
& \leq \bar\varphi(Z^{(\gamma)}_{\tau_1}(\omega))-\eta\gamma(\omega)\Delta
X_{\tau_1}(\omega)+\Phi_{i_0}(\check{Y}^{t_0,y_0,X}_{\tau_1^-}(\omega)+\gamma(\omega) \Delta
X_{\tau_1}(\omega))-\Phi_{i_0}(\check{Y}^{t_0,y_0,X}_{\tau_1^-}(\omega))\\
& \leq v_{i_0}(Z^{z_0,X}_{\tau_1^-}(\omega))-\eta\gamma(\omega)\Delta
X_{\tau_1}(\omega)+\Phi_{i_0}({Y}^{t_0,y_0,X}_{\tau_1}(\omega))-\Phi_{i_0}(\check{Y}^{t_0,y_0,X}_{\tau_1^-}(\omega)).
\end{aligned}
$}
\end{equation*}
Plugging the last inequality in \eqref{proofvis11}, we obtain
\begin{equation*} \begin{split}
v_{i_0}(z_0) & \leq \E \Big[ \sum_{t_0\leq u \leq \tau_1}
  \Phi_{i_0}(Y_u^{t_0,y_0,X}) -
  \Phi_{i_0}(Y_{u^-}^{t_0,y_0,X}) +\int_{t_0}^{\tau_1 } \psi_{i_0}(\check{Y}_{u^-}^{t_0,y_0,X})
\ud X^c_u +
   v_{i_0}(Z^{z_0,X}_{\tau_1}) \Big]  \\
& \quad - \eta \E \Big[ \int_{t_0}^{\tau_1 ^-}\ud u + \int_{t_0}^{\tau_1
^-}\ud X_u+\gamma\Delta X_{\tau_1}\1_{\{\check{Y}_{\tau_1^-}^{t_0,y_0,X}\in[ 
y_0-\varepsilon_0, y_0+\varepsilon_0] \}}\1_{\{\tau_B<\tau^I_1\}}\Big].
 \end{split} \end{equation*}
From the DPP, it follows that
$$
0 \leq -\inf_{X\in\mathcal{A}_{t_0}(x_0)}\E \Big[\tau_1 -t_0 + X_{\tau_1
^-} - x_0+\gamma\Delta
X_{\tau_1}\1_{\{\check{Y}_{\tau_1^-}^{t_0,y_0,X}\in[  y_0-\varepsilon_0,
y_0+\varepsilon_0] \}}\1_{\{\tau_B<\tau^I_1\}}\Big].$$
Using the sequential characterization of the supremum, we can find a sequence of admissible controls $X^n = (X^n)_{n\geq 0}$ such that
$$\E \Big[ \tau^n_1 -t_0+ X^n_{\tau^{n^-}_1}-x_0+\gamma\Delta
X^n_{\tau^n_1}\1_{\{\check{Y}_{{\tau_1^n}^-}^{t_0,y_0,X^n}\in[ 
y_0-\varepsilon, y_0+\varepsilon] \}}\1_{\{\tau^n_B<\tau^I_1\}}\Big]\leq \frac{1}{n},$$
where $\tau^n_B:=\inf \big\{t \geq t_0:
(t,X^n_t,Y_t^{t_0,y_0,X^n}) \notin \midbar{B}_{\eps_0}(z_0) \big\}$ and $\tau_1^n:=\tau_B^n\wedge \tau^I_1$. As $\tau^n_1-t_0\geq
0$ and $X^n_{\tau^{n^-}_1}-x_0\geq 0$, we have that
$$\max\Big\{\E\big[ \tau^n_1 -t_0\big],\ \E \big[X^n_{\tau^{n^-}_1}-x_0\big],\ \E \Big[\gamma\Delta
X^n_{\tau^n_1}\1_{\{\check{Y}_{{\tau_1^n}^-}^{t_0,y_0,X^n}\in[ 
y_0-\varepsilon, y_0+\varepsilon] \}}\1_{\{\tau^n_B<\tau^I_1\}}\Big]\Big\}\leq \frac{1}{n}.$$
Moreover, $\gamma\Delta X^n_{\tau^n_1} = y_0 + \eps_0 - \check{Y}_{{\tau_1^n}^-}^{t_0,y_0,X^n} \geq y_0 + \eps_0-y_0 = \eps_0$. Therefore,
\begin{equation}
\label{proofsuper1}
\begin{split}
\frac{1}{n}&\geq\eps_0\P\big(y_0-\varepsilon_0\leq \check{Y}_{{\tau_1^n}^-}^{t_0,y_0,X^n}\leq y_0+\eps_0\ ,\ \tau^n_B<\tau^I_1\big)\\&\geq \eps_0\Big(\P\big(y_0-\varepsilon_0\leq \check{Y}_{{\tau_1^n}^-}^{t_0,y_0,X^n}\leq y_0+\eps_0\big) +\P\big(\tau^n_B<\tau^I_1\big)-1\Big).
\end{split}
\end{equation}
On the other hand, we have
$$\lim_{n\to +\infty}\P(\tau^{n}_1=t_0) = 1,~~\textrm{and}~~\lim_{n\to
+\infty}\P(X^n_{\tau^{n^-}_1}=x_0) = 1.$$
Therefore, \begin{equation*}
\resizebox{\textwidth}{!}{$
\begin{aligned}\lim_{n\to +\infty}\P(\tau^{n}_B=t_0) = 1, ~~\lim_{n\to +\infty}\P\big(\check{Y}_{{\tau_1^n}^-}^{t_0,y_0,X^n}=y_0\big)=1~~ \text{and}~~
\lim_{n\to +\infty}\P\big(y_0-\varepsilon_0\leq\check{Y}_{{\tau_1^n}^-}^{t_0,y_0,X^n}\leq y_0+\eps_0\big)=1.\end{aligned}
$}
\end{equation*} Since $\lim_{n\to +\infty}\P\big(\tau_B^n<\tau^I_1\big) = \P\big(t_0<\tau^I_1\big) = \exp \big(-\int_{t_0}^{t_0} Q_{ii}(u) \, \ud u \big) = 1,$ we get from inequality \eqref{proofsuper1} when $n$ goes to $+\infty$ that
\begin{equation*}
\begin{split}
0 \geq \eps_0\Big(1+\lim_{n\rightarrow\+\infty}\P\big(\tau_B^n<\tau^I_1\big)-1\Big)=\eps_0>0.
\end{split}
\end{equation*}
This leads to a contradiction. Therefore, we obtain the required
viscosity supersolution property.
\end{proof}
\section{Multi-Asset Extension}
\label{sec:multi_asset_extension}

In this section, we briefly describe a multi-asset extension of the framework introduced above. We do not pursue here the full analytical study of this extension, but we present the corresponding state dynamics, optimization problem, and the multidimensional quasi-variational inequalities extending \eqref{hjb}. This formulation is particularly relevant for high-dimensional numerical methods, including recent reinforcement-learning approaches to singular stochastic control problems in finance.

Let $d\geq 1$ be the number of assets and let
$$
\midbar X=(\midbar X^1,\dots,\midbar X^d)\in \RR_+^d
$$
denote the target inventory vector. For each asset $k\in\{1,\dots,d\}$, let $A^k=(A_t^k)_{t\geq 0}$ be the unaffected price process, assumed to be a continuous $(\PP,\cF)$-martingale. We write
$$
A_t=(A_t^1,\dots,A_t^d)\in\RR^d, \quad \forall t\geq 0.
$$
The liquidity regime process $I=(I_t)_{t\geq 0}$ is the same finite-state Markov chain with values in $\I_m$ and generator $Q$ as in Section \ref{sec:model}.

For each regime $i\in\I_m$, we now consider a multidimensional impact map
$$
\Psi_i=(\Psi_i^1,\dots,\Psi_i^d):\RR_+^d\to \RR_+^d,
$$
where $\Psi_i^k(y)$ represents the price deviation of asset $k$ when the vector of volume effects is $y=(y_1,\dots,y_d)\in\RR_+^d$. This allows for cross-impact effects, since the deviation of asset $k$ may depend on the liquidity state of all assets. The impacted ask-price vector is then given by
$$
P_t=A_t+D_t,
~~
\text{and}~~
D_t=\Psi_{I_t}(Y_t),
$$
where $t\geq 0$ and $Y=(Y^1,\dots,Y^d)$ denotes the multidimensional volume-effect process.

Let $X=(X^1,\dots,X^d)$ be the cumulative purchase process. For $(t,x)\in[0,T]\times\Pi_{k=1}^d [0,\midbar X^k]$, we define $\cA_t(x)$ as the set of $\RR_+^d$-valued càdlàg, $\cF$-adapted processes $X=(X_u)_{t\leq u\leq T}$ such that each component $X^k$ is non-decreasing and satisfies
$$
X_{t^-}=x,~~
\text{and}~~
X_T=\midbar X.
$$
We denote by $X^c$ the continuous part of $X$ and by $\Delta X_u=X_u-X_{u^-}$ its jump at time $u$. 

The controlled multidimensional volume-effect process is defined by
\begin{equation}
\label{multi_controlled_volume_effect}
\left\{
\begin{array}{ll}
\mathrm{d}Y_u=\mathrm{d}X_u-h(Y_{u^-})\mathrm{d}u+\Sigma(Y_{u^-})\mathrm{d}W_u+\displaystyle\int_E q(Y_{u^-},z)M(\mathrm{d}u,\mathrm{d}z),\\
Y_{t^-}=y\in\RR_+^d,
\end{array}
\right.
\end{equation}
where
$$
h:\RR_+^d\to\RR^d,\quad
\Sigma:\RR_+^d\to\RR^{d\times r},~~
\text{and}~~
q:\RR_+^d\times E\to\RR^d,
$$
$W$ is an $r$-dimensional Brownian motion, and $M$ is a Poisson random measure on $\RR_+\times E$ with compensator $\lambda_t\nu(\mathrm{d}z)\mathrm{d}t$. As in the one-dimensional case, we denote by
$$
\check Y_{u^-}:=Y_{u^-}+\Delta_M Y_u
$$
the post-jump value of $Y$ due to the jumps of $M$ alone. For each regime $i\in\I_m$ and each asset $k\in\{1,\dots,d\}$, let $F_i^k$ denote the corresponding LOB shape and let $\psi_i^k$ be its generalized inverse (see \eqref{def_psi}). For $y=(y_1,\dots,y_d)\in\RR_+^d$, we define the multidimensional impact-cost function $\Phi_i:\RR_+^d\to\RR_+$, which extends the one-dimensional quantity introduced in \eqref{definitionPhi},
$$
\Phi_i(y):=\sum_{k=1}^d \Phi_i^k(y_k),
~~
\text{and}~~
\Phi_i^k(y_k):=\int_0^{\psi_i^k(y_k)} \xi\,\mathrm dF_i^k(\xi)
=\int_0^{y_k}\psi_i^k(\zeta)\,\mathrm d\zeta.
$$
Accordingly, if the current volume-effect state is $y\in\RR_+^d$ and a block trade of size $a\in\RR_+^d$ is submitted, then the resulting impact cost in regime $i$ is
$$
\Phi_i(y+a)-\Phi_i(y).
$$
Accordingly, for $(i,t,x,y)\in\I_m\times[0,T]\times[0,\midbar X]\times\RR_+^d$, we define the cost functional
\begin{equation}
\label{multi_cost_functional}
\begin{split}
C_i(t,x,y,X):=\EE\Bigg[
\int_t^T \sum_{k=1}^d \Psi_{I_u}^k(\check Y_{u^-}^{t,y,X})\,\mathrm{d}(X_u^k)^c
+\sum_{t\leq u\leq T}\Big(
\Phi_{I_u}(Y_u^{t,y,X})-\Phi_{I_u}(\check Y_{u^-}^{t,y,X})
\Big)
\Bigg].
\end{split}
\end{equation}
The associated value function is then
\begin{equation}
\label{multi_value_function}
v_i(t,x,y):=\inf_{X\in\cA_t(x)} C_i(t,x,y,X).
\end{equation}
The terminal and boundary conditions take the form
\begin{equation}
\label{multi_boundary_conditions}
v_i(T,x,y)=\Phi_i(y+\midbar X-x)-\Phi_i(y),
~~
\text{and}~~
v_i(t,\midbar X,y)=0,
\end{equation}
where all vector operations are understood componentwise.

Let $\varphi=\varphi(t,x,y)$ be a smooth test function defined on $[0,T]\times\RR_+^d\times\RR_+^d $. The infinitesimal generator of the uncontrolled multidimensional volume-effect process is
\begin{equation}
\label{multi_generator}
\begin{split}
\cL \varphi (t,x,y)
:=
-h(y)\cdot D_y\varphi(t,x,y)
+\frac{1}{2}\mathrm{Tr}\big(\Sigma(y)\Sigma(y)^T D^2_{yy}\varphi(t,x,y)\big)\\
+\lambda_t\int_E\Big(\varphi(t,x,y+q(y,z))-\varphi(t,x,y)\Big)\nu(\mathrm{d}z).
\end{split}
\end{equation}
The multidimensional HJBQVI extending \eqref{hjb} is therefore
\begin{equation}
\label{multi_hjb}
\max\Big(
-\frac{\partial v_i}{\partial t}
-\cL v_i
-\sum_{j\neq i}(v_j-v_i)Q_{ij},
\,
\max_{1\leq k\leq d}
\big\{
-\frac{\partial v_i}{\partial x_k}
-\frac{\partial v_i}{\partial y_k}
-\Psi_i^k
\big\}
\Big)=0,
\end{equation}
for all $i\in\I_m$ and  on $[0,T)\times \prod_{k=1}^d [0,\midbar X^k) \times \RR_+^d.$
\\
As in the one-dimensional case, the first term in \eqref{multi_hjb} corresponds to the continuation region, while the gradient constraints characterize the intervention region. More precisely, an infinitesimal purchase in asset $k$ becomes optimal when
$$
-\frac{\partial v_i}{\partial x_k}
-\frac{\partial v_i}{\partial y_k}
-\Psi_i^k=0.
$$
Hence, in the multi-asset setting, the exercise region is determined by the set of points where at least one of these constraints is binding.

\end{document}